\def\hmath$#1${\texorpdfstring{{\rmfamily\textit{#1}}}{#1}}
\newtheorem{theorem}{Theorem}
\newtheorem{lemma}{Lemma}
\newtheorem{corollary}{Corollary}
\newcommand{\RL}{{\mathbb R}}
\newcommand{\IND}{{\mathbb I}}
\def\ba{\begin{align}}
\def\ea{\end{align}}
\def\ban{\begin{align*}}
\def\ean{\end{align*}}
\def\be{\begin{eqnarray}}
\def\ee{\end{eqnarray}}
\def\ben{\begin{eqnarray*}}
\def\een{\end{eqnarray*}}
\def\bqq{\begin{equation}}
\def\eqq{\end{equation}}
\def\bqqn{\begin{equation*}}
\def\eqqn{\end{equation*}}
\def\elabel#1{\label{e:#1}}
\def\sq{$\Box$}
\def\qed{\ifmmode\sq\else{\unskip\nobreak\hfil
\penalty50\hskip1em\null\nobreak\hfil\sq
\parfillskip=0pt\finalhyphendemerits=0\endgraf}\fi\par\medbreak}
\newsavebox{\junk}
\savebox{\junk}[1.6mm]{\hbox{$|\!|\!|$}}
\def\til={{\widetilde =}}
\def\clM{{\cal M}}
\def\clS{{\cal S}}
\def\clT{{\cal T}}
 \def\eq#1/{(\ref{#1})}
\def\eq#1/{(\ref{e:#1})}
\newcommand{\beqn}[1]{\notes{#1}%
\begin{eqnarray} \elabel{#1}}
\newcommand{\eeqn}{\end{eqnarray} } 
\newcommand{\beq}[1]{\notes{#1}%
\begin{equation}\elabel{#1}}
\newcommand{\eeq}{\end{equation}} 
\def\bdes{\begin{description}}
\def\edes{\end{description}}
\def\notes#1{}
\definecolor{mag}{rgb}{0.7,0,0.3}
\definecolor{dgreen}{rgb}{0.1,0.5,0.1}
\definecolor{dred}{rgb}{.8,0,0}
\definecolor{gray}{rgb}{.8,.8,.8}
\definecolor{brown}{rgb}{0.6451,0.3706,0.1745}
\begin{document}

\title{ \vspace*{-0.2in}
The Bayesian Context Trees State Space Model \\
for time series modelling and forecasting}

 \author
 {
 	 Ioannis Papageorgiou
     \thanks{ Department of Engineering,
         University of Cambridge,
        Trumpington Street, Cambridge CB2 1PZ, UK.
                 Email: \texttt{\href{mailto:ip307@cam.ac.uk}%
			 {ip307@cam.ac.uk}}.
 	 }
  \and
     Ioannis Kontoyiannis
     \thanks{Statistical Laboratory, Centre for Mathematical Sciences, University of Cambridge, Wilberforce Road, Cambridge CB3 0WB, UK.
                 Email: \texttt{\href{mailto: yiannis@maths.cam.ac.uk}%
			 { yiannis@maths.cam.ac.uk}}.
         }
 }

\date{}
\maketitle



\vspace*{-0.25in}

\vspace*{-0.3 cm}

\begin{center}
    \large \textbf{Abstract}
\end{center}

A hierarchical Bayesian framework is introduced for developing tree-based mixture models for time series, partly motivated by applications in finance and forecasting. At the top level, meaningful discrete states are identified as appropriately quantised values of some of the most recent samples. At the bottom level, a different, arbitrary `base’ model is associated with each state. This defines a very general framework that can be used in conjunction with any existing model class to build flexible and interpretable mixture models. We call this the Bayesian Context Trees State Space Model, or the BCT-X framework. \textcolor{black}{Appropriate algorithmic tools are described, which allow for  effective and efficient Bayesian inference and learning; these algorithms can be updated sequentially, facilitating  online forecasting}. The utility of the general framework is illustrated in the particular instances when AR or ARCH models are used as base models. The latter results in a mixture model that offers a powerful way of modelling the well-known volatility asymmetries in financial data, revealing a novel, important feature of stock market index data, in the form of an enhanced leverage effect. In forecasting, the BCT-X methods are found to outperform several state-of-the-art techniques, both in terms of accuracy and computational~requirements.

\medskip

\noindent \textbf{Keywords.}  Time series, Interpretable mixture models, Bayesian inference,  Context-tree models, Forecasting, AR models, Conditional heteroscedastic models, Volatility asymmetries.

\vspace*{-0.2 cm}

\section{Introduction}
Time series modelling, inference and forecasting 
are well-studied tasks in statistics and machine learning (ML), 
with critical applications throughout finance,
the sciences, and engineering. 
A wide range of approaches exist, from classical statistical 
methods~\cite{box:book,durbin:book2},
to modern ML techniques: notably  
neural networks~\cite{benidis:20,alexandrov:20,zhang:98},
Gaussian processes (GPs)~\cite{rasmussen:book,roberts:13,frigola:15}, 
and 
matrix factorisations~\cite{yu:16,faloutsos:18}.
{
Recent reviews that provide a thorough comparison between ML and classical methods specifically in the time series setting can be found in~\cite{makridakis:18b,makridakis:18,ahmed:10}.
Motivated in part by important applications
in financial time series and the inherent limitations of neural network models -- in that the they lack
interpretability and have large training-data requirements~\cite{makridakis:18b,benidis:20} -- 
in this work
we propose a new, general 
class of flexible hierarchical Bayesian models, which are both 
naturally {\em interpretable} and suitable for applications with 
limited training data. For these models, we provide computationally 
efficient -- linear complexity~-- algorithms for \textcolor{black}{effective} inference,
forecasting, and posterior exploration. 
}


Roughly speaking, time series models can be broadly categorised in two classes, 
depending on the absence or presence of an underlying hidden state process. 
The first class includes the family of linear autoregressive (AR) 
and autoregressive integrated moving average
(ARIMA) models~\cite{box:book,tsay:05,tong:book},
which directly model the stochastic mapping from previous 
to current observations, as well as
numerous AR extensions and generalisations,
including nonlinear AR models that employ
GPs~\cite{roberts:13,girard:02,turner:12,murray:01} 
and neural networks to model nonlinear dynamics. Feed-forward neural 
networks,
including the Neural Network AR (NNAR) model,
have been used since the 1990s in this setting~\cite{zhang:98},
and recurrent neural networks (RNN) have also been 
employed~\cite{graves:13,salinas:20,oreshkin:19}, with the DeepAR 
model of~\cite{salinas:20} that uses Long Short-Term Memory (LSTM) 
cells~\cite{hochreiter:97} being one of the most successful approaches.

The second class consists of State Space Models 
(SSM)~\cite{durbin:book2,hyndman:book}, which can describe more 
complex dynamics by combining an underlying hidden state process 
(`state-transition') with an observation model (`emission'). 
Classical approaches here include hidden Markov 
models (HMMs)~\cite{cappe:book06},
the linear Gaussian SSM~\cite{kalman:60},
exponential smoothing 
methods~\cite{hyndman:book},
\textcolor{black}{and dynamic factor models~\cite{stock2011dynamic,forni2000generalized,stock2002macroeconomic}}.
Again, various extensions have been proposed by making the transition 
or emission equations nonlinear, using  
GPs~\cite{frigola:13,frigola:14,eleftheriadis:17,turner:12} or
neural 
networks~\cite{krishnan:15,krishnan:17,zheng:17,karl:17,%
rangapuram:18}. Inference in these 
settings is often a challenging task, requiring sophisticated and 
computationally intense approximate techniques, including 
Particle MCMC~\cite{doucet:book,andrieu:10} and variational 
methods like stochastic gradient variational 
Bayes~(SGVB)~\cite{kingma:13,rezende:14}. 

\medskip

\noindent
{\bf The BCT-X framework.}
In this work, we introduce a novel Bayesian modelling approach 
that combines important features of both the above classes.
First, meaningful discrete states are identified,
but these are {\em observable} rather than hidden:
Each {\em state} is a short discrete
sequence given by the appropriately 
quantised values of some of the most recent samples.
This collection of states is described by a discrete
{\em context-tree model}. 
Then a different time series model -- a {\em base model} --
is associated with each possible current state,
generating the next sample.

In technical terms, we define a Bayesian hierarchical model.
The top level consists of a set of relevant states,
naturally represented as a discrete 
context-tree model, 
which admits a natural interpretation 
and captures important aspects of the 
structure present in the data.
And the bottom level associates an arbitrary time series 
model to each state.
 \textcolor{black}{Equivalently, at the top level we can think of the context-tree model as defining a partition of the state space: Every discrete state actually corresponds to a different region of this partition, for which at the bottom level we associate a different base model (that is to  be used in that region).  }
We call the resulting model class 
the {\em Bayesian Context Trees State Space Model},
or BCT-X. The `BCT' part refers to the 
discrete context-tree models, and the `X' 
indicates that any existing time series family
of models can be employed as base models 
for the different states.

\textcolor{black}{Bayesian Context Trees (BCT) were originally
introduced~\cite{BCT-JRSSB:22} as a Bayesian framework for modelling variable-memory Markov chains. So far, the BCT framework has only been used in the restricted setting of {\em{discrete-valued}} time series -- which of course 
considerably limits its practical applicability.
The main conceptual novelty of this work is that, 
although context trees are naturally suited for discrete data 
(and have only been used in this setting), 
we show that they can also be utilised in a 
very effective way for {\em{real-valued}}
time series.
Specifically, by introducing an appropriate quantiser, context-tree models can be used to define discrete states (or equivalently, partitions of the state space) as explained above, and hence provide a general way of building flexible and interpretable mixture models, by associating a different,
arbitrary base-model to each state/state-space region.
}

Although at times we refer to BCT-X as a `model', 
it is in fact a general framework for building Bayesian 
mixture models for time series, which can be used in conjunction 
with any existing model class. The resulting model family is 
rich, 
flexible, and much more general than the class one starts with. 
For example, using any of the standard linear families (like 
AR or ARIMA)
leads to much more general mixture models that
can capture highly nonlinear trends and 
are also easily interpretable. Moreover,
this type of observable state 
process 
facilitates effective Bayesian inference.
This is achieved by exploiting the structure of context-tree models,
in a similar spirit to the  methods developed for 
discrete-valued time series in~\cite{BCT-JRSSB:22,ctw-isit:21}.

\medskip

\noindent
{\bf Algorithmic tools for inference.}
\textcolor{black}{
A family of important algorithms are 
introduced in conjunction with the BCT-X framework, 
by adapting the algorithms described in~\cite{BCT-JRSSB:22} for the discrete-valued setting 
to the much more general setting considered here. 
First, the `Generalised Context-Tree Weighting' algorithm (GCTW) computes
the \textit{evidence}~\cite{mackay:92} 
exactly, with all models and parameters integrated out. 
Second, the `Generalised Bayesian Context Trees' (GBCT) and $k$-GBCT 
algorithms identify
the~$k$ {\em a~posteriori} most likely (MAP) context-tree models,
along with their  posterior probabilities.
Lastly, a direct Monte Carlo procedure is developed
for further exploring the~posterior.
}

Importantly, using this Bayesian approach, the set of relevant states --
namely, the context tree model --
is identified directly from the data, and does not need to be specified
{\em a priori}. This 
avoids common overfitting problems that
lead to lack of interpretability 
and poor out-of-sample performance.
Moreover, the above algorithms
have only linear complexity and allow 
for sequential updates. These properties
are ideally 
suited for online forecasting
and provide an important practical advantage 
compared to standard ML-based time series approaches. 

The BCT-X modelling framework along with these algorithmic 
tools provide a powerful Bayesian framework for
modelling and for \textcolor{black}{effective} -- and computationally
efficient -- Bayesian inference.
\textcolor{black}{The 
application of the general framework is illustrated
by introducing two particular model classes that 
are found to be useful in practical applications, described next.}

\medskip

\noindent
{\bf BCT-AR models.}
First, we examine the case where AR models are used as 
base models for BCT-X, 
with a different AR model associated to each state. 
We refer to the resulting model class 
as the {\em Bayesian Context Trees Autoregressive} 
(BCT-AR)~model. This is shown to be a flexible, nonlinear 
AR mixture model that 
generalises popular AR mixtures,
including the threshold AR~(TAR) 
models~\cite{tong:11,tong:80,tong:book} 
and the mixture AR~(MAR) models of~\cite{wong:00}. 
The BCT-AR model is found to outperform state-of-the-art methods 
in experiments on both simulated data and real-world data
from standard applications 
of nonlinear time series in economics and~finance, both in terms 
of forecasting accuracy and computational requirements. 

\medskip

\noindent
{\bf BCT-ARCH models.}
Second, we employ autoregressive conditional heteroscedastic 
(ARCH) models as base models. This results in 
yet another flexible 
mixture model class, BCT-ARCH, that provides a systematic and powerful way 
of modelling the well-known asymmetric response in volatility due to 
positive and negative shocks, which is an important feature of financial 
time series~\cite{tsay:05}. 
\textcolor{black}{In fact, modelling volatility asymmetries with the BCT-ARCH model serves as an important motivating example for illustrating the purpose and the practical utility of the general BCT-X methodology. 
Specifically, in contrast with traditional modelling of these asymmetries, 
the BCT-ARCH model identifies that it is not only the sign of the most recent
value-change that is relevant in the volatility response, but 
that the exact pattern of recent ‘ups’ and ‘downs’ is also important; 
this is perfectly captured by a collection
of discrete states in the BCT-ARCH model.} 
As a result, the BCT-ARCH model is found to outperform previous approaches 
that were developed to describe this effect,
and is able to identify this interesting and newly observed
structure in the data, in the form of an {\em enhanced leverage effect}.

\smallskip

Finally, we mention that a number of earlier approaches,
e.g.~\cite{alvarez:10,alvisi:07,berndt:94,fu:07,hu:14,liu:11,%
ouyang:10,hero:20}, have
employed discrete patterns in the analysis of~real-valued 
time series.
These works illustrate the fact that useful and~meaningful information
can indeed be extracted from discrete contexts. However,
in most cases the methods developed have been 
either application-
or task-specific, and often need to resort to {\em ad hoc} considerations
for inference. In contrast, in this work discrete states are used 
in a natural manner, by defining
a~hierarchical Bayesian modelling structure upon which principled Bayesian inference is~performed.

\vspace*{-0.1 cm}

\newpage

\section{BCT-X: The Bayesian Context Trees State Space Model} \label{bct}

\begin{wrapfigure}{r}{0.33\linewidth}
\vspace*{-0.28 cm}
  \begin{center}
\vspace*{-0.6 cm}
    \includegraphics[width= 0.73 \linewidth]{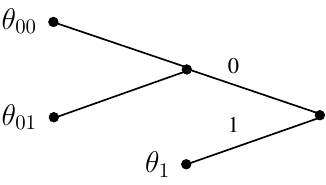}
  \end{center}
\vspace*{-0.5 cm}
\caption{\color{black} Example of a binary context-tree model $T$
 used for defining the set of states of BCT-X.}
\label{tree}
\vspace*{-0.2 cm}
\end{wrapfigure}
Before outlining the general construction of the
BCT-X framework for real-valued time series $(x_n)$, 
we briefly illustrate its structure through a particular 
example of the BCT-AR model with context depth $D=2$
and AR order $p$. 
Given the past values $(\ldots,x_{n-2},x_{n-1})$
and the context-tree model $T$ shown in Figure~\ref{tree},
the distribution of $x_n$ is determined 
as follows.

First, the {\em context} of $x_n$ is defined 
as the binary string given by $t:=(\IND_{\{x_{n-1}\geq 0\}},\IND_{\{x_{n-2}\geq 0\}})\in\{0,1\}^2$
of length $D=2$.
Then the corresponding {\em state} at time $n$ 
is the unique suffix $s$ of $t$ 
that appears as a leaf of the model~$T$.
For example, the context $t=(0,0)$ corresponds
to the state $s=00$, while the context
$t=(1,1)$ corresponds to the state $s=1$.
Finally, the distribution of $x_n$ 
given $(x_{n-p},\ldots,x_{n-1})$ and $s$ is determined
by the AR($p$) model with parameters~$\theta_s$.

Although at first glance this may seem quite
similar to the constructions
of the mixture autoregressive (MAR)~\cite{wong:00}
or threshold autoregressive (TAR)~\cite{tong:80} models, 
the BCT-X class is more general and critically different. First,
the set of relevant states is not defined 
by the modeller but it is determined by data:
A prior distribution is placed
on {\em all possible models $T$} of maximum
depth~$D$, and Bayesian inference is performed by considering the resulting posterior. Second, any model class can be used
for the {\em base models} in place of the AR models 
in this~example.

\subsection{Discrete contexts and states} 
\label{21}

Let $D\geq 0$ be a fixed, maximum context depth,
and let $x$ denote the sequence of observations, 
consisting of the time series $x_1^n=(x_1,x_2,\ldots,x_n)$
together with the initial context $x_{-D+1}^0=(x_{-D+1},\ldots,x_0)$,
with all $x_i\in\RL$. Throughout, we write 
$x_i^j=(x_i,x_{i+1},\ldots,x_j)$, for $i<j$.

The first step in the development of the BCT-X framework is the
construction of discrete, observable states. To that end,
we consider piecewise constant quantisers 
from $\mathbb{R}$ to a finite alphabet 
$A = \{0,1,\dots ,  m-1 \}$, of the form, 
\begin{equation}
\hspace*{-0.05 cm } Q(x) \!=\! \left\{
\begin{array}{ll}
\hspace*{-0.05 cm } 0, \ \;\; \;\; \;\;\; x < c_1, \\
\hspace*{-0.05 cm }  i , 
	\ \;\; \;\; \;\;\;\; c_{i} \leq x < c _{i+1} ,  \   1\leq i\leq m-2 ,\\
\hspace*{-0.05 cm } m-1, \   x \geq c_{m-1}, 
\end{array}
\right.  
\label{quant}
\end{equation}
where the alphabet size $m$, the thresholds 
$\{c_i\}$,
and the resulting quantiser $Q$ are 
considered to be fixed for the rest of this section. 
\textcolor{black}{
A core part of our general methodology is the development of
a systematic and computationally
efficient way to infer the thresholds $\{c_i\}$ from data,
as described in Section~\ref{hyp}.
The quantiser function $Q$ plays a key
role in the formulation of the BCT-X framework, 
as its role is crucial in describing the relevant states and state-space~partitions.}

 Given a maximum context depth $D\geq 0$, we consider
the class $\clT(D)$ of all proper $m$-ary trees of depth
no more than $D$, where
a tree $T$ is {\em proper} if any node in $T$ that is not a leaf has 
exactly $m$ children.
The elements of $T\in\clT(D)$
are the {\em context-tree models}:
Given $T\in\clT(D)$,
the sequence
$t=(Q(x_{i-1}),\ldots,Q(x_{i-D}))$
is the {\em context} at time $i\geq 1$.
Since $T$ is proper,
for any context $t$
there is a unique suffix
$s$ of $t$ that is a leaf of $T$.
This $s$ is the {\em state} at time $i$,
and it plays the role of a discrete feature vector 
that can be used to identify useful structure 
in the data. The leaves of $T$ define the set $\clS$ of discrete 
{\em states} in our hierarchical model. 
For example, for the tree of Figure~\ref{tree} and a binary quantiser with threshold $c=0$, 
we get $Q(x)=\IND_{\{x\geq 0 \}}$ and~$\mathcal {S} = \{ 1, 01, 00\}$.

\subsection{Base model class}

To complete the specification of the BCT-X framework, we associate 
a different time series model~$\mathcal {M}_s$ to each state $s$,
i.e., to each leaf $s$ of 
the context-tree model~$T$, giving a different conditional distribution for
the next sample.
We refer to the class of all such models $\clM _s$ as the {\em base 
model class}.
At time $i$, given the current state $s$ determined by
the context $t=(Q(x_{i-1}),\ldots,Q(x_{i-D}))$, the distribution 
of $x_i$ is given by the model $\mathcal {M}_s$ assigned to $s$.
Although general non-parametric models could also be used, 
for the rest of this paper we consider parametric models 
with parameters $\theta_s$ for each state $s$.
Altogether, a specific instance of the BCT-X framework consists of an
$m$-ary quantiser $Q$, a context-tree model $T\in\clT(D)$ that defines the set 
of discrete states $\clS$, a collection $\theta=\{\theta_s\}$
of parameter vectors~$\theta_s$ 
at the states $s\in\clS$, and a corresponding collection
$\{\clM_s\}=\{\clM(\theta_s)\}$ of base models.

Let $x$ denote a time series $x_1^n$ along with its
initial context $x_{-D+1}^0$.
Identifying $T$ with the collection of its leaves ${\cal S}$,
the likelihood induced by the resulting BCT-X model is,
\begin{align}\label{lik}
\begin{split}
 p(x|\theta,T) := p(x_1^n | T, \theta, x_{-D+1}^0) = \prod _{i=1} ^ n p(x_i| T, \theta, x_{-D+1}^{i-1} ) 
= \prod _{s \in T}\  \prod _ {i \in B_s} p(x_i| T, \theta _s , x_{-D+1}^{i-1} )  , 
\end{split} 
\end{align} 
where  $B_s$ is the set of indices $i\in\{1,2,\ldots ,n\}$ such that 
the state at time $i$ is $s$.

\subsection{Bayesian modelling} \label{23}

For the top level of the BCT-X framework, we consider 
context-tree models $T$ in the class
$\mathcal T (D)$, and employ
the Bayesian Context Trees (BCT) prior of~\cite{BCT-JRSSB:22}
on $\clT(D)$,  
\begin{equation}\label{prior}
\pi(T)=\pi_D(T;\beta)=\alpha^{|T|-1}\beta^{|T|-L_D(T)},\qquad T\in\clT(D), 
\end{equation}
where $\beta\in(0,1)$ is a hyperparameter, $\alpha =(1-\beta)^{1/(m-1)}$,
$|T|$ is the number of leaves of $T$,
and $L_D(T)$ is the number of leaves of
$T$ at depth $D$.
This prior penalises larger models
by an exponential amount to avoid overfitting; 
see~\cite{BCT-JRSSB:22} for an extensive
discussion of the properties of $\pi(T)$
and guidelines regarding the choice of $\beta$.
Given a context-tree model $T \in \mathcal T (D)$, 
the parameter prior on $\theta=\{\theta_s\}$
is a product of 
independent priors for each~$\theta_s$, 
$\pi (\theta | T) = \prod _ {s \in T} \pi (\theta_s)$.
The exact form of $\pi(\theta_s)$ of course depends
on the choice of the  base models
$\clM_s$.

\subsection{Bayesian inference} 
\label{23b}

\vspace*{-0.1 cm}

A typical obstacle to performing 
Bayesian inference is the difficulty 
in computing the 
normalising constant $p(x)$
of the posterior density (sometimes referred to as the 
{\em prior predictive likelihood} or simply as the 
{\em evidence}), which
in this case is given by, 
\begin{equation}
p(x)=\sum_{T \in \mathcal T (D)}\pi(T) p(x|T) = \sum_{T \in \mathcal T (D)}
\pi(T) \int _ {\theta} p(x|T,\theta) \pi (\theta | T) \  d \theta  . 
\label{ctw}
\end{equation} 
\textcolor{black}{
The power of the proposed Bayesian structure stems, in part, from the fact 
that, although $ \mathcal T (D)$ is enormously rich, consisting of 
doubly-exponentially many models in $D$, it is actually possible to
perform \textcolor{black}{effective} Bayesian inference within $\clT(D)$ very efficiently.
By appropriately modifying and extending the corresponding algorithms 
for discrete time series from~\cite{BCT-JRSSB:22},
we introduce the Generalised Context-Tree Weighting (GCTW) algorithm 
and the Generalised Bayesian Context Tree (GBCT) algorithm, 
which can be used in the present general setting 
of real-valued time series. We show that
GCTW computes $p(x)$ 
in~(\ref{ctw}) (Theorem~\ref{ctwth}), and GBCT identifies the 
MAP context-tree model (Theorem~\ref{bctth}). 
Theorems~\ref{ctwth} and~\ref{bctth} are proved in
Section~\ref{appA}
of the Supplemental Material,
where we also 
introduce the $k$-GBCT algorithm 
that obtains the top-$k$ \textit{a posteriori} 
most likely context-tree models; the details are omitted here.
}

Let $x$ be a time series $x_1^n$ with 
initial context $x_{-D+1}^0$.
For a model $T$ with associated parameters $\theta=\{\theta_s\}$,
the \textit{generalised estimated probabilities} $P_e(s,x)$ defined
at every node $s$ of $T$ 
play a central role in the
algorithms' descriptions and in the proofs of the theorems.
These are,
\begin{equation}\label{pes}
P_e(s,x) =  \int \prod _{i \in B_s}  p(x_i|T , \theta _s , x_{-D+1}^{i-1}) \ \pi (\theta _ s) \ d\theta_s . 
\end{equation}
Write
$q_i=Q(x_i)$ for the quantised samples.
Given a maximum context depth $D\geq 0$ and the value
of the hyperparameter $\beta$, the GCTW and GBCT
algorithms operate as follows.

\medskip

\noindent 
\textbf{GCTW: The generalised context-tree weighting algorithm}

\vspace*{-0.2 cm}

\begin{enumerate}
\item Build $T_{\text{MAX}}$, 
the smallest proper tree containing
all contexts $q_{i-D} ^ {i-1}, \  i=1,2,\ldots,n$,
as leaves.

\vspace*{-0.1 cm}

{\color{black} \item  Compute $P_e(s,x)$ as in~(\ref{pes}) for 
all node $s$ of $T_{\text{MAX}}$, including internal nodes.
}

\vspace*{-0.1 cm}

\item Starting at the leaves and proceeding recursively towards the 
root, at each node $s$ of $T_{\rm MAX}$ 
compute, \vspace*{-0.2 cm}
\[
P_{w,s}\!=\!
\left\{
\begin{array}{ll}
P_e(s,x),  \; \; \; &\mbox{if $s$ is a leaf,}\\
\beta P_e(s,x)+(1-\beta)\prod_{j=0}^{m-1} P_{w,sj},  \; \; \; &\mbox{otherwise,}
\end{array}\!\!
\right.\!\! 
\]  where $sj$ is the concatenation
of context $s$ and symbol $j$.
\end{enumerate}


\noindent 
\textbf{GBCT: The generalised Bayesian context tree algorithm}

\vspace*{-0.2 cm}

\begin{enumerate}
\item Build the tree $T_{\text{MAX}}$  as in GCTW.

\vspace*{-0.1 cm}

{\color{black} \item  Compute $P_e(s,x)$ as in~(\ref{pes}) for 
all node $s$ of $T_{\text{MAX}}$, including internal nodes.
}

\vspace*{-0.1 cm}

\item Starting at the leaves and proceeding recursively
towards the 
root, at each node $s$ of $T_{\rm MAX}$ compute,
\[
 P_{m,s}\!=\!
\left\{
\begin{array}{ll}
P_e(s,x), \;\;\; &\mbox{if $s$ is a leaf at depth $D$,}\\
\beta, \;\;\;   &\mbox{if $s$ is a leaf at depth $<D$,}\\
 \max\big\{\beta P_e(s,x),
(1-\beta)\prod_{j=0}^{m-1} \hspace*{-0.07 cm } P_{m,sj}\big\},  \ 
&\mbox{otherwise.}
\end{array}
\right. 
\] 
\item Starting at the root and proceeding recursively with its descendants,
for each node $s$ in $T_{\rm MAX}$: If the maximum above is achieved 
by the first term, prune all its descendants from $T_{\text{MAX}}$.
Let $T_1^*$ denote the resulting tree, after all nodes have been examined for pruning.
\end{enumerate}

\begin{theorem}\label{ctwth}
The weighted probability $P_{w,\lambda}$ at the root 
node $\lambda$ of $T_{\rm MAX}$ is equal to $p(x)$ as in~{\em (\ref{ctw})}.
\end{theorem}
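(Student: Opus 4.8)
The plan is to prove the theorem by the classical Context-Tree Weighting argument, adapted to the continuous setting: I will show by induction, from the leaves of $T_{\rm MAX}$ up to its root, that each weighted probability $P_{w,s}$ equals a prior-weighted sum of estimated probabilities over all subtrees rooted at $s$, and then read off the claim at the root.

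First I would record two preliminary factorizations. For the likelihood, combining the product form in~(\ref{lik}) with the product parameter prior $\pi(\theta|T)=\prod_{s\in T}\pi(\theta_s)$ and Fubini gives, for every $T\in\clT(D)$, $p(x|T)=\int p(x|\theta,T)\,\pi(\theta|T)\,d\theta=\prod_{s\in T}P_e(s,x)$, with $P_e$ as in~(\ref{pes}). For the prior, I would rewrite~(\ref{prior}) node-by-node: a proper $m$-ary tree with $|T|$ leaves has exactly $(|T|-1)/(m-1)$ internal nodes, so $\alpha^{|T|-1}=(1-\beta)^{i(T)}$ where $i(T)$ counts the internal nodes of $T$, while $\beta^{|T|-L_D(T)}=\beta^{\ell(T)}$ where $\ell(T)$ counts the leaves of $T$ at depth strictly less than $D$. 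Hence $\pi(T)=(1-\beta)^{i(T)}\beta^{\ell(T)}$, a product of local factors: $(1-\beta)$ per internal node and $\beta$ per shallow leaf.

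Next I would extend $P_e(\cdot,x)$ to every node $u$ of the full depth-$D$ tree by setting $P_e(u,x)=1$ whenever $B_u=\emptyset$ (consistent with~(\ref{pes}), since an empty product leaves only $\int\pi(\theta_u)\,d\theta_u=1$), and establish the induction hypothesis: for every node $s$ of $T_{\rm MAX}$, $P_{w,s}=\sum_{T_s}(1-\beta)^{i(T_s)}\beta^{\ell(T_s)}\prod_{u\in\mathrm{leaves}(T_s)}P_e(u,x)$, the sum ranging over all proper subtrees $T_s$ rooted at $s$ with leaves at depth at most $D$. The inductive step at an internal node $s$ splits this sum according to whether $s$ is itself a leaf of $T_s$ (contributing $\beta P_e(s,x)$, since $\mathrm{depth}(s)<D$) or branches into $m$ independent sub-subtrees (contributing $(1-\beta)\prod_j P_{w,sj}$ after distributing the product over children), reproducing exactly the CCTW recursion. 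The hard part is the base case at a leaf $s$ of $T_{\rm MAX}$ of depth $<D$: such a leaf carries no data, so $P_e(s,x)=1$ and every refinement of $s$ likewise carries only data-free leaves; the claimed sum then reduces to $\sum_{T_s}(1-\beta)^{i(T_s)}\beta^{\ell(T_s)}$, which I would show equals $1$ by a short downward induction on depth using the identity $\beta+(1-\beta)\cdot 1^m=1$. This is the main subtlety — it is the normalization of the BCT prior over the unexplored part of the tree, and it is precisely what legitimizes the algorithm setting $P_{w,s}=P_e(s,x)$ at leaves of $T_{\rm MAX}$ without ever expanding them.

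Finally, evaluating the induction hypothesis at the root $\lambda$ (depth $0$) yields $P_{w,\lambda}=\sum_{T\in\clT(D)}(1-\beta)^{i(T)}\beta^{\ell(T)}\prod_{s\in T}P_e(s,x)$, where summing over all of $\clT(D)$ rather than just subtrees of $T_{\rm MAX}$ is justified because refinements of data-free leaves are absorbed by the normalization above. Substituting the two factorizations from the first step identifies the summand with $\pi(T)\,p(x|T)$, so $P_{w,\lambda}=\sum_{T\in\clT(D)}\pi(T)\,p(x|T)=p(x)$ as in~(\ref{ctw}), which is the assertion. The argument is structurally identical to the discrete CTW/BCT correctness proof of~\cite{BCT-JRSSB:22}; the only change is that $P_e(s,x)$ is now a continuous integral over $\theta_s$, a fact the tree recursion never uses, so the combinatorial identity carries over verbatim.
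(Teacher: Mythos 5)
Your proof is correct and follows essentially the same route as the paper's: a leaves-to-root induction establishing that each $P_{w,s}$ equals the prior-weighted sum of $\prod_{u} P_e(su,x)$ over all proper subtrees rooted at $s$, which, evaluated at the root, gives exactly the sum in~(\ref{ctw}). The only differences are presentational: you factorise the prior node-by-node as $(1-\beta)^{i(T)}\beta^{\ell(T)}$ where the paper invokes the recursive decomposition $\pi_{D}(T)=\alpha^{m-1}\prod_{j}\pi_{D-1}(T_j)$ of Lemma~\ref{lem:union} (the two are equivalent), and you explicitly verify the base case at the data-free leaves of $T_{\rm MAX}$ at depth strictly less than $D$ via the prior-normalisation identity $\beta+(1-\beta)=1$, a detail that the paper's proof (whose stated base case covers only depth-$D$ nodes) leaves implicit.
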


\begin{theorem}\label{bctth}
For all $\beta\geq 1/2$, the
tree $T^*_1$ produced by 
the GBCT algorithm is the MAP context-tree model, 
$\pi(T_1^*|x)=\max_{T\in\clT(D)}\pi(T|x)$.
\end{theorem}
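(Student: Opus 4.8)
The plan is to recast the search for the MAP tree as a maximization of the \emph{unnormalized posterior weight} $w(T) := \pi(T)\,p(x|T)$ over $T\in\clT(D)$, since $\pi(T|x)=w(T)/p(x)$ with $p(x)$ independent of $T$. First I would establish the factorization $p(x|T)=\prod_{s\in T}P_e(s,x)$ directly from the likelihood~(\ref{lik}), the product prior $\pi(\theta|T)=\prod_s\pi(\theta_s)$, and the definition~(\ref{pes}), with the product running over the leaves $s$ of $T$. Next I would rewrite the prior~(\ref{prior}) in multiplicative ``branching'' form: using $\alpha=(1-\beta)^{1/(m-1)}$ and the fact that a proper $m$-ary tree with $|T|$ leaves has exactly $(|T|-1)/(m-1)$ internal nodes, one gets $\pi(T)=(1-\beta)^{I(T)}\beta^{|T|-L_D(T)}$, where $I(T)$ counts internal nodes and $|T|-L_D(T)$ counts leaves at depth $<D$. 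Thus $w(T)=(1-\beta)^{I(T)}\beta^{|T|-L_D(T)}\prod_{s\in T}P_e(s,x)$ is a product of purely local factors: a factor $(1-\beta)$ at each internal node, and a factor $\beta\,P_e(s,x)$ (resp.\ $P_e(s,x)$) at each leaf at depth $<D$ (resp.\ $=D$).

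For each node $s$ of $T_{\rm MAX}$ at depth $d$, define $M_s$ to be the maximum, over all proper subtrees rooted at $s$ with leaves at depth $\le D$, of the product of these local factors. The core step is an induction from the leaves of $T_{\rm MAX}$ up to the root showing that $M_s$ obeys exactly the CBCT recursion for $P_{m,s}$. At a depth-$D$ node the only subtree is the singleton, giving $M_s=P_e(s,x)$. At an internal node of depth $d<D$ the optimal subtree either stops at $s$ (weight $\beta P_e(s,x)$) or branches into its $m$ children (weight $(1-\beta)\prod_j M_{sj}$), whence $M_s=\max\{\beta P_e(s,x),(1-\beta)\prod_j M_{sj}\}$. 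It remains to match the boundary case of a leaf of $T_{\rm MAX}$ at depth $<D$: by construction such a node carries no data, so $B_s=\emptyset$ and $P_e(s,x)=\int\pi(\theta_s)\,d\theta_s=1$, which is exactly why the algorithm sets $P_{m,s}=\beta$.

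The one genuinely delicate point---and where the hypothesis $\beta\ge 1/2$ enters---is justifying that it suffices to maximize over subtrees of $T_{\rm MAX}$, i.e.\ that the global maximizer over all of $\clT(D)$ never refines a node below $T_{\rm MAX}$. The hard part is precisely this: a node refined below $T_{\rm MAX}$ has no data, so all its descendants have $P_e\equiv 1$, and the induction above gives $M_s=\max\{\beta,(1-\beta)\prod_j M_{sj}\}$ with each $M_{sj}\le 1$; hence $(1-\beta)\prod_j M_{sj}\le 1-\beta$, and the ``stop'' option $\beta$ is optimal precisely when $\beta\ge 1-\beta$, i.e.\ $\beta\ge 1/2$. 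I would isolate this as a lemma: for $\beta\ge 1/2$, at every no-data node the local maximum is attained by pruning, so extending below $T_{\rm MAX}$ can only weakly decrease $w(T)$, and therefore $\max_{T\in\clT(D)}w(T)=M_\lambda$, the quantity $P_{m,\lambda}$ computed at the root.

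Finally I would verify that the backtracking in Step~3 recovers the maximizer. By the recursion, at each internal node the first term $\beta P_e(s,x)$ attains the max exactly when pruning all descendants is weakly optimal; following these decisions from the root down selects one subtree of $T_{\rm MAX}$ whose local-factor product equals $M_\lambda=\max_T w(T)$. This subtree is $T_1^*$, so $w(T_1^*)=\max_T w(T)$, and dividing by $p(x)$ yields $\pi(T_1^*|x)=\max_{T\in\clT(D)}\pi(T|x)$, as claimed. When $\beta=1/2$ ties may occur; the convention of pruning whenever the first term attains the maximum then selects the smallest MAP tree, which is harmless.
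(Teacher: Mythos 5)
Your proposal is correct and follows essentially the same route as the paper's proof: a bottom-up induction showing that the CBCT recursion $P_{m,s}=\max\bigl\{\beta P_e(s,x),\,(1-\beta)\prod_{j}P_{m,sj}\bigr\}$ computes the maximum of $\pi(U)\prod_{u}P_e(su,x)$ over subtrees rooted at $s$, followed by the observation that the pruning step (Step 3) recovers a maximizer, and finally division by $p(x)$. If anything, your write-up is more complete than the paper's, which defers the inductive details -- including the handling of no-data nodes at and below the leaves of $T_{\rm MAX}$, which is precisely where your $\beta\geq 1/2$ argument and the boundary value $P_{m,s}=\beta$ enter -- to the corresponding discrete-case proof in~\cite{BCT-JRSSB:22}.
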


{\color{black} By introducing an appropriate quantiser $Q$ and 
using the discretised versions 
$q_i = Q(x_i)$ of the samples $x_i$,
those steps of the generalised versions of the algorithms that are 
associated with building the tree $T_{\rm MAX}$ and performing 
recursive operations on it remain the same as before. The important difference 
in the new algorithms comes from the generalised estimated 
probabilities $P_e(s,x)$ in~(\ref{pes}), computed in
Step 2 for every node of $T_{\rm MAX}$.
The necessary values that need to be stored at
every node $s$ in order to compute $P_e(s,x)$ heavily 
depend on the choice of the base models; see for example Section~\ref{bctar} and Section~\ref{s:Pest} for the specific quantities that we need to store for the BCT-AR and BCT-ARCH models, respectively. 

In fact, the discrete BCT framework of~\cite{BCT-JRSSB:22} can be 
viewd as a simple special case of the general BCT-X framework,
when simple categorical distributions are chosen as the base models 
associated to the leaves of the trees; other specific examples are 
the BCT-AR (Section~\ref{bctar}) 
and BCT-ARCH (Section~\ref{s:Pest}) models
examined in this paper.
What remains true in all cases, and is perhaps somewhat remarkable, 
is that in the general BCT-X setting, 
the new versions of the algorithms can be used to perform 
Bayesian inference in a dynamic programming fashion, 
for arbitrary base models associated at the leaves of the trees.}

\medskip

\noindent
{\bf Sampling from the posterior.} 
The GCTW, GBCT and $k$-GBCT algorithms
compute the evidence $p(x)$ and identify 
the top-$k$ MAP context-tree models.
Furthermore,
it is in fact possible to obtain independent samples from the 
model posterior $\pi(T|x)$ as follows. For every node $s$,
let $P_{b,s} = \beta P_e(s,x,) / P_{w,s}$.
Start with the tree consisting
of only the root node $\lambda$,
and with probability $P_{b,\lambda}$ 
mark it as a leaf and stop,
or, with probability $(1-P_{b,\lambda})$
add all~$m$ of its children to~$T$. 
Then, examine every new node~$s$ and either
mark it as a leaf with probability~$P_{b,s}$,
or add all~$m$ of its children to $T$ with probability
$(1-P_{b,s})$. 
Examining
all non-leaf nodes 
at depths $<D$ recursively until no more eligible nodes remain, 
produces a random tree $T \in \mathcal {T} (D)$.
Theorem~\ref{branchth}, proved in 
the Supplemental Material,
states that $T$ is a sample from the posterior.

\begin{theorem}\label{branchth}
The probability that the above branching procedure 
produces any particular context-tree 
model $T \in \mathcal {T} (D)$ is given by $\pi(T|x)$.
\end{theorem}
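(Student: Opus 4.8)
The plan is to compute directly the probability that the branching procedure outputs a given $T\in\clT(D)$ and to match it, factor by factor, to $\pi(T|x)=\pi(T)\,p(x|T)/p(x)$. Three ingredients are available at the outset. By the factorisation behind~(\ref{lik})--(\ref{pes}) together with the product parameter prior $\pi(\theta|T)=\prod_s\pi(\theta_s)$, the likelihood of a fixed tree is $p(x|T)=\prod_s P_e(s,x)$, the product taken over the leaves $s$ of $T$. By Theorem~\ref{ctwth} the normalising constant is $p(x)=P_{w,\lambda}$. Finally, the weighting recursion gives $P_{w,s}\ge\beta P_e(s,x)\ge 0$, so each $P_{b,s}=\beta P_e(s,x)/P_{w,s}$ lies in $[0,1]$ and the procedure is well defined; a preliminary remark setting $P_e(s,x)=1$ at nodes $s$ with no data extends all quantities to the complete tree of depth $D$, which is what allows the procedure to reach every $T\in\clT(D)$.

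First I would write the output probability as a product over the nodes the procedure visits. Since a node is examined precisely when its parent has branched, the visited nodes are exactly the nodes of $T$: at each internal node of $T$ the procedure must branch, contributing $(1-P_{b,s})$; at each leaf of $T$ of depth $<D$ it must stop, contributing $P_{b,s}$; and leaves at depth $D$ are forced and contribute a factor $1$. Writing $I(T)$ and $L(T)$ for the internal nodes and leaves of $T$, this gives
\[
\Pr(T)=\prod_{s\in I(T)}(1-P_{b,s})\prod_{\substack{s\in L(T)\\ \mathrm{depth}(s)<D}}P_{b,s}.
\]

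The main step, and the part needing the most care, is the telescoping. Using the recursion I substitute $1-P_{b,s}=(1-\beta)\big(\prod_{j=0}^{m-1}P_{w,sj}\big)/P_{w,s}$ at internal nodes and $P_{b,s}=\beta P_e(s,x)/P_{w,s}$ at the shallow leaves. The denominators then contribute $\prod_{s\in T,\ \mathrm{depth}(s)<D}P_{w,s}$, while the numerator factors $\prod_j P_{w,sj}$ coming from the internal nodes contribute $\prod_{u\in T,\ u\neq\lambda}P_{w,u}$, because the children of the internal nodes of $T$ are exactly its non-root nodes. The obstacle is to see that these two collections cancel cleanly: every $P_{w,u}$ with $u\neq\lambda$ and $\mathrm{depth}(u)<D$ appears once in each, leaving only $1/P_{w,\lambda}$ from the root together with the depth-$D$ factors $\prod_{u\in T,\ \mathrm{depth}(u)=D}P_{w,u}$. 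Since a depth-$D$ node is a leaf of the maximal tree, $P_{w,u}=P_e(u,x)$ there, and combining with the shallow-leaf factors $P_e(s,x)$ recovers $\prod_{s\in L(T)}P_e(s,x)=p(x|T)$. The outcome is
\[
\Pr(T)=\frac{(1-\beta)^{|I(T)|}\,\beta^{\,|T|-L_D(T)}\,p(x|T)}{P_{w,\lambda}} .
\]

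Finally I would identify the prior. The exponent of $\beta$ is already $|T|-L_D(T)$, matching~(\ref{prior}), since the shallow leaves are precisely the leaves not at depth $D$. For the exponent of $(1-\beta)$ I use the combinatorial identity for proper $m$-ary trees, $(m-1)\,|I(T)|=|T|-1$ (obtained by counting edges two ways), so that $(1-\beta)^{|I(T)|}=\alpha^{(m-1)|I(T)|}=\alpha^{|T|-1}$ with $\alpha=(1-\beta)^{1/(m-1)}$. Hence $(1-\beta)^{|I(T)|}\beta^{\,|T|-L_D(T)}=\pi(T)$, and using $P_{w,\lambda}=p(x)$ from Theorem~\ref{ctwth} gives $\Pr(T)=\pi(T)\,p(x|T)/p(x)=\pi(T|x)$, as claimed. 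The degenerate case $D=0$ is immediate, as $\clT(0)=\{\{\lambda\}\}$ and both sides equal $1$.
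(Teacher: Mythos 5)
Your proof is correct, and it takes a genuinely different route from the paper's. The paper proves Theorem~\ref{branchth} by induction on the number of $m$-branches of $T$ (every node of a proper tree has either $0$ or $m$ children): the base case $T=\{\lambda\}$ follows from Theorem~\ref{ctwth}, and the inductive step passes from a tree with $\ell$ branches to one with $\ell+1$ branches obtained by splitting a leaf; most details are deferred to Proposition~3.1 of the cited earlier work on the discrete BCT posterior, the only modification being that the continuous $P_e(s,x)$ replaces its discrete counterpart. You instead compute $\Pr(T)$ in closed form in one pass: the product over visited nodes, the substitutions $1-P_{b,s}=(1-\beta)\prod_{j}P_{w,sj}/P_{w,s}$ at internal nodes and $P_{b,s}=\beta P_e(s,x)/P_{w,s}$ at shallow leaves, the telescoping of the $P_w$ factors (using that the children of the internal nodes of $T$ are exactly its non-root nodes, so only $1/P_{w,\lambda}$ and the depth-$D$ terms survive), and the identification of the prior via the edge-counting identity $(m-1)\,|I(T)|=|T|-1$. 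Both arguments rest on the same ingredients --- the weighting recursion, Theorem~\ref{ctwth} giving $P_{w,\lambda}=p(x)$, the factorisation $p(x|T)=\prod_{s\in T}P_e(s,x)$, and the form of $\pi(T)$ in~(\ref{prior}) --- but yours is self-contained and makes explicit two boundary points the paper leaves implicit: the extension $P_e(s,x)=1$, $P_{w,s}=1$ at nodes carrying no data (needed because the branching procedure can step outside $T_{\rm MAX}$, and exactly what makes every $T\in\mathcal{T}(D)$ reachable), and the fact that depth-$D$ nodes are forced leaves contributing no probability factor. The paper's inductive route buys brevity and reuse of existing machinery; your telescoping buys a transparent, directly verifiable derivation of the exact formula $\Pr(T)=\pi(T)\,p(x|T)/P_{w,\lambda}$.
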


{\color{black}

The methodological tools developed in this section are based on the generalised form of the estimated probabilities $P_e(s,x)$ of~(\ref{pes}),
whose exact form depends on the particular choice of
base models.
In the next section, the general principle is illustrated
in
an interesting case where $P_e(s,x)$ can be computed 
explicitly, and the resulting mixture model is a flexible nonlinear 
model of practical interest. Specifically, an AR model ${\cal M}_s$ is
associated to each leaf~$s$, and we hence
refer to the resulting model class as the 
BCT-AR model;
the BCT-AR model is only one particular instance 
of the general BCT-X framework.

However, even when the
the integrals in~(\ref{pes}) are not available in closed form,
 the fact that the estimated
probabilities are in the form of standard 
marginal likelihoods makes it possible to 
compute them approximately by using 
standard methods; see, e.g.,~\cite{kass:95,chib:95,chib:01,friel:08,wood:11}. 
Then, the GCTW and GBCT algorithms 
can be used in exactly the same manner as before, with these approximations $\widehat {P_e} (s,x)$ in place of their exact values,
and therefore still facilitate effective Bayesian inference. This is illustrated in detail
via the BCT-ARCH version of BCT-X, where the estimated probabilities are not available
in closed form;
see Section~\ref{s:Pest}.
}

\subsection{Remarks on prior work}

Context-tree models -- introduced
as ``tree sources'' 
in the information-theoretic 
literature 
by Rissanen~\cite{rissanen:83b,rissanen:83,rissanen:86} 
in the 1980s --
have been employed 
widely in connection with various
problems on {\em discrete} data, especially
for data 
compression~\cite{weinbergeretal:94,willems-shtarkov-tjalkens:95,%
willems:98,matsushima:94,matsushima:09}.
In the statistics literature they were first
popularised as ``variable-length Markov 
chains''~\cite{buhlmann:00,buhlmann:04},
and more recently in connection with
the Bayesian Context Trees 
(BCT) framework~\cite{BCT-JRSSB:22,ctw-isit:21},
which has been 
found to be very effective 
in a range of statistical tasks,
including model selection, prediction, 
change-point detection and entropy 
estimation~\cite{papag-K-pre:23,branch-isit:22,lungu-arxiv:22,%
lungu-pap-K:22,papag-K-ITW:23}. 
A~central conceptual
novelty of the present work is in showing that
{\em discrete} context-tree models can in fact also be 
effectively utilised
for modelling \textit{real-valued} time 
series, by representing meaningful context-based discrete states that 
are used to build flexible and interpretable mixture models of practical 
interest.

\medskip

{\color{black} In more technical terms, apart from 
this crucial observation, the major contributions of this work compared 
to existing BCT papers are summarised below:

\begin{enumerate}
    \item We introduce a new, vastly richer and more flexible 
class of models for real-valued time series. These models are 
built by combining the use of context trees to define 
adaptive partitions of the state space, with the assignment of arbitrary 
base models to each resulting state-space region (corresponding to each leaf of the tree).

    \item We show that, despite the enormous size of these model
classes, efficient algorithmic tools can 
be developed -- including a principled Bayesian methodology 
for selecting appropriate quantisers -- for 
effective Bayesian inference within this general setting.

    \item We describe how the BCT-X framework naturally leads to 
effective Bayesian forecasting algorithms for real-valued time series.

    \item We illustrate the general methodology by taking,
as specific examples, AR and ARCH models as the base models.
We explicitly describe the resulting BCT-AR and BCT-ARCH model classes
and their properties, 
and we illustrate their superior forecasting performance on
a variety of relevant real-world applications.
\end{enumerate}
}

%
%
%
%
%

In a different but related direction,
starting with the algorithm of~\cite{breiman:book2}, 
the classification and regression trees (CART)
procedures 
have been widely used for regression and classification tasks.
Typically, these methods either rely
on greedy algorithms to a grow a tree with some stopping 
criteria -- sometimes with additional penalties for pruning~-- or
they adopt 
a Bayesian CART approach and place a prior on trees~\cite{chipman:98,bart}; 
see~\cite{loh:14} and~\cite{linero:17} for recent reviews.
The main difficulties in the use of these procedures are that
greedily-constructed trees tend to overfit the data, while the 
corresponding Bayesian approaches require the use of MCMC 
sampling, which is both 
computationally expensive and not guaranteed to be effective 
for inference. These reasons perhaps partly 
explain why, in applications, approaches 
exploiting CARTs for time series 
data~\cite{audrino:01,meek:02,dellaportas:07,taddy2011dynamic,deshpande2020vcbart} have 
not been used as widely as much simpler and more restricted model classes
like, e.g., threshold 
models~\cite{tong:11,zakoian:94}.

In contrast, in this work we take advantage
of the special sequential nature of time-series data 
which, together with the desirable properties of the BCT-X framework, 
allow for exact Bayesian inference in selecting the tree model 
-- in a very efficient manner.  
The BCT-X family of algorithms can identify  not only the 
MAP tree model (GBCT algorithm), but also the top-$k$ a posteriori most 
likely trees, for moderate $k$ ($k$-GBCT algorithm). 
Moreover, it is possible to further explore the posterior
on model space by easily obtaining i.i.d.\ samples
as described in Section~\ref{23b}, without the need to resort to MCMC; which of course offers another important practical advantage compared to Bayesian CART methods.

\newpage

\section{BCT-AR: The Bayesian Context Trees Autoregressive Model}
\label{bctar}

Given a quantiser $Q:\RL\to A=\{0,1,\ldots,m-1\}$, a fixed AR order $p$,
and a maximum context
depth $D\geq 0$, the BCT-AR model is the version of
BCT-X with AR$(p)$ models as the {\em base model}
class. {
\color{black}
With each possible state, that is, with each potential
leaf $s$ in a context-tree model
$T\in\clT(D)$,
we associate an AR$(p)$ model,
\begin{equation}\label{ar}
x_n = \phi _ {s,0} +  \phi _ {s,1} x_{n-1} + \dots + \phi _ {s,p} x_{n-p} + e_n = {\boldsymbol \phi _ s} ^{\text{T}} \ \mathbf{ \widetilde{ x} } _{n-1} + e_n , \quad  e_n \sim \mathcal N (0, \sigma _s ^2)  ,
\end{equation}
where $\boldsymbol \phi _ s = ( \phi _ {s,0}, \dots , \phi _ {s,p} ) ^ {\text{T}}$ 
and $ \mathbf{ \widetilde{ x} } _{n-1} = (1, x_{n-1},\dots, x_{n-p})^{\text{T}}$.
}
The parameters $\theta_s = (\boldsymbol \phi _ s , \sigma _ s ^2)$
of the model consist of
the AR coefficients $\boldsymbol\phi_s$ and the noise variance $\sigma_s^2$.

We place an inverse-gamma prior on $\sigma_s^2$ and a Gaussian prior 
on $\boldsymbol\phi_s$,
so that the joint prior on the parameters is 
$\pi(\theta _s) =\pi (\boldsymbol \phi _s | \sigma _s ^2) \pi (\sigma _ s ^2)$, with,
\begin{align} \label{ar_pr}
&\pi (\sigma _s ^2 ) = \text {Inv-Gamma} (\tau , \lambda), \quad \quad \pi (\boldsymbol \phi _s | \sigma _s ^2 )  = \mathcal N (\mu _o , \sigma_s ^2 \Sigma _o)  , 
\end{align}
where $(\tau, \lambda,  \mu _o ,  \Sigma _o  )$ are the prior hyperparameters.
This prior specification allows the exact computation of the estimated 
probabilities of~(\ref{pes}), and also gives closed-form posterior 
densities for 
the AR coefficients and the noise variance. These are given in 
Lemmas~\ref{lem:PeAR} and~\ref{lem:piAR}, whose
their proofs are given in Section~\ref{appB} of the Supplemental Material.
Importantly, since here the estimated probabilities $P_e(s,x)$ are available
in closed form, the GCTW, GBCT and $k$-GBCT algorithms can be used
directly for \textit{exact} Bayesian inference. 

\begin{lemma}
\label{lem:PeAR}
For the BCT-AR model, the estimated probabilities $P_e(s,x)$ 
as in~{\em (\ref{pes})} are given by,
\begin{equation}
P_e (s,x) = C_s ^ {-1} \ \frac{ \Gamma \left ( \tau + |B_s| / 2\right )  \ \lambda ^ \tau } {\Gamma (\tau ) \ \left ( \lambda +  D_s / 2  \right ) ^ {\tau + |B_s| / 2 } } \ , 
\end{equation}
where $|B_s|$ is the cardinality of $B_s$ in~{\em (\ref{lik})}, i.e., 
the number of observations  with  context~$s$, and, 
\[
C_s = \sqrt{ {(2 \pi )^{|B_s|  }}  \text {det}( I + \Sigma _o S_3 }),   \ 
D_s = s_1 +    \mu _o ^ {\text {T}} \Sigma _ o ^{-1}  \mu _o  -( \mathbf s_2  +    \Sigma _ o ^{-1}  \mu _o )^ {\text {T}} (S_3 + \Sigma _ o ^{-1} ) ^ {-1}  ( \mathbf s_2  +    \Sigma _ o ^{-1} \mu _o ),  \]
 \textit{with the sums $s_1, \mathbf s_2, S_3 $ defined as:}
\begin{equation}
s_1 = \sum _ {i \in B_s } x_i ^2 \quad , \quad \mathbf s_2 = \sum _ {i \in B_s } x_i \ \mathbf{ \widetilde{ x} } _{i-1} \quad , \quad S_3 = \sum _ {i \in B_s } \mathbf{ \widetilde{ x} } _{i-1}  \mathbf{ \widetilde{ x} } _{i-1} ^{\text{T}} \ .
\end{equation}
\end{lemma}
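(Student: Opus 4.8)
The plan is to recognise $P_e(s,x)$ as the marginal likelihood of a conjugate Bayesian linear regression, restricted to the observations indexed by $B_s$, and to evaluate it by integrating out $\boldsymbol\phi_s$ first and then $\sigma_s^2$. To set up the bookkeeping I would stack the relevant data: writing $N=|B_s|$, let $\mathbf y$ be the vector with entries $x_i$ for $i\in B_s$ and let $X$ be the matrix with rows $\mathbf{\widetilde{x}}_{i-1}^{\text{T}}$, so that the three sums in the statement become $s_1=\mathbf y^{\text{T}}\mathbf y$, $\mathbf s_2 = X^{\text{T}}\mathbf y$ and $S_3 = X^{\text{T}}X$. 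The AR$(p)$ likelihood of~(\ref{ar}) over $B_s$ is then the Gaussian $(2\pi\sigma_s^2)^{-N/2}\exp(-\tfrac{1}{2\sigma_s^2}\|\mathbf y - X\boldsymbol\phi_s\|^2)$, and multiplying by the priors of~(\ref{ar_pr}) gives the integrand of~(\ref{pes}) up to the normalising constants of the Gaussian and Inverse-Gamma densities.

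The first integration is over $\boldsymbol\phi_s$. Collecting the two quadratic forms in the exponent, $\|\mathbf y - X\boldsymbol\phi_s\|^2 + (\boldsymbol\phi_s-\mu_o)^{\text{T}}\Sigma_o^{-1}(\boldsymbol\phi_s-\mu_o)$, and completing the square in $\boldsymbol\phi_s$ about $(S_3+\Sigma_o^{-1})^{-1}(\mathbf s_2 + \Sigma_o^{-1}\mu_o)$ leaves a $\boldsymbol\phi_s$-free remainder equal to exactly $D_s = s_1 + \mu_o^{\text{T}}\Sigma_o^{-1}\mu_o - (\mathbf s_2 + \Sigma_o^{-1}\mu_o)^{\text{T}}(S_3+\Sigma_o^{-1})^{-1}(\mathbf s_2 + \Sigma_o^{-1}\mu_o)$, which is precisely the $D_s$ of the statement. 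The Gaussian integral over $\boldsymbol\phi_s\in\RL^p$ contributes a factor $(2\pi\sigma_s^2)^{p/2}\det(S_3+\Sigma_o^{-1})^{-1/2}$, whose $(2\pi\sigma_s^2)^{p/2}$ cancels the $(2\pi\sigma_s^2)^{-p/2}$ from the Gaussian prior. Using the determinant identity $\det(\Sigma_o)^{-1/2}\det(S_3+\Sigma_o^{-1})^{-1/2}=\det(I+\Sigma_o S_3)^{-1/2}$, the surviving prefactor is exactly $C_s^{-1}(\sigma_s^2)^{-N/2}$, so after this step the integrand reduces to $C_s^{-1}(\sigma_s^2)^{-N/2}\exp(-D_s/(2\sigma_s^2))$ times the Inverse-Gamma prior on $\sigma_s^2$.

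Finally I would integrate out $\sigma_s^2$. Writing the Inverse-Gamma density explicitly, the remaining integral is $\frac{\lambda^\tau}{\Gamma(\tau)}\int_0^\infty (\sigma_s^2)^{-(\tau + N/2)-1}\exp(-(\lambda + D_s/2)/\sigma_s^2)\,d\sigma_s^2$, which is the standard Inverse-Gamma normalising integral and evaluates to $\frac{\lambda^\tau}{\Gamma(\tau)}\cdot\frac{\Gamma(\tau + N/2)}{(\lambda + D_s/2)^{\tau + N/2}}$. Multiplying by $C_s^{-1}$ and substituting $N=|B_s|$ yields the claimed expression. The only step requiring genuine care is the bookkeeping of the determinant and $2\pi$ factors in the $\boldsymbol\phi_s$-integration that collapses them into the single constant $C_s$; everything else is a routine conjugate-Gaussian completion of squares followed by a Gamma integral.
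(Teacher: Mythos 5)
Your proposal is correct and follows essentially the same route as the paper's proof: a conjugate normal--inverse-gamma marginalisation in which $\boldsymbol\phi_s$ is integrated out first (completing the square, with the remainder giving $D_s$ and the determinant factors collapsing into $C_s$ via $\det(\Sigma_o)\det(S_3+\Sigma_o^{-1})=\det(I+\Sigma_o S_3)$), followed by the standard inverse-gamma normalising integral over $\sigma_s^2$. The only organisational difference is that the paper first proves a known-variance lemma and then substitutes $\Sigma_o\mapsto\sigma_s^2\Sigma_o$, whereas you carry out the inner Gaussian integration directly in matrix form; the content is identical.
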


\begin{lemma}
\label{lem:piAR}
Given a tree model $T$, at each leaf $s$, the posterior distributions 
of the AR coefficients and the noise variance are given respectively by,
\begin{align} \label{10}
&\pi (\sigma _ s ^2 | T,x ) =  \text {Inv-Gamma } \left (\tau + |B_s| / {2} , \lambda + {D_s} / {2} \right )  , \quad \quad   
\pi (\boldsymbol \phi _s| T,x ) = t _ \nu (\mathbf m _s, P _s)   ,
\end{align}
where 
$ \nu = 2 \tau + |B_s| $, 
$t _ \nu$ denotes a multivariate $t$-distribution with $\nu$ 
degrees of freedom, and,
\begin{equation}
\mathbf m_s = (S_3 + \Sigma _o ^ {-1}) ^ {-1} (\mathbf s_2 + \Sigma _o ^ {-1}  \mu _ o) \ ,  \ 
 P_s ^ {-1} = \frac {2 \tau +|B_s| } {2 \lambda + D_s} (S_3 + \Sigma _o ^ {-1})  .\end{equation}
\end{lemma}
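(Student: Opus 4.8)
Since conditioning on the tree model $T$ fixes which observations are assigned to each state, the index set $B_s$ is determined, and the posterior factorises over leaves into independent problems, each a standard conjugate Bayesian linear regression based only on the data $\{(x_i,\mathbf{\widetilde{x}}_{i-1}):i\in B_s\}$. The plan is therefore to carry out the normal--inverse-gamma conjugate computation at a single fixed leaf $s$ and read off both posteriors. First I would write the joint posterior $\pi(\boldsymbol\phi_s,\sigma_s^2\mid T,x)$ as proportional to the likelihood $\prod_{i\in B_s}(2\pi\sigma_s^2)^{-1/2}\exp\{-(x_i-\boldsymbol\phi_s^{\text{T}}\mathbf{\widetilde{x}}_{i-1})^2/(2\sigma_s^2)\}$ times the prior $\pi(\boldsymbol\phi_s\mid\sigma_s^2)\pi(\sigma_s^2)$ of~(\ref{ar_pr}), and expand the residual sum of squares as $s_1-2\boldsymbol\phi_s^{\text{T}}\mathbf{s}_2+\boldsymbol\phi_s^{\text{T}}S_3\boldsymbol\phi_s$, in terms of exactly the sums $s_1,\mathbf{s}_2,S_3$ already defined in Lemma~\ref{lem:PeAR}.

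Next I would merge this with the Gaussian prior quadratic form $(\boldsymbol\phi_s-\mu_o)^{\text{T}}\Sigma_o^{-1}(\boldsymbol\phi_s-\mu_o)$, so that inside the exponential (everything scaled by $-1/(2\sigma_s^2)$) the $\boldsymbol\phi_s$-dependence becomes $\boldsymbol\phi_s^{\text{T}}(S_3+\Sigma_o^{-1})\boldsymbol\phi_s-2\boldsymbol\phi_s^{\text{T}}(\mathbf{s}_2+\Sigma_o^{-1}\mu_o)$. Completing the square with $A:=S_3+\Sigma_o^{-1}$ and $b:=\mathbf{s}_2+\Sigma_o^{-1}\mu_o$ yields $(\boldsymbol\phi_s-A^{-1}b)^{\text{T}}A(\boldsymbol\phi_s-A^{-1}b)$ plus a $\boldsymbol\phi_s$-free remainder $s_1+\mu_o^{\text{T}}\Sigma_o^{-1}\mu_o-b^{\text{T}}A^{-1}b$. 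This identifies the posterior location $\mathbf{m}_s=A^{-1}b=(S_3+\Sigma_o^{-1})^{-1}(\mathbf{s}_2+\Sigma_o^{-1}\mu_o)$, and shows the remainder equals precisely $D_s$ from Lemma~\ref{lem:PeAR}. This is the same algebraic manipulation performed in the proof of that lemma, so it can be quoted rather than redone.

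With the square completed, the joint posterior takes the normal--inverse-gamma form $(\sigma_s^2)^{-(\tau+|B_s|/2+p/2)-1}\exp\{-[\lambda+D_s/2+\tfrac12(\boldsymbol\phi_s-\mathbf{m}_s)^{\text{T}}A(\boldsymbol\phi_s-\mathbf{m}_s)]/\sigma_s^2\}$, where the powers $(\sigma_s^2)^{\mp p/2}$ coming from the prior on $\boldsymbol\phi_s$ and from the Gaussian $\boldsymbol\phi_s$-integral cancel. Integrating out $\boldsymbol\phi_s$ removes the quadratic term and leaves $(\sigma_s^2)^{-(\tau+|B_s|/2)-1}\exp\{-(\lambda+D_s/2)/\sigma_s^2\}$, namely $\text{Inv-Gamma}(\tau+|B_s|/2,\lambda+D_s/2)$, which is the first claim. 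Marginalising the other variable, integrating $\sigma_s^2$ out via the inverse-gamma integral $\int_0^\infty(\sigma^2)^{-a-1}e^{-K/\sigma^2}\,d\sigma^2=\Gamma(a)K^{-a}$, gives $\pi(\boldsymbol\phi_s\mid T,x)\propto[\lambda+D_s/2+\tfrac12(\boldsymbol\phi_s-\mathbf{m}_s)^{\text{T}}A(\boldsymbol\phi_s-\mathbf{m}_s)]^{-(\tau+|B_s|/2+p/2)}$.

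The final step, and the only one needing genuine care, is matching this kernel to the standard multivariate-$t$ density $\propto[1+\nu^{-1}(\boldsymbol\phi_s-\mathbf{m}_s)^{\text{T}}P_s^{-1}(\boldsymbol\phi_s-\mathbf{m}_s)]^{-(\nu+p)/2}$. Factoring $(\lambda+D_s/2)$ out of the bracket and equating exponents gives $(\nu+p)/2=\tau+|B_s|/2+p/2$, hence $\nu=2\tau+|B_s|$; equating the quadratic coefficients gives $\nu^{-1}P_s^{-1}=A/(2\lambda+D_s)$, i.e. $P_s^{-1}=\tfrac{2\tau+|B_s|}{2\lambda+D_s}(S_3+\Sigma_o^{-1})$, exactly as stated. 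I expect the main obstacle to be purely bookkeeping: keeping the factors of $2$ and the $\lambda$-versus-$\lambda/2$ normalisations consistent across the inverse-gamma and the $t$ parameterisations, since an off-by-a-factor-of-two slip here would corrupt both $\nu$ and $P_s$. Everything else reduces to routine Gaussian and gamma integrals.
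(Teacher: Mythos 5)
Your proof is correct and follows essentially the same route as the paper's: both are the standard normal--inverse-gamma conjugate computation at a fixed leaf, completing the square to identify $\mathbf{m}_s$ and $D_s$, integrating out $\boldsymbol\phi_s$ to obtain the $\text{Inv-Gamma}(\tau+|B_s|/2,\lambda+D_s/2)$ posterior, and then integrating out $\sigma_s^2$ via the inverse-gamma integral to match the multivariate-$t$ kernel with $\nu=2\tau+|B_s|$ and $P_s^{-1}=\tfrac{2\tau+|B_s|}{2\lambda+D_s}(S_3+\Sigma_o^{-1})$. The only cosmetic difference is that the paper imports the completed-square form from its known-variance computation (with $\Sigma_o$ replaced by $\sigma_s^2\Sigma_o$), whereas you redo that algebra inline -- which, as you note, could equally be quoted from the proof of Lemma~\ref{lem:PeAR}.
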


\begin{corollary}
The MAP estimators of $\boldsymbol \phi _s $ and $\sigma _ s ^2$ are given, respectively, by, 
\begin{equation}\label{map}
\widehat {\boldsymbol \phi_s} ^{\text {MAP}} = \mathbf m_s \  , \quad  \widehat {{\sigma _ s ^2 }}^ {\text {MAP}} = ( {2 \lambda + D_s}) / ( {2 \tau + |B_s| + 2} )    .
\end{equation}
\end{corollary}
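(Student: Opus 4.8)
The plan is to observe that the MAP estimator of each parameter is, by definition, the mode of its posterior distribution, and that these posteriors have already been identified in closed form in Lemma~\ref{lem:piAR}. Thus the entire statement reduces to computing the mode of a multivariate $t$-distribution and the mode of an inverse-gamma distribution, after which one simply substitutes the parameter values supplied by that lemma. No new integration or normalising constant is needed, since the hard analytic work was done in establishing Lemma~\ref{lem:piAR}.

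First I would handle the AR coefficients. By Lemma~\ref{lem:piAR}, the posterior $\pi(\boldsymbol\phi_s\mid T,x)$ is $t_\nu(\mathbf m_s, P_s)$, whose density depends on $\boldsymbol\phi_s$ only through the quadratic form $(\boldsymbol\phi_s-\mathbf m_s)^{\text{T}}P_s^{-1}(\boldsymbol\phi_s-\mathbf m_s)$ and is a strictly decreasing function of it. Since $P_s$ is positive definite, this quadratic form is uniquely minimised at $\boldsymbol\phi_s=\mathbf m_s$, so the density is maximised there, giving $\widehat{\boldsymbol\phi_s}^{\text{MAP}}=\mathbf m_s$ immediately.

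Next I would treat the noise variance. By Lemma~\ref{lem:piAR} the posterior of $\sigma_s^2$ is $\text{Inv-Gamma}(a,b)$ with $a=\tau+|B_s|/2$ and $b=\lambda+D_s/2$. Writing the log-density as $-(a+1)\log\sigma_s^2-b/\sigma_s^2$ up to an additive constant, differentiating in $\sigma_s^2$ and setting the result to zero gives the mode $b/(a+1)$; substituting $a$ and $b$ and clearing the factor of two yields $\widehat{\sigma_s^2}^{\text{MAP}}=(2\lambda+D_s)/(2\tau+|B_s|+2)$, exactly as claimed.

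I do not expect a genuine obstacle, as both mode computations are standard. The only point warranting a line of care is the interpretation of ``MAP'': the claimed estimators are the modes of the respective \emph{marginal} posteriors from Lemma~\ref{lem:piAR}, not coordinates of the joint posterior mode. I would remark that this causes no ambiguity for the location $\mathbf m_s$, since the normal--inverse-gamma conjugacy makes the conditional posterior of $\boldsymbol\phi_s$ given $\sigma_s^2$ Gaussian with mean $\mathbf m_s$ for \emph{every} value of $\sigma_s^2$, so $\mathbf m_s$ is simultaneously the conditional and marginal location and cannot conflict with the joint mode.
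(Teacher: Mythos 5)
Your proof is correct and follows essentially the route the paper intends: the corollary is stated as an immediate consequence of Lemma~\ref{lem:piAR}, with the MAP estimators read off as the mode $\mathbf m_s$ of the multivariate $t$ posterior and the mode $(\lambda+D_s/2)/(\tau+|B_s|/2+1)=(2\lambda+D_s)/(2\tau+|B_s|+2)$ of the inverse-gamma posterior, exactly as you compute. Your closing remark distinguishing the marginal posterior modes from the joint posterior mode (where the $\sigma_s^2$ coordinate would acquire an extra $p$ in the denominator) is a careful clarification the paper leaves implicit.
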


\subsection{Computational complexity and sequential updates} 
\label{compl}

Consider executing the GCTW algorithm for a time series $x_1^n$ of length $n$. For each observation~$x_i$, $1 \leq i \leq n$, exactly $D+1$ nodes of $T_{\text{MAX}}$ need to be updated (or created if they do not already exist in $T_{\text{MAX}}$),
corresponding to the discrete contexts 
of length $0, 1,\ldots , D$ preceding~$x_i$.
For each one of these nodes, only the quantities $\{|B_s|, s_1, \mathbf s_2 , S_3\}$ need to be updated, which can be done efficiently by just adding an extra term to each sum. After repeating this step for every $1 \leq i \leq n$, using Lemma~\ref{lem:PeAR} the estimated probabilities $P_e (s,x)$ can be computed for all nodes of $T_{\text{MAX}}$, whose size is bounded as a function of the number of observations~$n$. And since the recursive step of GCTW only performs operations on $T_{\text{MAX}}$, it follows that its overall complexity as a function 
of $n$ is only~$\mathcal{O}(n)$, i.e., {\em linear} in the length 
of the time series. 
The same argument applies to 
GBCT as well, showing that both algorithms are computationally very
efficient and scale well with large time series.

{\color{black}
Since, as described above, the size of $T_{\text{MAX}}$ is bounded by $nD+1$, 
it is easy to see (taking into account the number of operations required 
to compute $P_e (s,x)$ for each node and the recursive steps of the algorithms)
that the complexity of GCTW and GBCT is in fact
$\mathcal{O}\left (nD (m+p^3) \right )$. Importantly, 
this is linear in each of $n$, $m$, and $D$,
and it is actually only slightly higher than the 
$\mathcal{O}\left (np^2\right )$
complexity of fitting a single AR model using least squares.
This means that \textit{exact} Bayesian inference can be performed 
within this much richer model class at an only 
marginally higher 
computational cost.


The above argument also shows that
both GCTW and GBCT 
can be updated {\em sequentially},
as for every additional observation $x_{n+1}$ only $D+1$ nodes of $T_{\text{MAX}}$ need to be updated.
In particular, 
sequential prediction can be performed efficiently.
Empirical running times for all forecasting experiments are reported 
in Section~\ref{run_times} of the Supplemental Material,
showing that the BCT-X methods are much more 
efficient than essentially all the alternatives
examined. The difference is quite large,
especially compared to  ML models that require 
heavy training and cannot be efficiently updated sequentially, giving 
empirical running times that are typically larger by several 
orders of magnitude; a general review or related issues
is given in~\cite{makridakis:18b}.

Finally, it is possible to show that the memory requirements of both algorithms are also linear in the length $n$ of the time series.
Arguing as before, since we only need to store the tree 
$T_{\text{MAX}}$ and the quantities $\{|B_s|, s_1, \mathbf s_2 , S_3\}$ 
for each node of $T_{\text{MAX}}$,
it is easy to see that the memory requirements of both algorithms 
are $\mathcal{O}\left (nDp^2 \right )$, again linear in both $n$ and $D$. 
}

\subsection{Choosing the hyperparameters, quantiser and AR order} 
\label{hyp}

The posterior distributions of $\boldsymbol \phi _s$ and $\sigma_ s ^2$ 
are not particularly sensitive to the prior hyperparameters. In all the 
experimental results below, the simple choice $\mu_o =0$ and 
$\Sigma_o =I$ in the AR coefficients' prior is made. 
In view of~(\ref{10}), $\tau$~and~$\lambda$ 
should be chosen to be relatively small in order to minimise 
their effect on the posterior while keeping the mode of the 
inverse-gamma prior, $\lambda/(\tau + 1)$, reasonable; setting 
$\lambda = \tau =1$ is a sensible choice when no 
additional prior information is available. For the context-tree model
prior, 
the default value \mbox{$\beta= 1-2^{-m+1}$}~\cite{BCT-JRSSB:22} 
and a maximum context depth $D=10$ are adopted. These default values
are used in all the experiments, without involving any hyperparameter tuning from data.

Finally, a principled Bayesian approach is taken for
selecting the quantiser thresholds $\{c_i\}$ of~(\ref{quant}) 
and the AR order $p$. Viewing 
them as extra parameters on an additional layer above everything else,
we place uniform priors on $\{c_i\}$ and $p$, and perform 
Bayesian model selection~\cite{rasmussen:00,mackay:92}
to obtain their MAP values.
The resulting posterior $p(\{c_i\},p|x)$ 
is proportional to the \textit{evidence}~$p(x|\{c_i\},p)$, which can 
be computed exactly using the GCTW algorithm.
Specifically, a suitable range of possible 
$\{c_i\}$ and $p$ is specified, and the values 
with the higher evidence are selected. 
For the AR order we take $1\leq p \leq p_{\text{max}}$ for an 
appropriate $p_{\text{max}}$ ($p_{\text{max}}=5$ in our experiments), 
and for the $\{c_i\}$ we perform a grid search in a reasonable range 
(e.g., between the 10th and 90th percentiles of the~data).
In all forecasting experiments, the AR order and the quantiser thresholds 
are selected using the above procedure at the end of the training set.

\subsection{Alternative AR mixture models} 
\label{mar_comp}

\hspace{0.18in}
{\em Threshold AR models.}  
Threshold autoregressive (TAR) models~\cite{tong:80} have been 
used extensively in the analysis of nonlinear time series; see, 
e.g.,~\cite{tong:11,hansen:11} and 
the texts~\cite{cryer:book,tong:book}. 

{\color{black}
The most commonly used version of TAR models
is the self-exciting TAR (SETAR) model, which considers 
partitions of the state space based on the quantised value of $x_{n-d}$:
\begin{align}\label{setar}
\begin{split}
x_n =   \ \phi _ {0} ^ {(j)} +  \phi _ {1} ^ {(j)} \ x_{n-1} + \dots + \phi _ {p}^ {(j)} \ x_{n-p} + \sigma ^ {(j)}\ e_n , 
  \quad   \text{if} \;  Q(x_{n-d}) = j \in A   , 
\end{split}
\end{align}
where $ e_n \sim \mathcal N (0, 1)$, $p$ is the autoregressive order,  \mbox{$Q : \mathbb {R} \to A = \{0,\dots ,  m-1 \}$} is an $m$-ary quantiser of the form in (\ref{quant}), and $d$ is called the \textit{delay} parameter. In other words, the SETAR model class considers partitions of the state space based (only) on the value of $x_{n-d}$, with different parameters $(\boldsymbol \phi ^ {(j)}, \sigma ^ {(j)} )$ associated to each region.

It is clear from the above description that the BCT-AR model 
class is a strict generalisation of SETAR.
In specific, the BCT-AR model class always contains the SETAR models as specific instances, corresponding to particular trees in $\clT (D)$;
for example, any threshold model with delay parameter $d=1$ 
can be represented as a BCT-AR model with respect
to the full tree of depth $d=1$.
However, the BCT-AR class also contains other, more complicated tree-based 
partitions of the state space that cannot be represented as simple threshold models. For example,
asymmetric BCT-AR models with 
leaves at various depths define more complicated partitions 
that cannot be represented as linear combinations 
of threshold models. 
In fact, the BCT-X methodology can be viewed as a natural 
conceptual extension of the family of threshold models, which allows for a more systematic and powerful Bayesian way of breaking up the state space in possibly more -- and more complex -- regions, and then fitting a different time series model to each one of these regions.
}

\smallskip

{\em Mixture AR models.} The mixture autoregressive (MAR) models 
of~\cite{wong:00} consist of a simple 
linear mixture of~$K$ Gaussian AR components. 
When the BCT-AR posterior distribution essentially concentrates on $K$ models,
$T_1, \dots , T_K$, (which is both theoretically
`allowed' and is commonly observed in practice), 
the posterior predictive distribution can be expressed~as,  
\vspace*{-0.05cm}
\[
p(x_{n+1}|x) = \sum _{k=1} ^ K  \pi (T_k|x) \  p(x_{n+1}|T_k, x)   . 
\vspace*{-0.05cm}
\]
In this sense, BCT-AR can be viewed as a generalised MAR model, 
with components corresponding to the AR models at the leaves 
of each $T_k$, and Bayesian weights given by~$\pi (T_k|x) $. 
Therefore, the BCT-AR model class is a strict
generalisation of both 
MAR and SETAR. 

\smallskip

{\em Markov switching AR models.} The Markov switching autoregressive model 
(MSA)~\cite{hamilton:89} is a simple HMM, where the hidden state process is a 
discrete-valued first-order Markov chain (usually binary or ternary), 
and a different AR model is associated to each state. 
As the discrete states here are not observable~\cite{tsay:05},
the main difficulty 
in using the MSA is in estimating the model; 
the EM algorithm is usually employed 
for this task. This difficulty was also observed in practice 
in our experiments, 
where the MSA gave much larger running times compared to all other classical 
methods that do not involve neural networks 
(Supplemental Material, Section~\ref{run_times}).

\newpage

\section{BCT-AR: Experimental results} 
\label{experiments}

\vspace*{-0.05 cm}

{\color{black}
In this section, the performance of the BCT-AR model 
is evaluated on simulated and real-world datasets 
from standard applications of nonlinear time series 
in economics and finance. The complete descriptions of 
all datasets can be found in Section~\ref{list_of_data} 
of the Supplemental Material.
In all forecasting experiments, the training set consists 
of the first~50\% of the observations in each dataset, and we consider out-of-sample 1-step ahead forecasts, allowing every model to be updated at every timestep.
For the BCT-AR model, the current MAP tree model with its MAP estimated parameters 
is used at every timestep for forecasting.

Since,
in our experiments, the posterior $\pi(T|x)$ on model space
was typically found to concentrate on the MAP tree model with high posterior probability, 
this simple prediction rule already gives a reasonably good -- and computationally efficient -- approximation to 
the full Bayesian predictive distribution. In general,
however, it is noted that prediction based on model-averaging over trees $T$ 
and over parameters $\theta$, may also be easily implemented.

The BCT-AR model is compared with the most successful 
previous methods for these applications,
considering both classical and modern ML methods. 
Useful resources 
include the \texttt{R} package \texttt{forecast}~\cite{hyndman:08} and the 
Python 
library \mbox{`GluonTS'}~\cite{alexandrov:20}, 
containing implementations of state-of-the-art classical and ML methods, 
respectively. We briefly discuss the methods used, and refer to 
the packages' documentation and 
Section~\ref{train_details} of the Supplemental Material for more details on the methods and the training procedures carried out.
Among classical statistical approaches, we compare with 
ARIMA
and Exponential smoothing state space (ETS) models
(implemented 
in \texttt{forecast}), with
SETAR  -- using the conditional least squares (CLS) method implemented in the \texttt{R}~package 
\texttt{TSA}~\cite{tsa} -- and with MAR and MSA models, using the \texttt{R}~packages 
\texttt{mixAR}~\cite{mixar} and \texttt{MSwM}~\cite{mswm}.
Among ML-based techniques, we compare with 
the Neural Network AR (NNAR) model (implemented in \texttt{forecast}), and
with the most-successful RNN-based approaches, deepAR~\cite{salinas:20} and 
N-BEATS~\cite{oreshkin:19} -- both implemented in GluonTS.
}

\subsection{Simulated data} \label{sim}

\vspace*{-0.05 cm}

Here we use simulated data to 
illustrate that the BCT-X methods are consistent and 
effective with data actually generated by BCT-X models. 
A time series $x$ is simulated from the BCT-AR model with the
context-tree model $T$ in Figure~\ref{tree}, 
a binary quantiser with threshold $c=0$, 
and the following AR(2) base models at the three 
states defined 
by $T$:
\begin{align*} 
x_n \!=\! \left\{
\begin{array}{ll}
 0.7  \ x_{n-1} - 0.3  \ x_{n-2} + e_n,  \quad  e_n \sim \mathcal N (0, 0.15), \; \;  & \mbox{if} \ s = 1\text{:} \ \ \ x_{n-1}\geq 0, \\
-0.3  \ x_{n-1} - 0.2  \ x_{n-2} + e_n,  \quad  e_n \sim \mathcal N (0, 0.10),  \; \;    &\mbox{if} \ s = 01\text{:} \   x_{n-1}<0, \  x_{n-2}\geq 0, \\
 0.5 \ x_{n-1} + e_n,  \quad \quad \quad \quad \quad \quad e_n \sim \mathcal N (0, 0.05),  \; \; & \mbox{if} \ s = 00\text{:} \ x_{n-1}<0, \  x_{n-2}< 0.
\end{array}
\right.  
\end{align*} 

We first examine the posterior 
distribution $\pi(T|x)$ over $T\in\clT(D)$. With $n=100$ 
observations, the MAP context-tree model
identified by the GBCT algorithm is the `empty'
tree corresponding to a single AR model, with posterior probability~99.9\%. 
This means that the data do not provide sufficient evidence to support 
a more complex structure with multiple states. With $n=300$ observations, 
the MAP tree model is now 
the true underlying model, with posterior probability 57\%.
And with $n=500$ observations, the posterior of the true~model 
is~99.9\%. 
The complete BCT-AR model fitted from $n=1000$ observations with its MAP 
parameter estimates is shown in~(\ref{bctar_fitted}).
In Section~\ref{sim1} of the Supplemental Material
we~also report values of the 
evidence $p(x|c,p)$, which is maximised at the true values of $c=0$ and $p=2$.
\begin{align} \label{bctar_fitted}
x_n \!=\! \left\{
\begin{array}{ll}
 0.66  \ x_{n-1} - 0.19  \ x_{n-2} + e_n,  \quad  e_n \sim \mathcal N (0, 0.16), \; \;  & \mbox{if} \  x_{n-1}\geq 0, \\
-0.39  \ x_{n-1} - 0.27  \ x_{n-2} + e_n,  \quad  e_n \sim \mathcal N (0, 0.12),  \; \;    &\mbox{if}  \  x_{n-1}<0, \  x_{n-2}\geq 0, \\
 0.45 \ x_{n-1} - 0.03  \ x_{n-2}+ e_n,   \quad e_n \sim \mathcal N (0, 0.058),  \; \; & \mbox{if} \ x_{n-1}< 0, \  x_{n-2}\geq 0.
\end{array}
\right.  
\end{align}

\newpage

\noindent
{\bf Interpretation.}
Clearly, the three relevant states, the optimal quantiser,
and the correct AR order were all
identified without any prior training, based on only $n=1000$ observations.
Also, the posterior distribution of the AR parameters appears to
be well-concentrated around the true underlying values, verifying that all our inferential procedures are effective.

\medskip

\noindent
{\bf Forecasting.} 
The performance of the BCT-AR methods 
is evaluated in the task of out-of-sample 1-step ahead forecasts, 
and it is compared with state-of-the-art approaches in  
three simulated and three real datasets. 
The first simulated dataset~(\texttt{sim\_1}) consists of $n=600$ 
observations generated from the BCT-AR model used above,
the second~(\texttt{sim\_2}) has $n=500$ observations 
generated by a BCT-AR model with a ternary context-tree 
model of depth~2,
and the third~(\texttt{sim\_3}) consists of $n=200$ observations 
generated from a SETAR model of order $p=5$; see Section~\ref{list_of_data} 
of the Supplemental Material.

\begin{table}[!h]
  \centering
  \caption{Mean squared error (MSE) of forecasts with
simulated and real-world data.}
\label{t1}
\vspace*{-0.1 cm}
  \begin{tabular}{lccccccccccc}
\midrule
 & BCT-AR \hspace* { -0.2 cm} & ARIMA  \hspace* { -0.2 cm} & ETS \hspace* { -0.2 cm}  & NNAR \hspace* { -0.2 cm}  & DeepAR \hspace* { -0.2 cm}  & N-BEATS  \hspace* { -0.2 cm}  & MSA \hspace* { -0.2 cm}  & SETAR \hspace* { -0.2 cm}  & MAR \hspace* { -0.2 cm}  \\
 \midrule
 \texttt {sim\_1} \hspace* { -0.2 cm} &  \bf 0.131 & 0.150 & 0.178 & 0.143 & 0.148 & 0.232 & 0.142 & 0.141 & 0.151 \\
  \texttt {sim\_2} \hspace* { -0.2 cm} & \bf  0.035 & 0.050 & 0.054 & 0.048 & 0.061 & 0.112 & 0.049 & 0.050 & 0.064\\
  \texttt {sim\_3} \hspace* { -0.2 cm} & \bf 0.891  & 1.556 & 1.614 & 1.287  & 1.573 & 2.081 & 1.495 & 0.951 & 1.543 \\
  \midrule
  \texttt {unemp} \hspace* { -0.2 cm} & \bf 0.034 & 0.040 & 0.042 & 0.036 & 0.036 & 0.054 & \bf 0.034 & 0.038 & 0.037 \\
  \texttt {gnp} \hspace* { -0.2 cm} & \bf 0.324 & 0.364 & 0.378 & 0.393 & 0.473 & 0.490 & 0.353 & 0.394 & 0.384 \\
  \texttt{ibm} \hspace* { -0.2 cm} & 78.02 & 82.90 & 77.52 & 78.90 & \bf 75.71 & 77.90 & 81.68 & 81.07 & 77.02 \\
    \midrule
  \end{tabular}
\end{table}

The results in Table~\ref{t1} indicate that the BCT-AR model
outperforms all the alternatives in all three simulated experiments, 
achieving a mean-squared error (MSE) that is lower by between 7\% and 37\% 
compared to the second-best method. 
As discussed in Section~\ref{compl}
(see also Section~\ref{run_times} of the Supplemental Material
for more detailed results),
the BCT-X methods also outperform the alternatives in terms 
of empirical running times, by anywhere between one and three
orders of magnitude.
Next, we examine the performance of the BCT-AR methods 
in real-world applications from economics and finance.

\vspace*{-0.2 cm}

\subsection{US unemployment rate}

\vspace*{-0.1 cm}

An important application of SETAR models is in modelling the US 
unemployment 
rate~\cite{hansen:11,montgomery:98,tsay:05,rothman:98,koop:99}. 
As described in~\cite{montgomery:98,tsay:05}, the unemployment rate moves 
countercyclically with business cycles, and rises quickly but decays slowly, 
indicating nonlinear behaviour. Here, we examine the quarterly US unemployment 
rate in the period from 1948 to 2019 (dataset \texttt{unemp}, 
288 observations). Following~\cite{montgomery:98}, we consider the 
difference series \mbox{${\Delta x}_n = x_n-x_{n-1}$}, and also include 
a constant term in the AR model. For the quantiser alphabet size, 
$m=2$ is a natural choice here, as will become apparent below. The threshold 
selected using the procedure of Section~\ref{hyp} is $c=0.15$,
and the resulting MAP context-tree model
is the tree of Figure~\ref{tree}, with states 
$\mathcal {S} = \{ 1, 01, 00\}$ and posterior
probability~91.5\%. The complete BCT-AR model with its MAP parameters is given below, where $e_n \sim \mathcal N (0, 1)$,  
\[ 
{\Delta x}_n \!=\! \left\{
\begin{array}{ll}
  0.09 +  0.72  \ {\Delta x}_{n-1} - 0.30  \ {\Delta x}_{n-2} + 0.42 \ e_n,   \; \;  & \mbox{if} \  {\Delta x}_{n-1}\geq 0.15, \\
 0.04 + 0.29  \ {\Delta x}_{n-1} - 0.32  \ {\Delta x}_{n-2} + 0.32 \ e_n,  \; \;    &\mbox{if}  \  {\Delta x}_{n-1}< 0.15, \  {\Delta x}_{n-2}\geq 0.15, \\
 -0.02 + 0.34 \ {\Delta x}_{n-1} + 0.19  \ {\Delta x}_{n-2}+ 0.20 \ e_n,    \; \; & \mbox{if} \ {\Delta x}_{n-1}< 0.15, \  {\Delta x}_{n-2}< 0.15 .
\end{array}
\right.   
\]

\smallskip

\noindent
\textbf {Interpretation}. The MAP BCT-AR model finds significant structure in the data, providing a natural interpretation. It identifies 3 meaningful states: First, jumps in the unemployment rate higher than 0.15 signify economic contractions ($s=1$). If there is not a jump at the most recent time-point, the model looks further back to determine the state. 
The state $s=00$ signifies a stable economy, as there are no jumps in the unemployment rate for two consecutive quarters. Finally, $s=01$ identifies an intermediate state: ``stabilising just after a contraction''. An important feature identified by the BCT-AR model is that the volatility is different in each case. It is higher in contractions ($\sigma = 0.42$), smaller in stable economy regions ($\sigma = 0.20$), and 
in-between for state~01 ($\sigma = 0.32$). This is an important finding, verifying 
the economists' understanding that there is much higher uncertainty during
economic contractions. 

\medskip

\noindent
{\bf Forecasting.} In addition to its appealing interpretation, the 
BCT-AR model outperforms all benchmarks in forecasting (together with
MSA), achieving the lowest MSE (Table~\ref{t1}). In~terms of empirical running 
times, the BCT-AR model vastly outperforms MSA;
see  Section~\ref{run_times} of the Supplemental Material.

\subsection{US Gross National Product}
\label{s:USGNP}

 \begin{wrapfigure}{r}{0.33\linewidth}
  \begin{center}
  \vspace*{-0.85 cm}
    \includegraphics[width= 0.75 \linewidth, height= 0.4 \linewidth ]{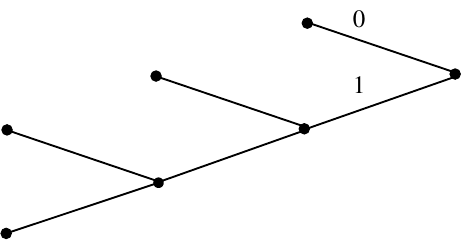}
  \end{center}
\vspace*{-0.5 cm}
\caption{ \color{black} MAP context-tree model for the GNP dataset.}
\label{tree_gnp}
\vspace*{-0.2 cm}
\end{wrapfigure}

Another important example of nonlinear time series in economics is the US 
Gross National Product~(GNP)~\cite{potter:95,hansen:11}. 
We examine the quarterly US GNP in the period from 1947 to~2019 
(dataset \texttt{gnp}, 291 observations). Following~\cite{potter:95}, 
here we consider the difference in the logarithm 
of the series, $y_n = \log x_n - \log x_{n-1}$. 
As above, $m=2$ is a natural choice for the quantiser size, helping to 
differentiate economic expansions from contractions, which govern the 
underlying dynamics. The MAP BCT-AR context-tree model is shown in 
Figure~\ref{tree_gnp}. It has maximum depth~3,
the states are $\mathcal {S} = \{0, 10, 110, 111 \}$,
and its posterior probability is~42.6\%. 
The complete set of MAP parameters at the leaves are shown in 
Section~\ref{data_gnp} of the Supplemental Material.

\medskip

\noindent
{\bf Interpretation.} Compared with the previous example, the 
MAP BCT-AR model identifies even richer structure in this dataset,
with four meaningful states. First, as before, there is a single 
state corresponding to an economic contraction -- which now corresponds 
to $s=0$ instead of $s=1$, as the GNP increases in expansions 
and decreases in contractions. And again, the model does not 
look further back whenever a contraction is detected -- it is 
a {\em renewal event}. In this example, the model shows that the effect of a contraction is still present even after {\em three} quarters ($s=110$),
and that the exact `distance' from a contraction is also important, with
the dynamics changing depending on how much time has elapsed.
Finally, the state $s=111$ corresponds to a flourishing, expanding economy.
An important feature 
captured by the model is again that the volatility is different in each 
case and, enhancing previous findings, it  is found to 
decrease with the 
distance from the last contraction. It starts 
at $\sigma =1.23$ for $s=0$ and decreases to $\sigma = 0.75$ 
for $s=111$ (see Section~{\ref{data_gnp}} of the Supplemental Material).

\medskip

\noindent
{\bf Forecasting.} 
As shown in Table~\ref{t1}, the BCT-AR model is found to outperform all benchmarks in forecasting in this dataset, likely due to 
the additional structure identified, giving
a significantly lower MSE than the second-best method
(by 9\%).

\begin{wrapfigure}{r}{0.3\linewidth}
  \begin{center}
  \vspace*{-2.1 cm}
    \includegraphics[width= 0.78 \linewidth, height= 0.4 \linewidth ]{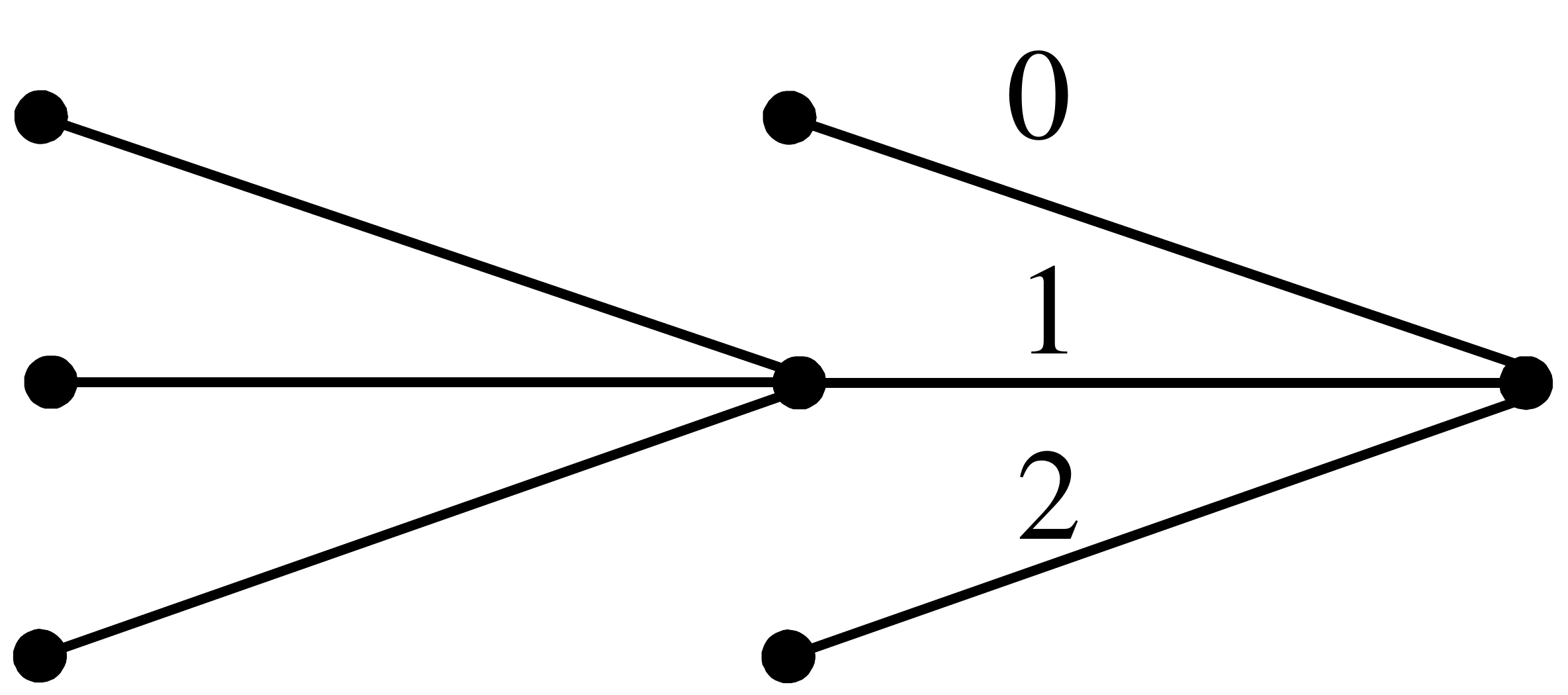}
  \end{center}
\vspace*{-0.5 cm}
\caption{\color{black} MAP context-tree model for the IBM dataset.}
\label{tree_ibm}
\vspace*{-0.3 cm}
\end{wrapfigure}

\subsection{IBM stock price}

We revisit the daily IBM common stock closing price~from~May~17, 1961 to  November~2, 1962 
(dataset \texttt{ibm}, 369 observations). 
This is a well-studied dataset, 
with~\cite{box:book} fitting an ARIMA model,
\cite{tong:book}~fitting a SETAR model, and
\cite{wong:00}~fitting a MAR model to the data. 
Following previous approaches, we consider the first-difference series, 
\mbox{${\Delta x}_n = x_n-x_{n-1}$}. For the alphabet size of the 
quantiser we choose $m=3$, with the values~$\{ 0,1,2\}$ naturally 
corresponding to the stock price changes \{down, steady, up\}.
Using the procedure of Section~\ref{hyp},
the resulting quantiser thresholds are $\{-7,+7\}$
and the AR order $p=2$.
The MAP context-tree model is shown in Figure~\ref{tree_ibm}. It has 
maximum depth 
equal to~2,
and it identifies five states,~$\mathcal {S} = \{0,2,10,11,12 \}$. Its posterior is 99.3\%, suggesting that there is very strong evidence in the data supporting this structure, even with only 369 observations. The complete BCT-AR model with its MAP parameters is given 
the Supplemental Material~(Section~\ref{data_ibm}).

\medskip

\noindent
{\bf Interpretation.} The BCT-AR model reveals important information about apparent structure in the data, which has not been identified before. Firstly, it admits a simple and natural interpretation: In order to determine the AR model generating the next value, one needs to look back until there is a significant enough price change (corresponding to the states 0, 2, 10,~12), or until reach the 
maximum depth of~2 is reached (state~11). Another important feature captured by this model is the commonly observed asymmetric response in volatility due to positive and negative shocks, sometimes called 
the {\em leverage effect}~\cite{tsay:05,box:book}. 
Indeed, the MAP model shows that negative shocks increase the volatility 
much more: the state $s=0$ has the highest volatility ($\sigma  =12.3$), 
with $s=10$ being a close second ($\sigma = 10.8$), showing that the effect 
of a past shock is still present. In all other cases the volatility 
is much smaller (between $\sigma = 5.17$ and $\sigma = 6.86$). 
Finally, we observe that when stabilising after a shock (states 10,~12), 
the latest value $x_{n-1}$ is not as important as~$x_{n-2}$, 
whereas $x_{n-1}$ is dominant in all other~cases.

\medskip

\noindent{\bf Forecasting.} From Table~\ref{t1}, it is observed that DeepAR outperforms all methods here, with
BCT-AR giving a marginally higher MSE but much smaller empirical running times;
see  Section~\ref{run_times} of the Supplemental Material.

\vspace*{-0,1 cm}

\section{BCT-ARCH: The Bayesian Context Trees Conditional Heteroscedastic Model} 
\label{bctgarch}

A key aspect of financial time series analysis is modelling 
the dynamic evolution of volatility over time. To capture the well-known 
{\em volatility clustering} present in financial data, 
Engle's seminal work introduced the autoregressive conditional 
heteroscedasticity~(ARCH) models~\cite{engle:82},
which, together with their numerous extensions, have been very 
widely used for modelling the volatility 
in financial time series. The perhaps more important limitation of 
ARCH models is their inability to describe the 
well-known asymmetric response in volatility due 
to positive and negative shocks -- 
the {\em leverage effect}~\cite{tsay:05,box:book}
mentioned in the previous section. A number 
of approaches have attempted to incorporate 
this feature in ARCH models, notably including the 
works of~\cite{nelson:91,glosten:93,zakoian:94,gray:96}; see 
Section~\ref{garch_comp} for more details.

In this section, as a second example of the general BCT-X class, 
we associate a different ARCH model to each state of the context-tree model, 
and refer to the resulting mixture model class as 
the {\em Bayesian Context Trees ARCH} 
(BCT-ARCH) model. The BCT-ARCH model is of high practical 
interest as 
it offers a systematic and powerful way of modelling  
volatility asymmetries.
As shown in Section~\ref{exp_arch},
it is found to outperform 
previous approaches in real examples.
Moreover, it reveals important structure present in the data 
that not been identified before, in the form of an 
{\em enhanced leverage effect}.

\vspace*{-0,1 cm}

\subsection{Bayesian modelling}

Given a quantiser $Q:\RL\to A=\{0,1,\ldots,m-1\}$, a fixed ARCH order $p$,
and a maximum context
depth $D\geq 0$, the BCT-ARCH model is the version of 
BCT-X with ARCH$(p)$ models as base modes.
With each potential state $s$ of
a context-tree model $T\in\clT(D)$
we associate an ARCH$(p)$ model,
\begin{align}
    x_n \sim {\mathcal N} \left (0, \sigma _ n ^ 2 \right ) , \; \; \; \; 
    \sigma _ n ^ 2  = \alpha _ {s,0} + \alpha _{s,1} x_{n-1} ^ 2 + \dots + \alpha _{s,p}x_{n-p} ^ 2 = \boldsymbol \alpha _ s ^{\text{T}} \ \mathbf{ { z } } _{n-1} ,
\end{align}
where $\boldsymbol \alpha _ s = \left (  \alpha _ {s,0} , \alpha _{s,1}  , \dots , \alpha _{s,p} \right )^{\text{T}}$ and $\mathbf{ { z } } _{n-1} = (1, x_{n-1} ^ 2,\dots, x_{n-p} ^ 2 )^{\text{T}}.$ 

\smallskip

Here, the parameters corresponding to each state $s$ 
are the ARCH coefficients, 
$\theta _ s = \boldsymbol \alpha _ s $. 
We follow~\cite{vrontos:00} and use the following non-informative 
priors: $\pi ( \alpha _ {s,0}) = 1 / \alpha _ {s,0}$, 
and $\pi ( \alpha _ {s,j})\sim U(0,1)$ for $1\leq j\leq p$.
Note that no hyperparameters need to be selected for this prior.

\subsection {Estimated probabilities and inference}
\label{s:Pest}

Our main tools for inference --
the 
GCTW and GBCT 
algorithms 
-- rely on the evaluation of the
estimated probabilities $P_e(s,x)$ 
given in~(\ref{pes}). However, in this case these are not available
in closed form. In fact, exact Bayesian inference is not possible even 
with a single ARCH model, and
MCMC approaches are typically
adopted for this task; see~\cite{virbickaite:15} for a review.
Therefore, we will employ effective approximations for 
the $P_e(s,x)$ and use them in GCTW and GBCT;
the remaining methodology remains the same as before. 

As noted earlier, the fact that $P_e(s,x)$ is in the form of 
marginal likelihoods allows us to use standard methods to approximate 
them. Specifically, at each node $s$ of $T_{\rm MAX}$ we
use a variant of the Laplace method described in~\cite{kass:95}, 
which was found to be effective in the case of a multivariate GARCH model 
in~\cite{vrontos:03}.
The resulting approximation is:
\begin{equation} \label{laplace}
    \widehat {P_e} (s,x) = (2 \pi ) ^ {p+1/2} \  | \widehat {I} _ s | ^ {1/2} \ \exp \big \{ L _ s  ( \widehat {\theta} _ s  ) \big \} \  \pi  ( \widehat {\theta} _ s  ).
\end{equation}
Here, 
$ L _ s  (  {\theta} _ s  )$ is the log-likelihood of data with context $s$,
\begin{equation} 
     L _ s  (  {\theta} _ s  ) =  \sum _{ i \in B_s} \log p (x_i | x_{-D+1} ^ {i-1} , \theta _ s ) ,
\end{equation}
$\widehat {\theta} _ s$ is its maximiser, which can be computed iteratively using the Fisher scoring algorithm, 
\begin{equation} 
\label{fisher}
    \widehat {\theta} _ s ^ { \ (k)} = \widehat {\theta} _ s ^ { \ (k-1)} + \widehat {I} _ s ^ {-1} \ {\frac{\partial L_s} {\partial \theta _s}  \bigg | _ {\theta _ s  = \  \widehat {\theta} _ s ^ { \ (k-1)}}} \; \; \; \; ,
\end{equation}
and $\widehat {I} _ s$ the expected information matrix  computed at $\widehat {\theta} _ s ^ { \ (k-1)}$. These quantities for the BCT-ARCH model are 
given in Lemma~\ref{lem:ARCH}, that is proven in Section~\ref{appB3} of the Supplemental
Material.

\begin{lemma}
\label{lem:ARCH}
For the BCT-ARCH model, the terms 
$L_s (\theta _ s), \partial L _s / \partial \theta _ s$,  and $ \widehat {I} _ s $ are given by,
\begin{align} \label{bctarch_logl}
& L_ s ( \theta _ s ) = -  \frac {| B _ s |} {2}  \log (2 \pi)-  \frac {1}{2} \sum _ {i \in B _s} \left ( \log \sigma _ i ^ 2 + \frac {x _ i ^ 2}  {\sigma _i ^ 2} \right ) , \\
& \frac{\partial L_s} {\partial \theta _s} =   \frac {1}{2} \sum _ {i \in B _s}  \frac {1}  {\sigma _i ^ 2} \left (\frac {x _ i ^ 2}  {\sigma _i ^ 2} -1 \right )  \mathbf{ { z } } _{i-1} , \label{bctarch_dl} \\
 & \widehat {I} _ s =  \left \{  - \mathbb {E} \left (  \frac {\partial ^ 2 L _s}{ \partial \theta _ s ^ 2 }\right ) \right  \} =  \frac {1}{2} \sum _ {i \in B _s}  \left ( \frac {1}  {\sigma _i ^ 4 } \right )  \mathbf{ { z } } _{i-1}   \mathbf{ { z } } _{i-1} ^ {  \text {T}}, \label{bctarch_mat}
\end{align}
where $B_s$ is as in {\em (\ref{lik})}, and $\sigma _ i ^ 2 = \theta _ s ^ {  \text {T}}  \ \mathbf{ { z } } _{i-1}$.
\end{lemma}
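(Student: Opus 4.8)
The plan is to derive all three identities by direct differentiation of the Gaussian ARCH log-likelihood, handling them in the order in which they are stated. I would begin with the log-likelihood. Under the ARCH$(p)$ specification, conditionally on the past each observation $x_i$ with context $s$ is $\mathcal{N}(0,\sigma_i^2)$, where $\sigma_i^2 = \theta_s^{\text{T}}\mathbf{z}_{i-1}$, so its conditional log-density is $-\tfrac12\log(2\pi)-\tfrac12\log\sigma_i^2 - x_i^2/(2\sigma_i^2)$. Summing this over $i\in B_s$ and grouping the $|B_s|$ identical constant terms yields~(\ref{bctarch_logl}) at once.

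For the score~(\ref{bctarch_dl}), I would differentiate $L_s$ with respect to $\theta_s=\boldsymbol\alpha_s$, exploiting the fact that $\sigma_i^2$ is \emph{linear} in the parameters, so that $\partial\sigma_i^2/\partial\theta_s = \mathbf{z}_{i-1}$. By the chain rule the $\log\sigma_i^2$ term in each summand contributes $-\tfrac12(1/\sigma_i^2)\mathbf{z}_{i-1}$ and the $x_i^2/\sigma_i^2$ term contributes $+\tfrac12(x_i^2/\sigma_i^4)\mathbf{z}_{i-1}$; factoring out $1/(2\sigma_i^2)$ gives the stated expression.

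The expected information matrix~(\ref{bctarch_mat}) is the only step needing genuine care. Differentiating the score a second time, again through $\partial\sigma_i^2/\partial\theta_s=\mathbf{z}_{i-1}$, produces a Hessian whose $i$-th summand is $\tfrac12\big(1/\sigma_i^4 - 2x_i^2/\sigma_i^6\big)\mathbf{z}_{i-1}\mathbf{z}_{i-1}^{\text{T}}$. The crux is then the expectation: conditioning on the $\sigma$-field generated by the past, both $\mathbf{z}_{i-1}$ and $\sigma_i^2$ are measurable, hence treated as fixed, while $\mathbb{E}[x_i^2\mid\cdot]=\sigma_i^2$ because $x_i$ is centred Gaussian with variance $\sigma_i^2$. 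This replaces $x_i^2/\sigma_i^6$ by $1/\sigma_i^4$, collapsing the bracket to $-1/\sigma_i^4$; negating the expected Hessian then gives~(\ref{bctarch_mat}).

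The main obstacle is simply to organise this conditioning argument correctly via the tower property: one must not treat $\sigma_i^2$ as a global constant, but rather recognise that the expectation over $x_i$ is taken conditionally on the past, under which $\sigma_i^2$ and $\mathbf{z}_{i-1}$ are known. Once this is set up cleanly, all the remaining manipulations are routine calculus.
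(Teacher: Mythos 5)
Your proposal is correct and follows essentially the same route as the paper's proof: direct differentiation of the Gaussian ARCH log-likelihood, using the linearity $\partial\sigma_i^2/\partial\theta_s=\mathbf{z}_{i-1}$ for the score, and then the second derivative plus expectation for $\widehat{I}_s$. In fact you spell out the one step the paper leaves implicit (the Hessian summand $\tfrac12(1/\sigma_i^4-2x_i^2/\sigma_i^6)\mathbf{z}_{i-1}\mathbf{z}_{i-1}^{\text{T}}$ and the conditional expectation $\mathbb{E}[x_i^2\mid\text{past}]=\sigma_i^2$), which is a welcome clarification rather than a departure.
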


\textcolor{black}{
It is noted that another general alternative to the Laplace method for approximating the integrals of~(\ref{pes}) would be to employ MCMC 
as previously used in the literature to approximate marginal likelihoods; see, e.g.,~\cite{chib:95,chib:01}. 
}
Once the approximate values $\widehat {P_e} (s,x)$
of the estimated probabilities
$P_e (s,x)$ have been computed as outlined, inference
can be carried out exactly as in the case
of the BCT-AR model. In particular,
the GCTW, GBCT and $k$-GBCT algorithms can be used with these approximations for $P_e (s,x)$, and
the selection of the ARCH order $p$ and the 
quantiser thresholds $\{c_i\}$ can be performed
in the same way as before, using the method
described in Section~\ref{hyp} for the BCT-AR model.

\smallskip

In all the experiments presented in Section~\ref{exp_arch},
we take the number of iterations $M$ in~(\ref{fisher}) to be equal to~10;
see Section~\ref{app_sim_arch} of the Supplemental Material
for a discussion of this choice.

\subsection{Computational complexity}
\label{s:compl2}


Consider, as before, executing the GCTW or GBCT algorithm for a time series $x_1 ^n$. Assuming for the moment that
only one iteration of the Fisher scoring algorithm
is performed for each node $s$ of $T_{\text{MAX}}$,
the situation is the same with the BCT-AR case:
For each observation $x_i$, $1 \leq i \leq n$, exactly $D+1$ nodes need to be updated, 
and for each one of these nodes, only 
the quantities in (\ref{bctarch_logl})--(\ref{bctarch_mat}) need to be 
updated, which can be done efficiently by just adding one term to each 
sum. The only difference in the BCT-ARCH case is that this forward 
pass of the data for $1 \leq i \leq n$ now needs to be repeated for every iteration of 
the scoring algorithm, 
each time followed by the Fisher updates of (\ref{fisher}) for 
the nodes of $T_{\text{MAX}}$. Finally, the resulting estimates $\widehat {\theta} _ s$ can be used to compute the approximation $ \widehat {P_e} (s,x)$ of (\ref{laplace}) for all nodes $s$ of $T_{\text{MAX}}$, while the recursive step  remains identical with before. 

{
\color{black}
Arguing as in Section~\ref{compl}, and denoting the number 
of Fisher iterations as $M$, it is not hard to show that
the overall complexity 
of  the GCTW and  GBCT algorithms
for the BCT-ARCH model
is~$\mathcal {O} (nD (m+Mp^3))$.
Importantly, this is still linear in all $n$, $m$, and $D$, and is actually
only slightly higher than the $\mathcal {O} (nMp^2)$ complexity 
of fitting 
a single ARCH model, showing that, again, 
inference in this much richer model class can be 
performed at a negligible 
additional computational cost.
Also, it is easy to see that the memory requirements of the algorithms are again $\mathcal {O} (nDp^2)$, i.e., linear in both $n$ and~$D$.
}

\subsection{Alternative methods} 
\label{garch_comp}

In Section~\ref{exp_arch} we will compare the performance of the 
BCT-ARCH model with that of some of the most successful and 
commonly used alternative methods for modelling volatility.
A brief summary of these methods is given here.
Additional information can be found
in the \texttt{R}~packages \texttt{rugarch}~\cite{rugarch}, 
\texttt{MSGARCH}~\cite{gray:96}, and \texttt{stochvol}~\cite{HK:21},
and in Section~\ref{train_details} of the Supplemental Material.

\medskip

{\em GARCH and EGARCH models.} The most widely used extension of 
ARCH models are the Generalised ARCH (GARCH) models of~\cite{bollerslev:86}. 
The GARCH$(p,q)$ model is given by,
\begin{align}
    \sigma _ n ^ 2  = \alpha _ {0} + \sum _ {i=1} ^ p \alpha _{i} x_{n-i} ^ 2 + \sum _ {i=1} ^ q \beta _{i} \sigma _{n-i} ^ 2 ,
\end{align}
with the simple GARCH$(1,1)$ model being the most popular
in practice. An important extension of the GARCH model is the Exponential 
GARCH (EGARCH) model of~\cite{nelson:91}, where the parametrisation is
in terms of the logarithm of $\sigma _ n ^2$.

{\em GJR models.} A common alternative to GARCH,
aimed at explicitly capturing volatility asymmetries,
is the threshold model of Glosten, Jagannathan, and Runkle 
(GJR)~\cite{glosten:93}, given by,
\begin{align}
    \sigma _ n ^ 2  = \alpha _ {0} + \sum _ {i=1} ^ p 
\left ( \alpha _{i} + \gamma _ i \IND_{\{x_{n-i}<0\}}\right ) x_{n-i} ^ 2 
+ \sum _ {i=1} ^ q \beta _{i} \sigma _{n-i} ^ 2 ,
\end{align}
where the indicator function forces negative returns to have higher 
volatility. A similar 
model parametrised in terms of the standard deviation instead of the 
variance is given in~\cite{zakoian:94}.
\textcolor{black}{From the above description it is easy to see that
the GJR model can be viewed as a particular case of a threshold GARCH model.
So, the BCT-ARCH model class is more general than GJR,
since it allows for more complicated partitions of the state space, and it is hence expected to perform better 
in practice; see also the corresponding and more detailed discussion and comparison with threshold models that is carried oout for the BCT-AR model in Section~\ref{mar_comp}.}

\medskip

{\em MSGARCH models.} The structure of the Markov switching 
GARCH (MSGARCH)~\cite{gray:96,haas:04} model is identical
to that of MSA, except that
GARCH models replace the AR models associated to 
each discrete hidden state. As with the MSA, performing inference 
in this setting is much more challenging compared 
to the previous approaches. 

\medskip

{\em SV models.} An alternative to the ARCH family -- which 
models the evolution of volatility deterministically -- is the family 
of {\em stochastic volatility} (SV) models, which model the volatility 
as a random process. This is 
usually done using an HMM where the hidden state is the logarithm 
of the variance,  modelled 
as an AR process; see~\cite{shephard:09,HK:21} for more details. 
As with MSGARCH, the randomness 
present in the hidden state process makes inference much more 
challenging and computationally more expensive compared to 
the other approaches. 

\vspace*{-0.2 cm}

\section{BCT-ARCH: Experimental results} 
\label{exp_arch}

\vspace*{-0.15 cm}

\subsection{Simulated data}
\label{s:simulated}

We begin by examining the performance of the BCT-ARCH inference methods 
on data generated from a model within this class.
Consider a time series $x$ consisting of observations 
simulated from a BCT-ARCH model where the context-tree
model is the binary tree of depth~1 with states $\mathcal {S} = \{ 0,1 \}$,
and with the associated ARCH(2) base models:
\begin{equation} 
\sigma _n ^ 2 \!=\! \left\{
\begin{array}{ll}
 0.10 + 0.20  \ x_{n-1} ^ 2 +  0.20  \ x_{n-2} ^ 2,\;\;   
	& \mbox{if} \  x_{n-1}\leq 0, \\
 0.10 + 0.20  \ x_{n-1} ^ 2  ,\;\;
	& \mbox{if} \  x_{n-1}>0.
\end{array}
\right.
\label{eq:MAPs1}
\end{equation}
This is an intentionally difficult example, with two very similar ARCH models 
in the two~regions.
Let the maximum context depth $D=5$ and consider a binary quantiser
with threshold $c$.
With $n=1000$ observations the MAP context-tree model is the `empty' tree, 
corresponding to a single ARCH model, as there are not enough data 
to reveal a more complex structure. With $n=2500$ observations, 
the MAP context-tree model is now the true underlying model 
with a posterior 
probability of 42\%, 
and with $n=5000$ observations the posterior probability
of the true model becomes 90\%.
The parameter estimates from $n=5000$ observations are,
\begin{equation} 
\widehat {\sigma} _n ^ 2 \!=\! \left\{
\begin{array}{ll}
 0.10 + 0.20  \ x_{n-1} ^ 2 +  0.16  \ x_{n-2} ^ 2, \; \;  
	& \mbox{if} \  x_{n-1}\leq 0, \\
 0.10 + 0.21  \ x_{n-1} ^ 2  +  0.02  \ x_{n-2} ^ 2 , \; \;  
	& \mbox{if} \  x_{n-1}>0.
\end{array}
\right. 
\label{eq:MAPs2}
\end{equation}

Here, the ARCH order $p=2$ and the quantiser threshold $c=0$
were selected as described in Section~\ref{s:Pest}.
More extensive results on simulated data are shown in 
Section~\ref{app_sim_arch} of the Supplemental Material,
illustrating that the posterior probability of the true underlying 
model converges to~1, and that the estimates of the parameters converge 
to the true parameter values. 

\subsection{Volatility in financial indices}
\label{s:volatility}

In this section, the BCT-ARCH framework is used to model 
the volatility on four real-world datasets corresponding to time
series describing four major financial indices. In each example, 
the $n=7821$ daily values of a stock market index are examined, 
during a period of thirty years ending on 7 April 2023. 
Similarly to Section~\ref{s:USGNP}, we examine the 
transformed time series, $y_n = 10  [\log x_ n -\log  x_{n-1}]$.
In order to explicitly capture the leverage effect and distinguish 
between positive and negative shocks, a binary quantiser
with threshold $c=0$ is used. The two ``memory length'' parameters,
namely, the maximum context depth
$D$ and the ARCH order $p$, are both taken to be equal to~5,
corresponding to a week of trading days.

\medskip

\noindent 
{\bf Results: New structure.} The most important feature captured by the 
BCT-ARCH model used on stock market index data
is an {\em enhanced leverage effect}. The fitted models suggest that,
in understanding the asymmetries present in volatility, 
it is not only the sign of the most recent value-change that is relevant, 
but the exact pattern of recent `ups' and `downs' is also important. 
These relevant patterns are 
identified from the data as the {\em states} given by the leaves
of the fitted MAP context-tree model.
In all four cases examined here,
that set of relevant states was found to be 
richer than $ \mathcal {S}=\{0,1\}$, 
which would correspond to just the sign of the previous change as in 
traditional leverage modelling. This strongly suggests the conclusion
that modelling such enhanced leverage is required in practice, 
and that the BCT-ARCH model reveals this essential structure that 
has not been identified before.

\medskip

\noindent
{\bf FTSE~100.} 
First we examine the Financial Times Stock Exchange 100 
Index, which is the most commonly used UK-based stock market indicator, 
including 100 companies listed on the London Stock Exchange. 
The fitted BCT-ARCH model exhibits the enhanced leverage effect, 
as it identifies the relevance of three meaningful states.

The MAP 
context-tree 
model given by the GBCT algorithm is shown in Figure~\ref{fig:bctarch}; 
it has depth~2 and three leaves, $\mathcal {S} = \{ 0, 10, 11\}$. 
Its posterior probability is~95.2\%, signifying that 
there is very strong evidence 
in the data supporting this exact structure. 
State $s=0$ corresponds to a negative shock at the last timestep, $s=10$ to stabilising just after a negative shock whose effect is still present, and $s=11$ to a flourishing period. 
The complete BCT-ARCH model is given by,  \vspace*{-0.1 cm}
\begin{align*}
     {\sigma} _n ^ 2 \!=\! \left\{
\begin{array}{ll}
 0.00 + 0.21 \ y_{n-1} ^ 2 +  0.16  \ y_{n-2} ^ 2 +  0.21  \ y_{n-3} ^ 2 +  0.18  \ y_{n-4} ^ 2 +  0.13  \ y_{n-5} ^ 2, \; \;  & \mbox{if} \  s=0, \\
 0.00 + 0.02 \ y_{n-1} ^ 2 +  0.19  \ y_{n-2} ^ 2 +  0.21  \ y_{n-3} ^ 2 +  0.15  \ y_{n-4} ^ 2 +  0.12  \ y_{n-5} ^ 2, \; \;  & \mbox{if} \  s=10, \\
 0.00 + 0.00 \ y_{n-1} ^ 2 +  0.10  \ y_{n-2} ^ 2 +  0.12  \ y_{n-3} ^ 2 +  0.09  \ y_{n-4} ^ 2 +  0.11  \ y_{n-5} ^ 2, \; \;  & \mbox{if} \  s=11.
\end{array}
\right.  \vspace*{-0.05 cm}
\end{align*}

Among the three states,
the ARCH coefficients of state $s=11$ (where no negative
shocks are present) are the smallest for all
five lags. 
States $s=0$ and $s=10$ have very similar coefficients except for 
the one corresponding to the first lag, $y _{n-1} ^2$, which is 
essentially zero for $s=10$, i.e., 
when it corresponds to an increase in value.
These observations are consistent 
with the common understanding that negative shocks lead 
to greater increases in volatility.

\begin{figure}[h!]
\vspace*{-0.1 cm}
\vspace*{-0.05 in}
\centering
\includegraphics[width= 0.25 \linewidth, height= 0.16 \linewidth ]
	{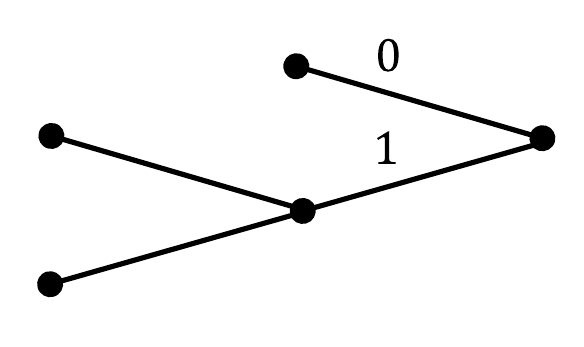}~
\hspace*{0.3 cm} 
\includegraphics[width= 0.33 \linewidth, height= 0.18 \linewidth ]
	{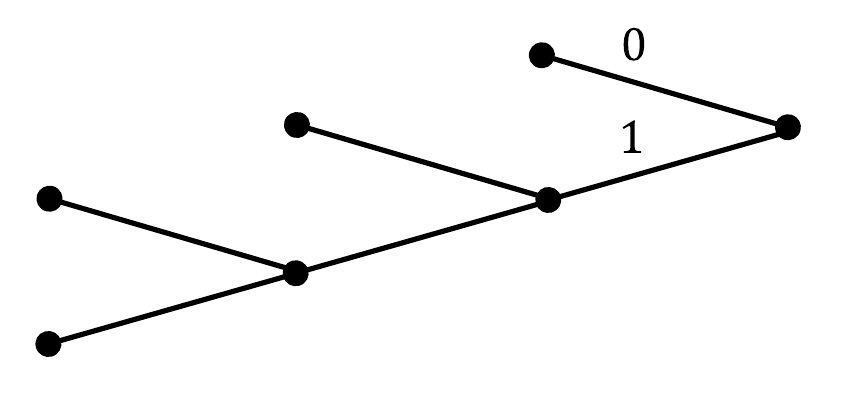}~
\includegraphics[width= 0.33 \linewidth, height= 0.2 \linewidth ]
	{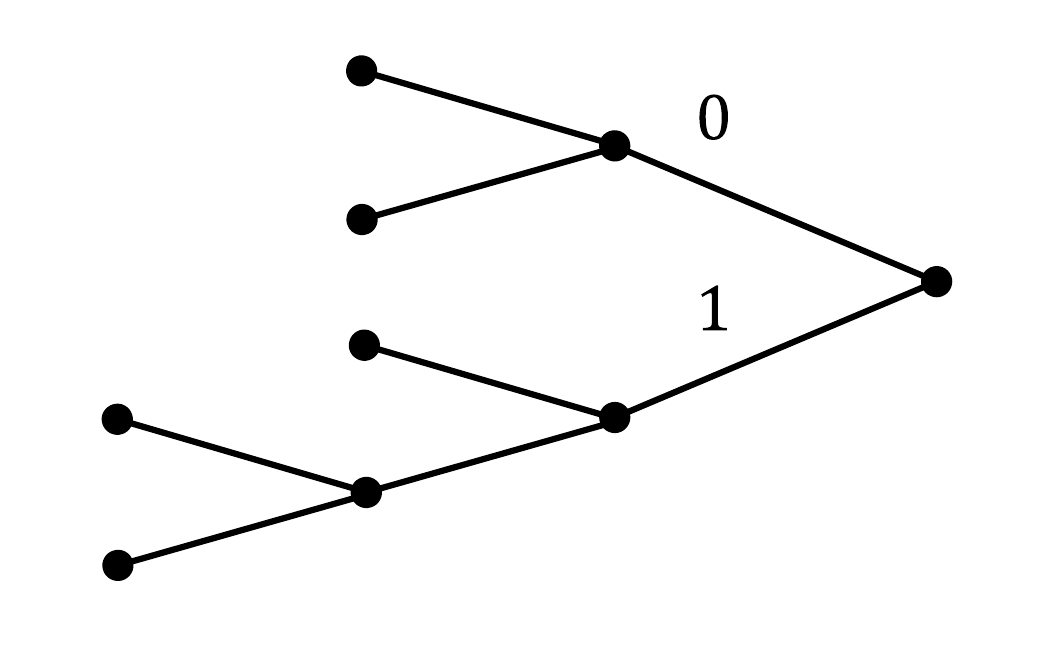}
\vspace*{-0.10in}
\caption{MAP context-tree models for major stock market indices: 
FTSE~100 (left), CAC~40 and DAX (middle), S\&P~500 (right).}
 \label{fig:bctarch}
 \vspace*{-0.10 in}
\end{figure}

\medskip

\noindent
{\bf CAC~40 and DAX.} Next we examine
CAC~40 and DAX, the main stock market indices in 
France and Germany, respectively, each one consisting of 40 major 
companies. The model fitted to 
the CAC and DAX time series finds
very similar structure present in the two datasets:
The same MAP context-tree model is identified by the GBCT 
algorithm in both cases~(Figure~\ref{fig:bctarch}). 
It has depth~$3$ and four leaves, 
$\mathcal {S} =  \{ 0, 10, 110, 111\} $. 
The estimated ARCH coefficients for the two models
are given 
in Sections~\ref{appc8} and~\ref{appc9} of the Supplemental Material.

The interpretation of this result is similar to that in the case 
of FTSE, with negative shocks again playing the
role of a `renewal' event:
In order to determine the distribution of the 
current state, the model looks back into the past 
until the first time a negative shock is detected. 
The only difference is in the 
memory of the discrete state process, with negative shocks 
now being relevant even 
if they occur {\em three} timesteps in the past. This gives a total 
of four different states, meaning that some additional structure has 
been identified. In each case, the ARCH coefficients corresponding
to lags with positive changes are small: 
For example, for $s=10$ the coefficient $\alpha _ 1 \approx 0$, 
while for $s=110$ both $\alpha _ 1 $ and~$\alpha _ 2$ are small. 
Overall, the BCT-ARCH model again exhibits the enhanced leverage effect,
and it gives strong evidence of a rich asymmetric 
response in volatility, with negative shocks having a stronger effect. 

\medskip

\noindent
{\bf S\&P~500.} The last index we examine
is Standard and Poor's 500 (S\&P~500), one of the most commonly 
followed indices worldwide, consisting of 500 of the largest companies 
in the~US. Here, the MAP context-tree model 
(Figure~\ref{fig:bctarch}) 
again displays the enhanced leverage effect, 
and it reveals even more structure
compared to the previous cases.  It has 
depth~$3$, five leaves, $\mathcal {S} = \{ 00, 01, 10, 110, 111 \}$, 
and a posterior probability 
of $91.4 \%$. In fact, the MAP context-tree model is the same tree 
as for CAC and DAX, but with one additional 
branch added at $s=0$. 
Apart from identifying slightly more structure, the 
interpretation of the BCT-ARCH model is very similar with before. 
The values of the estimated ARCH coefficients 
(reported in Section~\ref{appc10} of the Supplemental Material) are 
small when they correspond to lags with positive changes, 
again describing an asymmetric volatility response.

\vspace*{-0.2 cm}

\subsection{Forecasting performance} \label{arch_forecasting}

In this section, the performance of the BCT-ARCH model
in forecasting is illustrated and contrasted with that of 
the alternative methods 
outlined in Section~\ref{garch_comp}. 
As the volatility is not directly observable, effectively comparing 
different models is known to be a challenging~task~\cite{tsay:05}. 
Following standard approaches~\cite{geweke2010comparing,vrontos:03b,dellaportas:07}, 
in order to measure the relative predictive ability of the models, 
we consider the predictive distributions in 1-step ahead 
out-of-sample forecasting. 

Following~\cite{dellaportas:07}, the last 130 observations 
are taken as the test set in each of the four datasets,
corresponding to the trading days during a period
of six months in each case; the first 7691~observations are used 
as the training set.
\textcolor{black}{
At every timestep~$i$ in the test set,
all models are estimated using the entire past,
and the resulting predictive density 
$\widehat {p}  ( y _i | y_{1} ^ {i-1}  )$
is evaluated at the next test datapoint, $y_i$.
The complete details of the training process employed for each method 
are given in Section~\ref{train_details} of the Supplemental Material.
For the BCT-ARCH model, as in the case of the BCT-AR model
earlier, we use the MAP tree model with its MAP estimated parameters 
for forecasting. 
Also, it is noted that, throughout this section, the threshold of the 
quantiser is still fixed at $c=0$ in order to explicitly differentiate 
between positive and negative shocks; but in terms of forecasting 
performance, the results of BCT-ARCH could possibly be further improved 
by also estimating the threshold from data in the usual way.}

{\color{black}
As is standard practice~\cite{geweke2010comparing,vrontos:03b}, we examine 
the logarithm of the predictive density, i.e.,
$$ \vspace*{-0.2 cm} \mathcal {L} = \sum _ {i  } \log 
\widehat {p}  ( y _i | y_{1} ^ {i-1}  ), \vspace*{-0.0 cm} $$
evaluated over all datapoints $y_i$ in the test set, in  one-step ahead out-of-sample forecasts.  
}

\newpage

\begin{table}[!h]
  \centering
  \caption{ \color{black} Comparing the predictive ability of volatility models in terms
	of the log predictive density.}
  \vspace{-0.15 cm}
\label{table_arcH_real}
  \begin{tabular}{lccccccc}
\midrule
 & BCT-ARCH  & \; ARCH   &  \; GARCH  & \; GJR  & \; EGARCH   &  \;  MSGARCH  & \; \ SV  \\
 \midrule
 \texttt {ftse} & $ \bf 161.9$ & $157.7$ & $154.5$ & $159.7$ & $159.0 $& $159.7$ & $154.4$ \\
  \texttt {cac40} & $ \bf 112.5$ & $108.6$ & $108.7$ & $111.0$ & $ 112.4 $& $109.2$ & $106.9$ \\
  \texttt {dax} & $ \bf 111.7$ & $105.9$ & $105.4$ & $106.4$ & $107.5 $& $106.1$ & $103.2$  \\
  \texttt {s\&p} & $  78.73$ & $74.89$ & $81.04$ & $83.89$ & $ \bf 84.58 $& $80.95$ & $80.16$ \\
    \midrule
  \end{tabular}
  \vspace{-0.2 cm}
\end{table}

The results for the four stock market indices are presented 
in Table~\ref{table_arcH_real}. The BCT-ARCH model 
is seen to outperform all the alternatives in all examples, the only exception being  
the S\&P index data. Because of its low computational complexity 
and  efficient sequential updates, 
the BCT-ARCH model also outperforms all the alternatives in terms 
of its computational requirements, giving empirical running times 
that anywhere between one and three orders of magnitude smaller than those of the alternatives;
see Section~\ref{run_times} of the Supplemental Material.


\vspace*{-0.15 cm}

\subsection{Statistical significance tests}
\label{sec:stat_sig}

Finally, we further validate the findings of the previous section, namely that the BCT-ARCH model consistently outperforms the alternatives in forecasting the volatility of stock market indices  -- mainly because of its ability to model asymmetries in a more flexible and systematic way. In this section we repeat the above experiment for a number of important stock market indices and test for statistical significance of the results. 

\begin{table}[!h]
  \centering
  \caption{ \color{black} Comparing the forecasting performance of different volatility models in terms
	of the log predictive density, for major European stock market indices.}
  \vspace{-0.15 cm}
\label{table_stat_sig}
{  \begin{tabular}{lcccc}
\midrule
Index & BCT-ARCH   & \; GJR  & \; EGARCH   &  \;  MSGARCH    \\
\midrule
Austria: ATX & \bf  125.7 &  122.8 &  119.0 & 122.1 \\
Belgium: Bel-20 &  \bf 162.4  & 161.0  & 161.5 & 161.0 \\
Denmark: OMX Copenhagen &  \bf 28.80 &  18.96 & 21.22  & 21.02 \\
Europe Dow & \bf  128.5 &  125.1 &  123.8 &  127.3 \\
EURO STOXX 50 & \bf 104.4  & 99.27 & 102.5 & 102.6 \\

Finland: OMX Helsinki &  141.4 & \bf 146.2 & 144.5 & 145.6 \\

 France: CAC 40  & \bf 112.5 & 111.0 & 112.4 & 109.2 \\
 
 Germany: DAX & \bf 111.7  & 106.4  & 107.5 & 106.1 \\

 Greece: Athex Composite &  137.0 & 137.1 & \bf 137.5 & 125.9 \\

 Ireland: ISEQ All-Share &  \bf  123.0 & 118.3 &  117.3 & 119.8 \\

 Italy: FTSE MIB &  \bf 97.28 & 94.58  & 96.06  & 94.48 \\

 Netherlands: AEX  & \bf 111.4 & 107.3 &  110.4 & 110.2 \\

 Norway: OBX Index & 135.9 & 139.3 & 140.3 & \bf 140.9 \\

 Portugal: PSI 20 &  150.1 & 148.7 & 149.2 & \bf  151.6 \\

 Spain: IBEX 35 & \bf 129.0 & 125.5 & 125.3 &  122.4 \\

STOXX Europe 600 & \bf 132.3 &128.4 & 130.9  & 130.3 \\

Sweden: OMX Stockholm 30 &  \bf134.8 & 133.2 &  132.9 &  131.8 \\

Switzerland: SMI &  156.7 &  158.7 & 155.9 & \bf 158.9 \\

 UK: FTSE 100 & \bf 161.9  &159.7  & 159.0 & 159.7 \\
UK: FTSE All-Share & \bf  162.5 & 160.3 &  159.8 & 156.1 \\
 US: S\&P~500 &  78.73 & 83.89 & \bf 84.58 & 80.95 \\
 \midrule

 Average Rank & \bf 1.67 & 2.81 &  2.67 & 2.86 \\
    \midrule
  \end{tabular}
}
\vspace*{-0.15 cm}
\end{table}

\newpage

Specifically, apart from the S\&P~500 index we repeat the above experiment for 20 major European stock market indices (including the FTSE~100, CAC~40 and DAX as before); again, a complete set of the details for each dataset/index is given in Section~\ref{list_of_data} of the Supplemental Material. In this section, we exclude from our comparisons the simple ARCH, GARCH and SV models, since they are more restricted model classes that do not account for volatility asymmetries, and as a result they were all found to perform consistently worse than the other methods in the previous section.
{\color{black}
The log predictive density results for these 21 major stock market indices are presented in Table~\ref{table_stat_sig}.
}

The results of Table~\ref{table_stat_sig} clearly show that the BCT-ARCH model outperforms the alternatives in volatility forecasting: It achieves the best performance in 15 out of 21 datasets, and it also achieves the best average ranking overall, which is 1.67 compared to 2.67 for the second-best EGARCH model. Further, we examine
the statistical significance of these findings by implementing post-hoc Nemenyi tests~\cite{hollander2013nonparametric,demvsar2006statistical} using the \texttt{R} package \texttt{tsutils}~\cite{tsutils}; the results are presented  in Figure~\ref{nemenyi}.

\begin{figure}[h!]
    \centering
    \includegraphics[width=0.49\linewidth]{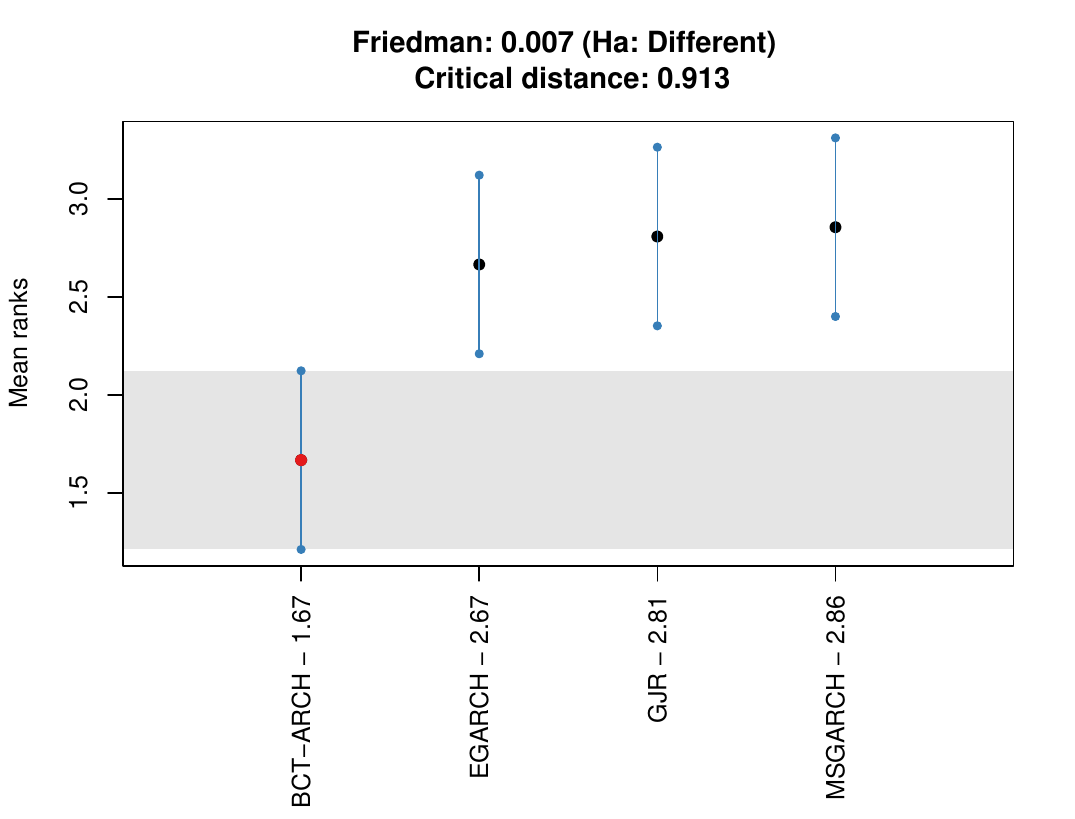}~
    \includegraphics[width=0.49\linewidth]{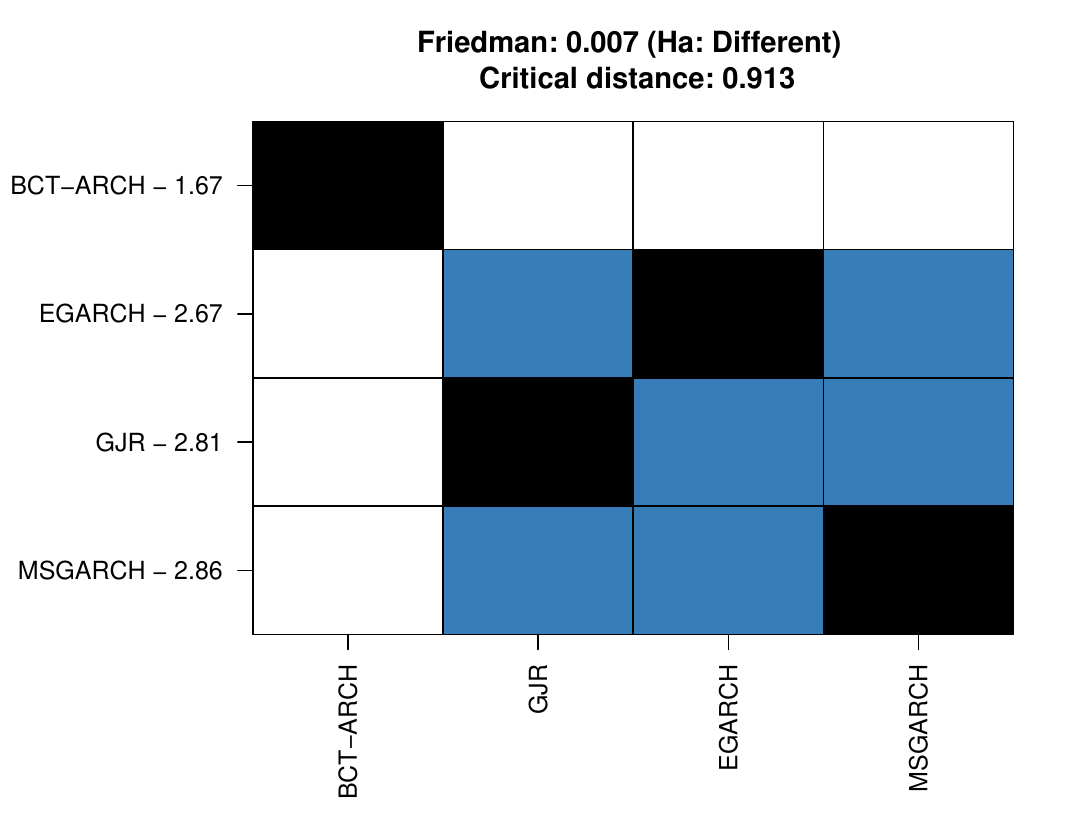}
    \caption{
    Post-hoc Nemenyi tests for volatility forecasting. Left: The MCB plot indicates statistically significant differences in performance when the distance in average ranking in greater than the critical distance of the test. Right: Matrix plot. White coloured cells signify a statistically significant difference in performance between the corresponding (row--column) methods, while blue cells signify the lack of statistically significant differences. }
    \label{nemenyi}
    \vspace*{-0.1 cm}
\end{figure}

The Friedman test~\cite{hollander2013nonparametric} rejects the null 
hypothesis that all methods perform similarly, and the post-hoc Nemenyi 
tests shown in Figure~\ref{nemenyi} further justify our findings. 
At the 90\% confidence level, there is a statistically significant 
difference between the performance of the BCT-ARCH model and that 
of any other method. In particular, the Multiple Comparison with Best (MCB) 
plot~\cite{koning2005m3} indicates that, in each case, the 
corresponding difference in average ranking is greater than the 
critical distance of the test. Moreover, the matrix 
plot~\cite{kourentzes2019cross} suggests that 
all other methods (EGARCH, GJR, MSGARCH) 
have comparable performance (among them),
as there is not sufficient evidence of
statistically significant differences
between any pair of them.

{\color{black}

 \section{Concluding remarks and future work} \label{conclusions}
%
This work develops a general Bayesian framework for building flexible 
and interpretable mixture models for real-valued time series,  
that are based on context trees. 
The proposed framework can be combined with any existing model class 
as a base model, resulting in a much richer class of flexible mixture models, 
for which we provide algorithms that allow for Bayesian inference at a negligible additional
computational cost compared to the original model. The utility of 
the proposed methodology has been illustrated by using AR and ARCH 
models as the base model, in both cases resulting in flexible 
mixture models of high practical interest.
 
The generality of the proposed framework leads to several possible 
directions for future work:
For any given application, BCT-X can be combined with any existing 
state-of-the-art model to provide much greater modelling flexibility
as well as potentially significant improvements in forecasting
performance. As a few examples, candidate base model classes include
ARIMA, EGARCH, general state space models, and MAR models, 
potentially leading to new and powerful ways to model 
feature-diverse time series datasets~\cite{kang2020gratis}. 
Similarly, BCT-X could be 
combined with modern ML models, including GPs, neural networks, 
and Deep Learning methods like DeepAR.
In a different direction, forecasting performance might be improved
by employing combination tools, ranging from simple averaging methods 
to modern ensemble learning techniques like bagging and 
boosting, which might lead to important practical improvements.
Closing, we remark that the entire BCT-X framework 
can be extended to the
multivariate time series setting, something that would greatly broaden the scope of potential applications.
 }

{\small

\bibliographystyle{plain}

\def\cprime{$'$}

}

\newpage

\appendix

\begin{center}
{\huge
{\bf Supplementary Material}\\
}
\end{center}

\section*{Data and code availability}

A reproducibility package which contains the datasets and code used in  this paper is available in the online repository: \url{https://github.com/IoannisPapageorgiou/Replication_BCTX}.

\section{Proofs of Theorems} 
\label{appA}

The key observation in the proofs is that, due to the form of the
estimated probabilities $P_e(s,x)$ in~(\ref{pes}),
it is
possible to factorise the marginal likelihoods $p(x|T)$ as,
\begin{equation*}
p(x|T)
=\int p(x|\theta,T)\pi(\theta|T)  d\theta =\int  \prod _{s \in T} \bigg (  \prod _ {i \in B_s} p(x_i| T, \theta _s , x_{-D+1}^{i-1} )  \ \pi (\theta_s) \  d\theta_s \bigg )  
=\prod_{s\in T} P_e(s,x),
\end{equation*}
where the second equality follows from the general BCT-X likelihood 
in~(\ref{lik}) and the fact that the priors on the 
parameters at the leaves are independent, so that
$\pi(\theta | T) = \prod _ {s  \in T }\pi (\theta _ s)$. 

Then, the proofs of Theorems~\ref{ctwth} and~\ref{bctth}
follow along the same lines as the proofs of the corresponding results for discrete time series in~\cite{BCT-JRSSB:22}.
The only difference is that the estimated probabilities $P_e(s,x)$ 
of~(\ref{pes}) are used in place of their simple discrete versions.
Before giving the proofs of the theorems, we recall a useful property 
for the BCT prior $\pi_D(T)$.
Let $\Lambda=\{\lambda\}$ denote the empty tree consisting 
only of the root node $\lambda$. 
Any tree $T\neq\Lambda$ can be expressed as the~union $T=\cup_j T_j$ of a collection of $m$ subtrees $T_0,T_1,\ldots,T_{m-1}$,
and its prior can be decomposed as~\cite{BCT-JRSSB:22}:

\begin{lemma}
\label{lem:union}
\ If $T\in{\mathcal T}(D)$, $T\neq\Lambda$, is
expressed as the union $T=\cup_jT_j$ of the
subtrees \mbox{$T_j\in{\mathcal T}(D-1)$},
then,
\begin{equation}
\pi_{D}(T) = \alpha^{m-1} \prod_{j=0}^{m-1} \pi_{D-1}(T_j).
\label{eq:prior-ind}
\end{equation}
\end{lemma}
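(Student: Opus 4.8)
The plan is to prove the identity by a direct computation starting from the explicit form~(\ref{prior}) of the prior. Since both sides are products of powers of $\alpha$ and $\beta$, it suffices to establish two additivity relations for the statistics $|T|$ and $L_D(T)$ that appear in the exponents, and then to match exponents on the two sides. I would not even need the specific value $\alpha=(1-\beta)^{1/(m-1)}$; it is enough to treat $\alpha$ and $\beta$ as formal bases.

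First I would set up the decomposition. Because $T\neq\Lambda$ and $T$ is proper, the root $\lambda$ is an internal node and therefore has exactly $m$ children, one for each symbol $j\in\{0,1,\ldots,m-1\}$. For each $j$, let $T_j\in\clT(D-1)$ be the subtree of $T$ hanging below the $j$-th child, so that $T=\cup_j T_j$ in the sense of the statement. Each leaf of $T$ is a leaf of exactly one $T_j$, which gives the first relation
\[
|T|=\sum_{j=0}^{m-1}|T_j|.
\]
For the depth bookkeeping, note that a node at depth $d\geq 1$ in $T$ sits at depth $d-1$ in the unique subtree $T_j$ containing it. Hence a leaf of $T$ at depth $D$ is precisely a leaf at depth $D-1$ of one of the $T_j$, giving the second relation
\[
L_D(T)=\sum_{j=0}^{m-1}L_{D-1}(T_j).
\]

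Then I would substitute these into the right-hand side. Expanding each factor by~(\ref{prior}),
\[
\alpha^{m-1}\prod_{j=0}^{m-1}\pi_{D-1}(T_j)
=\alpha^{m-1}\prod_{j=0}^{m-1}\alpha^{|T_j|-1}\beta^{|T_j|-L_{D-1}(T_j)}.
\]
Collecting the powers of $\alpha$ gives exponent $(m-1)+\sum_j(|T_j|-1)=(m-1)+|T|-m=|T|-1$, using the first relation, and collecting the powers of $\beta$ gives exponent $\sum_j(|T_j|-L_{D-1}(T_j))=|T|-L_D(T)$, using both relations. This is exactly $\alpha^{|T|-1}\beta^{|T|-L_D(T)}=\pi_D(T)$, which completes the proof.

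The computation is entirely routine; the only point requiring care is the depth accounting for $L_D$, where one must keep track of the fact that passing from $T$ to a subtree $T_j$ shifts every depth down by one, so that $\clT(D)$ trees decompose into $\clT(D-1)$ trees and depth-$D$ leaves of $T$ become depth-$(D-1)$ leaves of the $T_j$. I expect this indexing to be the only place where an error could plausibly creep in.
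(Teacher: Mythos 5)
Your proof is correct. Note that the paper itself does not prove this lemma: it is stated in the appendix as a recalled property of the BCT prior, with the proof deferred to the citation~\cite{BCT-JRSSB:22}, so your argument supplies the verification the paper leaves out. Your route — reducing the identity to the two additivity relations $|T|=\sum_{j}|T_j|$ and $L_D(T)=\sum_{j}L_{D-1}(T_j)$ and then matching the exponents of $\alpha$ and $\beta$ — is exactly the standard direct computation, the depth-shift bookkeeping is handled correctly, and your observation that the specific value $\alpha=(1-\beta)^{1/(m-1)}$ is never used (the identity holds with $\alpha$ and $\beta$ as formal bases) is also accurate.
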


\subsection{Proof of Theorem~\ref{ctwth}}

The proof is by induction. We want to show that: 
\begin{equation}
P_{w,\lambda} = p(x) = \sum_{T \in {\mathcal T}(D)}  \pi(T) p(x|T) =
\sum_{T \in {\mathcal T}(D)} 
\pi_{D}(T)\prod_{s \in T} P_{e} (s,x).
\label{eq:target}
\end{equation}
We claim that the following more general 
statement holds: For any node $s$ at depth
$d$ with $0\leq d\leq D$, we have,
\begin{equation}
P_{w,s} = \sum_{U \in {\mathcal T}(D-d)} \pi_{D-d}(U) 
\prod_{u \in U} P_{e} (su,x),
\label{eq:claim}
\end{equation}
where $su$ denotes the concatenation of contexts
$s$ and $u$.

\smallskip

Clearly~(\ref{eq:claim}) implies~(\ref{eq:target}) upon
taking $s=\lambda$ (i.e., with $d=0$). Also,~(\ref{eq:claim}) is 
trivially true for nodes
$s$ at level $D$, since it reduces to the 
fact that $P_{w,s}=P_{e,s}$ for leaves $s$,
by definition.

Suppose~(\ref{eq:claim}) holds for all nodes $s$
at depth $d$ for some fixed $0 < d \leq D$. 
Let $s$ be a node at depth $d-1$;
then, by the inductive hypothesis,
\begin{align*}
P_{w,s} 
=& \beta P_e(s,x) + (1-\beta)\prod_{j=0}^{m-1} P_{w,sj} \\
=& \beta P_e(s,x) + (1-\beta)\prod_{j=0}^{m-1}\left 
	[\sum_{T_j \in {\mathcal T}(D-d)} 
	\pi_{D-d}(T_j) \prod_{t \in T_j}P_e({sjt},x) \right],
\end{align*}
where $sjt$ denotes the concatenation of context
$s$, then symbol $j$, then context $t$, in that order. 
So,
\begin{align*}
P_{w,s} 
=& \beta P_e(s,x) + (1-\beta) \sum_{T_0,T_1,\ldots,T_{m-1}\in{\mathcal T}(D-d)}
	\prod_{j=0}^{m-1} \left[ \pi_{D-d}(T_{j})
	\prod_{t \in T_j}P_{e}({sjt},x) \right ] \\
=& \beta P_e(s,x) + \frac{1-\beta}{\alpha^{m-1}}
	\sum_{T_0,T_1,\ldots,T_{m-1}\in {\mathcal T}(D-d)} 
	\pi_{D-d+1}(\cup_jT_j) 
	\left[ 
	\prod_{j=0}^{m-1} 
	\prod_{t \in T_j} P_e({sjt},x) \right],
\end{align*}
where 
for the last step we have used~(\ref{eq:prior-ind})
from Lemma~\ref{lem:union}.

Concatenating every symbol $j$ with every leaf of the 
corresponding tree $T_j$, we end up with all the leaves
of the larger tree $\cup_jT_j$. Therefore,
\begin{align*}
P_{w,s} 
= \beta P_e(s,x) + \frac{1-\beta}{\alpha^{m-1}}
	\sum_{T_0,T_1,\ldots,T_{m-1}\in {\mathcal T}(D-d)} 
	\pi_{D-d+1}(\cup_jT_j) 
	\prod_{t \in \cup_jT_j} P_e({st},x),
\end{align*}
and since $1-\beta=\alpha^{m-1}$ and $\pi_d(\Lambda)=\beta$ for all $d\geq 1$,
\begin{align*}
P_{w,s} 
=& \pi_{D-d+1}(\Lambda)P_e(s,x)
	+ \sum_{T_0,T_1,\ldots,T_{m-1}\in {\mathcal T}(D-d)} 
	\pi_{D-d+1}(\cup_jT_j) 
	\prod_{t \in \cup_jT_j} P_e({st},x)\\
=& \pi_{D-d+1}(\Lambda)P_e(s,x)
	+ \sum_{T\in {\mathcal T}(D-d+1),T\neq\Lambda} 
	\pi_{D-d+1}(T) 
	\prod_{t \in T} P_e({st},x)\\
=& 
	\sum_{T\in {\mathcal T}(D-d+1)} 
	\pi_{D-d+1}(T) 
	\prod_{t \in T} P_e({st},x).
\end{align*}
This establishes~(\ref{eq:claim}) for all nodes
$s$ at depth $d-1$, completing the inductive
step and the proof of the theorem.
\qed

\subsection{Proof of Theorem~\ref{bctth}}

As the proof follows very much along the same lines as that 
of Theorem~3.2 of~\cite{BCT-JRSSB:22}, 
most of the details are omitted~here.
The proof is again by induction. First, we claim that:
\begin{equation}
P_{m,\lambda} = 
\max_{T \in {\mathcal T}(D)} 
p(x,T) = \max_{T \in {\mathcal T}(D)} 
\pi_{D}(T)\prod_{s \in T} P_{e} (s,x) .
\label{eq:pre-targetM}
\end{equation}
As in the proof of Theorem~\ref{ctwth}, in fact
we claim that the following more general 
statement holds: For any node $s$ at depth
$d$ with $0\leq d\leq D$, we have,
\begin{equation}
P_{m,s} = \max_{U \in {\mathcal T}(D-d)} \pi_{D-d}(U) 
\prod_{u \in U} P_{e} ({su},x),
\label{eq:claimM}
\end{equation}
where $su$ denotes the concatenation of contexts
$s$ and $u$. The proof of this is by an inductive step similar to that 
of Theorem~\ref{ctwth}. 
Taking $s=\lambda$ in~(\ref{eq:claimM}) implies~(\ref{eq:pre-targetM}). 

Then, it is sufficient to show that for the tree $T_1 ^*$ that is produced by the GBCT algorithm, $
P_{m,\lambda} = 
p(x,T^*_1).
$
This is again proved by induction, via an argument similar to the ones in 
the previous two cases. 
Finally, using~(\ref{eq:pre-targetM}) and dividing both sides with $p(x)$ 
gives that $\max_{T \in {\mathcal T}(D)} \pi (T|x) = \pi (T^*_1|x)$
and completes the proof of the theorem.
\qed

\subsection{The $k$-GBCT algorithm}

The $k$-BCT algorithm of~\cite{BCT-JRSSB:22} can be generalised in 
exactly the same manner as the CTW and BCT algorithms were generalised.
Its exact steps are not repeated here as the algorithm description
is quite lengthy.
The resulting algorithm identifies the top-$k$ {\em a posteriori} most 
likely context-tree models. The proof of the theorem claiming this is similar 
to the proof of Theorem~3.3 of~\cite{BCT-JRSSB:22} and thus also omitted. 
Again, the only important difference, both in the algorithm description and 
in the proof, is that the estimated probabilities $P_e(s,x)$ are used 
in place of their simple discrete version $P_e(a_s)$.

\subsection{Proof of Theorem~\ref{branchth}}

The proof follows along the same lines as that of 
Proposition 3.1 of~\cite{papag-K-pre:23}, again with the only difference 
that the new version of $P_e(s,x)$ needs to be used in place of their 
discrete version. Hence, most of the details are again omitted here. 

Note that every context-tree model $T  \in \mathcal {T} (D)$ can
be viewed as a collection of a number,~$\ell$, say, 
of $m$-branches, since every node in $T$ has either
zero or $m$ children. The proof is by induction on $\ell$. 
The result follows 
immediately from Theorem~\ref{ctwth} for $\ell = 0$,
since the only tree with no $m$-branches 
is $T = \{ \lambda \}$. For the inductive step, we assume the result 
holds for all trees that have $\ell$ $m$-branches, and suppose
that $T'\in \mathcal {T} (D)$ contains $(\ell + 1)$ $m$-branches 
and is obtained from some $T  \in \mathcal {T} (D)$ by adding a single
$m$-branch to one of its leaves.

\section{Proofs of Lemmas} \label{appB}

The proofs of these lemmas are mostly based on explicit computations. 
Recall that, for each context~$s$, the set~$B_s$ consists of those indices $i\in\{1,2,\ldots ,n\}$ such that the context of $x_i$   is~$s$. The important step in the following is the factorisation of the likelihood using the sets $B_s$.
In order to prove the lemmas for the AR model with parameters $\theta_s = (\boldsymbol \phi _ s , \sigma _ s ^2)$, we first consider an intermediate step 
in which the noise variance is assumed to be known and equal to~$\sigma ^2$. 

\subsection{Known noise variance}

Here, to any leaf $s$ of the context-tree model $T$, we associate an AR model with known variance $\sigma ^2$, so that,
\begin{equation}\label{ar_2}
x_n = \phi _ {s,1} x_{n-1} + \dots + \phi _ {s,p} x_{n-p} + e_n = {\boldsymbol \phi _ s} ^{\text{T}} \ \mathbf{ \widetilde{ x} } _{n-1} + e_n , \quad  e_n \sim \mathcal N (0, \sigma ^2 ).
\end{equation}
In this setting, the parameters of the model are only the AR coefficients $\theta_s \hspace*{-0.08 cm}= \hspace*{-0.03 cm} \boldsymbol \phi _ s$.\hspace*{-0.01 cm} For these, we use a Gaussian~prior, 
\begin{equation}
\theta _s\sim \mathcal N ( \mu _o , \Sigma _ o)  ,
\end{equation}
where $\mu _o , \Sigma _ o$ are  hyperparameters. Here,
the following expression can be derived
for the estimated probabilities~$P_e(s,x)$.

\begin{lemma}
\label{lem:known}
The estimated probabilities  $P_e(s,x)$ for the known-variance 
case are given by,
\begin{equation} \label{ar_pes_known}
P_e(s,x) = \frac{1}{(2 \pi \sigma ^2)^{|B_s| /2 }} \ \frac{1}{\sqrt{\text {det}( I + \Sigma _o S_3 / \sigma ^2})}  \  \exp{\bigg \{ - \frac {E_s} {2 \sigma ^2}\bigg \} },
\end{equation}
where $I$ is the identity matrix and $E_s$ is given by:
\begin{equation}\label{ar_pes_known2}
E_s = s_1 + \sigma ^2   \mu _o ^ {\text {T}} \Sigma _ o ^{-1} \mu _o  - ( \mathbf s_2  +  \sigma ^2  \Sigma _ o ^{-1} \mu _o )^ {\text {T}} (S_3 + \sigma ^ 2\Sigma _ o ^{-1} ) ^ {-1}  ( \mathbf s_2  +  \sigma ^2  \Sigma _ o ^{-1} \mu _o )   .
\end{equation}
\end{lemma}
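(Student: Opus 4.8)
The plan is to treat this as a standard conjugate Gaussian computation and evaluate the integral in~(\ref{pes}) directly. First I would write out the integrand explicitly. Since each observation $x_i$ with context $s$ is Gaussian with mean $\boldsymbol\phi_s^{\text{T}}\mathbf{\widetilde x}_{i-1}$ and known variance $\sigma^2$, the product $\prod_{i\in B_s}p(x_i\mid\theta_s)$ equals $(2\pi\sigma^2)^{-|B_s|/2}\exp\{-\tfrac{1}{2\sigma^2}\sum_{i\in B_s}(x_i-\boldsymbol\phi_s^{\text{T}}\mathbf{\widetilde x}_{i-1})^2\}$. Expanding the square and using the definitions of $s_1,\mathbf s_2,S_3$ turns the sum in the exponent into $s_1-2\boldsymbol\phi_s^{\text{T}}\mathbf s_2+\boldsymbol\phi_s^{\text{T}}S_3\boldsymbol\phi_s$, a quadratic in $\boldsymbol\phi_s$. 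Multiplying by the Gaussian prior $\pi(\theta_s)=(2\pi)^{-p/2}\det(\Sigma_o)^{-1/2}\exp\{-\tfrac12(\boldsymbol\phi_s-\mu_o)^{\text{T}}\Sigma_o^{-1}(\boldsymbol\phi_s-\mu_o)\}$ produces an integrand whose exponent is $-\tfrac12$ times a single quadratic form in $\boldsymbol\phi_s$.

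The second step is to complete the square in $\boldsymbol\phi_s$ and integrate. Collecting terms, the quadratic coefficient is $A:=\sigma^{-2}(S_3+\sigma^2\Sigma_o^{-1})$ and the linear coefficient is $b:=\sigma^{-2}(\mathbf s_2+\sigma^2\Sigma_o^{-1}\mu_o)$, while the constant is $\sigma^{-2}s_1+\mu_o^{\text{T}}\Sigma_o^{-1}\mu_o$. Completing the square writes the exponent as $-\tfrac12\big[(\boldsymbol\phi_s-A^{-1}b)^{\text{T}}A(\boldsymbol\phi_s-A^{-1}b)+(\text{const}-b^{\text{T}}A^{-1}b)\big]$, so the Gaussian integral over $\boldsymbol\phi_s\in\mathbb R^p$ contributes a factor $(2\pi)^{p/2}\det(A)^{-1/2}$ and leaves the exponential of the leftover constant. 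Defining $E_s$ to be $\sigma^2$ times that leftover constant and substituting $A^{-1}=\sigma^2(S_3+\sigma^2\Sigma_o^{-1})^{-1}$ and $b$ into $\text{const}-b^{\text{T}}A^{-1}b$ reproduces exactly the expression~(\ref{ar_pes_known2}) for $E_s$, with the exponential becoming $\exp\{-E_s/(2\sigma^2)\}$.

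Finally I would simplify the determinant prefactor. The three normalising constants multiply to $(2\pi\sigma^2)^{-|B_s|/2}\cdot(2\pi)^{-p/2}\det(\Sigma_o)^{-1/2}\cdot(2\pi)^{p/2}\det(A)^{-1/2}$; the powers of $2\pi$ cancel, leaving the determinant factor $[\det(A)\det(\Sigma_o)]^{-1/2}$. The key identity is $\det(A)\det(\Sigma_o)=\det\big(\sigma^{-2}(S_3+\sigma^2\Sigma_o^{-1})\Sigma_o\big)=\det(I+S_3\Sigma_o/\sigma^2)=\det(I+\Sigma_o S_3/\sigma^2)$, where the last step uses $\det(I+MN)=\det(I+NM)$ together with $\det(MN)=\det(M)\det(N)$. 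This gives precisely the claimed prefactor $(2\pi\sigma^2)^{-|B_s|/2}\det(I+\Sigma_o S_3/\sigma^2)^{-1/2}$ and completes the proof.

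The computation is entirely routine; the only points requiring care are the bookkeeping of the $\sigma^2$ factors when completing the square with matrix-valued quantities, and the determinant manipulation in the last step. I expect the determinant identity to be the one place where a stray power of $\sigma^2$ could slip in, so that is where I would verify the algebra most carefully, while everything else is mechanical Gaussian integration.
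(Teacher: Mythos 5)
Your proof is correct, and at its core it is the same conjugate-Gaussian computation as the paper's, just organized differently. The paper first rewrites the likelihood itself as a scaled Gaussian density in $\boldsymbol\phi_s$ (with mean $S_3^{-1}\mathbf{s}_2$ and covariance $\sigma^2 S_3^{-1}$), then applies the product-of-two-Gaussians formula to get $P_e(s,x)=\rho_s Z_s$, and finally rearranges that product ``after some algebra'' into the stated form; you instead multiply likelihood and prior at the level of exponents and complete the square once, with quadratic coefficient $A=\sigma^{-2}(S_3+\sigma^2\Sigma_o^{-1})$ and linear term $b=\sigma^{-2}(\mathbf{s}_2+\sigma^2\Sigma_o^{-1}\mu_o)$. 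Your bookkeeping checks out: $\sigma^2\,(c-b^{\text{T}}A^{-1}b)$ reproduces $E_s$ exactly, and $\det(A)\det(\Sigma_o)=\det(I+S_3\Sigma_o/\sigma^2)=\det(I+\Sigma_oS_3/\sigma^2)$ gives the prefactor. One concrete benefit of your organization: the paper's intermediate quantities $\rho_s$ and $Z_s$ involve $S_3^{-1}$, so its derivation implicitly assumes $S_3$ is nonsingular, which can fail (e.g.\ at a node with fewer than $p$ observations); your $A$ is always positive definite since $\Sigma_o^{-1}\succ 0$ and $S_3\succeq 0$, so your argument covers that case directly. Conversely, the paper's route reuses standard textbook identities and defers the determinant manipulation, which you make explicit via $\det(I+MN)=\det(I+NM)$ --- correctly identified as the one step where a stray power of $\sigma^2$ could hide.
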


\begin{proof}
For the AR model of~(\ref{ar_2}),
$$ p(x_i|T , \theta _s , x_{-D+1}^{i-1}) = \frac{1}{\sqrt{2 \pi \sigma ^2}} \  \exp \bigg \{ -\frac {1}{2 \sigma ^2} (x_i - {\theta _ s} ^{\text{T}} \mathbf{ \widetilde{ x} } _{i-1} ) ^2 \bigg \},$$
so that,
$$\prod _{i \in B_s}  p(x_i|T , \theta _s , x_{-D+1}^{i-1}) = \frac{1}{(\sqrt{2 \pi \sigma ^2})^{|B_s|}} \  \exp \bigg \{ -\frac {1}{2 \sigma ^2} \sum _{i \in B_s}(x_i -  {\theta _ s} ^{\text{T}} \mathbf{ \widetilde{ x} } _{i-1} ) ^2 \bigg \}.
$$
Expanding the sum in the exponent gives,
\begin{eqnarray*}
\sum _{i \in B_s}(x_i - {\theta _ s} ^{\text{T}} \mathbf{ \widetilde{ x} } _{i-1} ) ^2 
&=&
    \sum _{i \in B_s} x_i ^2 - 2  \theta_s ^{\text{T}} \sum _{i \in B_s} x_i \mathbf{ \widetilde{ x} } _{i-1} +  \theta_s ^{\text{T}} \sum _ {i \in B_s } \mathbf{ \widetilde{ x} } _{i-1}  \mathbf{ \widetilde{ x} } _{i-1} ^{\text{T}}  \theta_s\\
&=&
    s_1 - 2  \theta_s ^{\text{T}} \mathbf s_2 +  \theta_s ^{\text{T}} S_3  \theta_s, 
\end{eqnarray*}
from which we obtain that,
\begin{align*}
\prod _{i \in B_s}  p(x_i|T , \theta _s , x_{-D+1}^{i-1}) =& \frac{1}{(\sqrt{2 \pi \sigma ^2})^{|B_s|}} \  \exp \bigg \{ -\frac {1}{2 \sigma ^2} (s_1 - 2  \theta_s ^{\text{T}} \mathbf s_2 +  \theta_s ^{\text{T}} S_3  \theta_s  )  \bigg \}  \\
 =&  (\sqrt {2 \pi }) ^ p  \rho_s \ \mathcal N ( \theta_s ;\boldsymbol \mu , S )  ,
\end{align*}
by completing the square, where  $\boldsymbol \mu = S_3^{-1} \mathbf s_2$,  $ S = \sigma ^2 S_3 ^{-1}$, and,
\begin{equation}
\rho _s =\sqrt{ \frac{\text{det}(\sigma^2  S_3 ^{-1})}{(2 \pi \sigma ^2)^{|B_s|}}} \  \exp \bigg \{  -\frac {1}{2 \sigma ^2} (s_1 - \mathbf s_2 ^{\text{T}} S_3 ^{-1} \mathbf s_2) \bigg \} \ .
\end{equation}
So, multiplying with the prior:
\begin{equation*}
\prod _{i \in B_s}  p(x_i|T , \theta _s , x_{-D+1}^{i-1}) \pi ( \theta _ s)= (\sqrt {2 \pi }) ^ p  \rho_s \ \mathcal N ( \theta_s ;\boldsymbol \mu , S ) \ \mathcal N ( \theta _ s;  \mu _o , \Sigma _ o) =  \rho_s Z_s \ \mathcal N ( \theta_s ;\mathbf m , \Sigma ) ,
\end{equation*}
where $ \Sigma^{-1} = \Sigma _ o ^{-1} +  S ^{-1}, \ m = \Sigma \ (\Sigma _ o ^{-1} \mu_o + S^{-1} \boldsymbol \mu)$, and, 
\begin{equation}
Z _s =   \frac{1}{\sqrt{{\text {det} (\Sigma_o} +\sigma^2  S_3 ^{-1}) }} \  \exp{\bigg \{ -   \frac {1}{2} (\mu _o - S_3 ^{-1} \mathbf s_2)^{\text{T}} (\Sigma_o +\sigma^2  S_3 ^{-1}) ^{-1} ( \mu _o - S_3 ^{-1} \mathbf s_2)\bigg \} }  .
\end{equation}
Therefore,
\begin{equation}
\prod _{i \in B_s}  p(x_i|T , \theta _s , x_{-D+1}^{i-1}) \pi ( \theta _ s)= \rho_s Z_s \ \mathcal N ( \theta_s ;\mathbf m , \Sigma ), \label{13} 
\end{equation}
and hence,
$$
P_e(s,x) =  \int \prod _{i \in B_s}  p(x_i|T , \theta _s , x_{-D+1}^{i-1}) \ \pi (\theta _ s) \ d\theta_s \ = \rho_s Z_s .$$
Using standard matrix inversion properties, after some algebra the product $\rho_s Z_s $ can be rearranged to give exactly the required expression 
in~(\ref{ar_pes_known}).
\end{proof}

\subsection{Proof of Lemma~\ref{lem:PeAR}}

Now, we move back to the original case, as described in the main
text, where the noise variance is considered to be a parameter of the AR model,  so that $\theta_s = (\boldsymbol \phi _ s , \sigma _ s ^2)$. Here, the joint prior on the parameters is \mbox{$\pi(\theta _s) =\pi (\boldsymbol \phi _s | \sigma _s ^2) \pi (\sigma _ s ^2)$}, where,
\begin{align} \label{ar_pr_ap}
&\sigma _s ^2\sim \mbox{Inv-Gamma}(\tau , \lambda)    ,\\
&\boldsymbol \phi _s | \sigma _s ^2 \sim \mathcal N (\mu _o , \sigma_s ^2 \Sigma _o)  , \label{ar_pr2_ap}
\end{align}
and where $(\tau, \lambda,  \mu _o ,  \Sigma _o  )$ are the prior hyperparameters.
For the estimated probabilities $P_e(s,x)$, we just need to compute the integral:
\begin{align}
P_e(s,x) =&  \int \prod _{i \in B_s}  p(x_i|T , \theta _s , x_{-D+1}^{i-1}) \ \pi (\theta _ s ) \ d\theta_s  \\
=& \int \pi (\sigma _ s ^2) \left ( \int  \prod _{i \in B_s}  p(x_i|T , \boldsymbol \phi _ s , \sigma _ s ^2 , x_{-D+1}^{i-1}) \ \pi (\boldsymbol \phi _ s| \sigma _ s ^2) \ d\boldsymbol \phi _ s\right ) d\sigma_s ^2 .
\end{align}
The inner integral has the form of the estimated probabilities $P_e(s,x)$ from the previous section, where the noise variance was fixed. The only difference is that the prior $\pi (\boldsymbol \phi _ s | \sigma _ s ^2)$ of~(\ref{ar_pr2_ap}) now has covariance matrix~$\sigma _ s ^2 \Sigma _o $ instead of $\Sigma _o$. So, using~(\ref{ar_pes_known})-(\ref{ar_pes_known2}) with $\Sigma _o$ replaced by $\sigma _ s ^2 \Sigma _o $, yields, 
\begin{equation*}
P_e(s,x) =\int \pi (\sigma _ s ^2) \bigg \{ C_s ^ {-1}\bigg (\frac{1}{\sigma _ s ^2}  \bigg ) ^ {{|B_s|}/{2}} \exp \bigg ( - \frac{D_s}{2 \sigma _ s ^2} \bigg )  \bigg \} d\sigma_s ^2, 
\end{equation*}
with $C_s$ and $D_s$ as in Lemma~\ref{lem:PeAR}. 
And using the inverse-gamma prior $\pi (\sigma _ s ^2)$ of~(\ref{ar_pr_ap}), 
\begin{equation}\label{19}
P_e(s,x)= \ C_s ^ {-1} \  \frac {\lambda ^ {\tau}}{\Gamma (\tau)} \  \int  \bigg (\frac{1}{\sigma _ s ^2}  \bigg ) ^ {\tau ' +1 }  \exp \bigg ( - \frac{\lambda '}{ \sigma _ s ^2} \bigg )  d\sigma_s ^2 , 
\end{equation}
with $\tau ' = \tau + \frac{|B_s|}{2} $ and $\lambda ' = \lambda + \frac{D_s}{2}$.

\smallskip

The integral in~(\ref{19}) has the form of an inverse-gamma density with parameters $\tau ' $ and $\lambda '$, whose closed-form solution is, 
\[
P_e(s,x) =  C_s ^ {-1} \  \frac {\lambda ^ {\tau}}{\Gamma (\tau)} \ \frac {\Gamma (\tau ' )} {\left ( \lambda ' \right )^ {\tau '}} \  , 
\] 
which, as required, completes the proof the lemma. \qed

\subsection{Proof of Lemma~\ref{lem:piAR}}

In order to derive the required expressions for the posterior distributions of $\boldsymbol \phi _ s $ and $\sigma _ s ^2$, for a leaf~$s$ of model $T$, 
first consider the joint posterior distribution $\pi (\theta _ s | T, x) = \pi (\boldsymbol \phi _ s , \sigma _ s ^2 | T, x)$, given by,
\begin{align*}
\pi (\theta _ s | T, x) \propto  p(x| T, \theta _ s )   \pi (\theta_ s )   =    \prod _{i=1} ^ n  p(x_i|T , \theta _s ,  x_{-D+1}^{i-1})       \pi (\theta_ s ) 
 \propto \prod _{i \in B_s}   p(x_i|T , \theta _s ,  x_{-D+1}^{i-1})       \pi (\theta_ s ) , 
\end{align*} 
where we used the fact that, in the product, only the terms involving indices $i\in B_s$ are functions of~$\theta _s$. So,
\begin{align*}
\pi (\boldsymbol \phi _ s , \sigma _ s ^2 | T, x)  &\propto \left ( \  \prod _{i \in B_s}   p(x_i|T ,\boldsymbol \phi _ s , \sigma _ s ^2 , x_{-D+1}^{i-1})      \  \pi (\boldsymbol \phi _ s | \sigma _ s ^2)  \right ) \pi (\sigma _ s ^2 )  .
\end{align*}
Here, the first two terms can be computed from~(\ref{13}) of the previous section, where the noise variance was known. Again, the only difference is that we have to replace $\Sigma _o $ with $\sigma _ s ^2 \Sigma _o $ because of the prior $ \pi (\boldsymbol \phi _ s | \sigma _ s ^2)$ defined in~(\ref{ar_pr2_ap}). After some algebra, this gives, 
\begin{align*}
\pi (\boldsymbol \phi _ s , \sigma _ s ^2 | T, x)  &\propto  \bigg (\frac{1}{\sigma _ s ^2}  \bigg ) ^ {{|B_s|}/{2}} \exp \bigg ( - \frac{D_s}{2 \sigma _ s ^2} \bigg ) \  \mathcal N ( \boldsymbol \phi _ s ;\mathbf m_s , \Sigma _s )   \ \pi (\sigma _ s ^2 ) \ ,
\end{align*}
with $\mathbf m _s $  defined as in Lemma~\ref{lem:piAR}, 
and $\Sigma _ s = \sigma _ s  ^2 (S_3 + \Sigma _ o ^{-1}) ^{-1} $. 
 Substituting the prior $\pi (\sigma _ s ^2)$ in the last expression gives,
\begin {equation}\label{jointpost}
\pi (\boldsymbol \phi _ s , \sigma _ s ^2 | T, x)  \propto  \bigg (\frac{1}{\sigma _ s ^2}  \bigg ) ^ {\tau + 1 + {|B_s|}/{2}} \exp \bigg ( - \frac{\lambda + D_s/2}{\sigma _ s ^2} \bigg ) \  \mathcal N ( \boldsymbol \phi _ s ;\mathbf m_s , \Sigma _s )   . 
\end{equation}
From~(\ref{jointpost}), it is easy to integrate out $\boldsymbol \phi _ s $ 
and get the posterior density of $\sigma _s ^2$,
\[
\pi (\sigma _ s ^2 | T , x) = \int \pi (\boldsymbol \phi _ s , \sigma _ s ^2 | T, x) \ d \boldsymbol \phi _ s \propto  \bigg (\frac{1}{\sigma _ s ^2}  \bigg ) ^ {\tau + 1 + {|B_s|}/{2}} \exp \bigg ( - \frac{\lambda + D_s/2}{\sigma _ s ^2} \bigg ),
\]
which is of the form of an inverse-gamma distribution with parameters  $\tau ' = \tau + \frac{|B_s|}{2} $ and $\lambda ' = \lambda + \frac{D_s}{2}$, proving the first part of the lemma.

\smallskip

However, as $\Sigma _ s$ is a function of $\sigma _ s ^ 2$, integrating out $\sigma_s^2$ requires more algebra.
In specific, we have that,
\begin{align*}
\mathcal N ( \boldsymbol \phi _ s ;\mathbf m_s , \Sigma _s ) &\propto \frac {1}{\sqrt {\text {det} (\Sigma _s)}} \ \exp \bigg \{  -\frac {1}{2} (\boldsymbol \phi _ s -\mathbf m _ s) ^ {\text {T}} \Sigma _ s^ {-1} (\boldsymbol \phi _ s-\mathbf m _ s)  \bigg \}  \\
& \propto \bigg ( \frac {1} {\sigma _ s ^2} \bigg ) ^ {p/2}  \exp \bigg \{  -\frac {1}{2\sigma _ s ^2} (\boldsymbol \phi _ s-\mathbf m _ s) ^ {\text {T}} (S_3 + \Sigma _o ^{-1})  (\boldsymbol \phi _ s -\mathbf m _ s)  \bigg \} ,
 \end{align*}
and substituting this in~(\ref{jointpost}) gives
that $\pi (\boldsymbol \phi _ s  , \sigma _ s ^2 | T, x)$ is proportional to,
\begin{align*}
\bigg (  \frac {1} {\sigma _ s ^2}\bigg ) ^ {\tau +1 + \frac {|B_s|+ p} {2}} \hspace{-0.1 cm} \exp  \bigg \{  -\frac {1}{2\sigma _ s ^2}\bigg ( 2 \lambda + D_s + (\boldsymbol \phi _ s  -\mathbf m _ s) ^ {\text {T}} (S_3 + \Sigma _o ^{-1})  (\boldsymbol \phi _ s  -\mathbf m _ s) \bigg ) \bigg \} ,
 \end{align*}
which, as a function of $\sigma_s ^2$, has the form of an inverse-gamma density, allowing us to integrate out~$\sigma_s ^2$. 
Denoting \mbox{$L =   2 \lambda + D_s + (\boldsymbol \phi _ s-\mathbf m _ s) ^ {\text {T}} (S_3 + \Sigma _o ^{-1})  (\boldsymbol \phi _ s -\mathbf m _ s)$}, and $ \widetilde \tau = \tau +\frac {|B_s|+ p} {2}$, 
\begin{align*}
&\pi (\boldsymbol \phi _ s | T , x) = \int \pi (\boldsymbol \phi _ s  , \sigma _ s ^2 | T, x) \ d  \sigma _ s ^2 \propto \int  \bigg (  \frac {1} {\sigma _ s ^2}\bigg ) ^ {\widetilde \tau +1 }  \exp \bigg ( - \frac{L } {2 \sigma _ s ^2}\bigg ) \ d \sigma _ s ^2 = \frac {\Gamma (\widetilde \tau)} {(L/2 ) ^ {\widetilde \tau}} \ .
 \end{align*}
So, as a function of $\boldsymbol \phi _ s $, the posterior 
distribution $\pi(\boldsymbol \phi _ s |T,x)$ is,
\begin{align*}
\pi (\boldsymbol \phi _ s | T , x) \propto L^ {- \widetilde \tau} &= \bigg (  2 \lambda + D_s + (\boldsymbol \phi _ s -\mathbf m _ s) ^ {\text {T}} (S_3 + \Sigma _o ^{-1})  (\boldsymbol \phi _ s-\mathbf m _ s) \bigg ) ^ {- \frac {2 \tau + |B_s| + p}{2}} \\
& \propto \bigg ( 1 + \frac {1}{2 \tau + |B_s|} \  (\boldsymbol \phi _ s-\mathbf m _ s) ^ {\text {T}}\frac {(S_3 + \Sigma _o ^{-1})(2 \tau + |B_s|)}{(2 \lambda + D_s) }  (\boldsymbol \phi _ s -\mathbf m _ s)  \bigg ) ^ {- \frac {2 \tau + |B_s| + p}{2}} \\
& \propto \bigg ( 1 + \frac {1}{\nu} \  (\boldsymbol \phi _ s -\mathbf m _ s) ^ {\text {T}} P_s ^{-1}  (\boldsymbol \phi _ s -\mathbf m _ s)  \bigg ) ^ {- \frac {\nu+ p}{2}}  ,
\end{align*}
which is exactly in the form of a multivariate $t$-distribution, with $p$ being the dimension of $\boldsymbol \phi _ s$, and with $\nu, \mathbf m _ s$ and  $P _s $ exactly as given in Lemma~\ref{lem:piAR}, completing its proof. \qed

\subsection{Proof of Lemma~\ref{lem:ARCH}}
\label{appB3}

For the BCT-ARCH model, at every leaf $s$,
\begin{align}
    x_n \sim {\mathcal N}  (0, \sigma _ n ^ 2  ) , \; \; \; \; 
    \sigma _ n ^ 2  = \alpha _ {s,0} + \alpha _{s,1} x_{n-1} ^ 2 + \dots + \alpha _{s,p}x_{n-p} ^ 2 = \boldsymbol \alpha _ s ^{\text{T}} \ \mathbf{ { z } } _{n-1} ,
\end{align}
where $\theta _ s = \boldsymbol \alpha _ s = \left (  \alpha _ {s,0} , \alpha _{s,1}  , \dots , \alpha _{s,p} \right )^{\text{T}}$ and $\mathbf{ { z } } _{n-1} = (1, x_{n-1} ^ 2,\dots, x_{n-p} ^ 2 )^{\text{T}}.$ The proof of the lemma 
follows upon considering the log-likelihood of data with context $s$, given by,
\begin{equation} 
     L _ s  (  {\theta} _ s  ) =  \sum _{ i \in B_s} \log p (x_i | x_{-D+1} ^ {i-1} , \theta _ s )  = -  \frac {| B _ s |} {2}  \log (2 \pi)-  \frac {1}{2} \sum _ {i \in B _s} \left ( \log \sigma _ i ^ 2 + \frac {x _ i ^ 2}  {\sigma _i ^ 2} \right ) ,
\end{equation} 
and taking its derivatives with respect to $\theta_s$. As the dependence is implicit through $\sigma _ i ^ 2 = \theta _ s ^ {  \text {T}}  \ \mathbf{ { z } } _{i-1}$, taking the first derivative gives, 
\begin{equation}
    \frac{\partial L_s} {\partial \theta _s} =    \frac {1}{2} \sum _ {i \in B _s}  \frac {1}  {\sigma _i ^ 2} \left (\frac {x _ i ^ 2}  {\sigma _i ^ 2} -1 \right )  \frac{\partial \sigma _i ^ 2} {\partial \theta _s} =   \frac {1}{2} \sum _ {i \in B _s}  \frac {1}  {\sigma _i ^ 2} \left (\frac {x _ i ^ 2}  {\sigma _i ^ 2} -1 \right )  \mathbf{ { z } } _{i-1} ,
\end{equation}
and similarly taking the second derivative and its expectation finally gives,
\begin{equation}
\widehat {I} _ s =  \left \{  - \mathbb {E} \left (  \frac {\partial ^ 2 L _s}{ \partial \theta _ s ^ 2 }\right ) \right  \} =  \frac {1}{2} \sum _ {i \in B _s}  \left ( \frac {1}  {\sigma _i ^ 4 } \right )  \mathbf{ { z } } _{i-1}   \mathbf{ { z } } _{i-1} ^ {  \text {T}},
\end{equation}
completing the proof of the lemma. \qed

\section{Datasets} 
\label{list_of_data}

\subsection{\texttt{sim\_1}} \label{sim1}

This is a simulated dataset consisting of 
$n=600$ samples generated from a BCT-AR model with 
the context-tree model of Figure~\ref{tree}, 
a binary quantiser with threshold $c=0$, 
and AR order $p=2$. The complete specification of this BCT-AR model, 
also given in the main text, is,
\begin{align*} 
x_n \!=\! \left\{
\begin{array}{ll}
  0.7  \ x_{n-1} - 0.3  \ x_{n-2} + e_n,  \quad  e_n \sim \mathcal N (0, 0.15), \; \;  & \mbox{if} \ s = 1\text{:} \ \ \ x_{n-1}>0, \\
-0.3  \ x_{n-1} - 0.2  \ x_{n-2} + e_n,  \quad  e_n \sim \mathcal N (0, 0.10),  \; \;    &\mbox{if} \ s = 01\text{:} \   x_{n-1}\leq 0, \  x_{n-2}>0, \\
 0.5 \ x_{n-1} + e_n,  \quad \quad \quad \quad \quad \quad e_n \sim \mathcal N (0, 0.05),  \; \; & \mbox{if} \ s = 00\text{:} \ x_{n-1}\leq 0, \  x_{n-2}\leq 0.
\end{array}
\right.  
\end{align*}
Here, we also report the {\em evidence} $p(x|c,p)$ for a range of values of $c$ and $p$. Although maximising the evidence is a very common, well-justified Bayesian practice~\cite{rasmussen:00,mackay:92}, we report some values 
as a sanity check, to show that the evidence is indeed maximised at the 
true values of $c=0.0$ and $p=2$, confirming the effectiveness
of our inferential procedure for choosing $c$ and $p$.

\begin{table}[!h] 
  \caption{Using the evidence $p(x|c,p)$ to choose the AR order and the quantiser threshold.}
    \vspace{-0.15 cm}
  \centering
  \begin{tabular}{ccccccccccc}
\toprule
   & \multicolumn{5}{c}{AR order $p$}   &    \multicolumn{5}{c}{Threshold $c$}              \\
    \cmidrule(r){2-6}
\cmidrule(r){7-11}
 &   $1$ & $2$ & $3$ & $4$ & $5$ &  $-$0.1 & $-$0.05& 0& 0.05 & 0.1  \\
    \midrule

    $ - \log _ 2 p(x|c,p)  $    & 533 & \bf {519} & 526 &531 & 535 & 558 & 539 & \bf{519} & 555 & 577  \\
    \bottomrule
  \end{tabular}
\vspace{-0.3cm}
\end{table}

\newpage

\subsection{\texttt{sim\_2}} \label{sim2}

\begin{wrapfigure}{r}{0.38\linewidth}
  \begin{center}
  \vspace*{-1.4 cm}
    \includegraphics[width= 0.78 \linewidth, height= 0.78 \linewidth ]{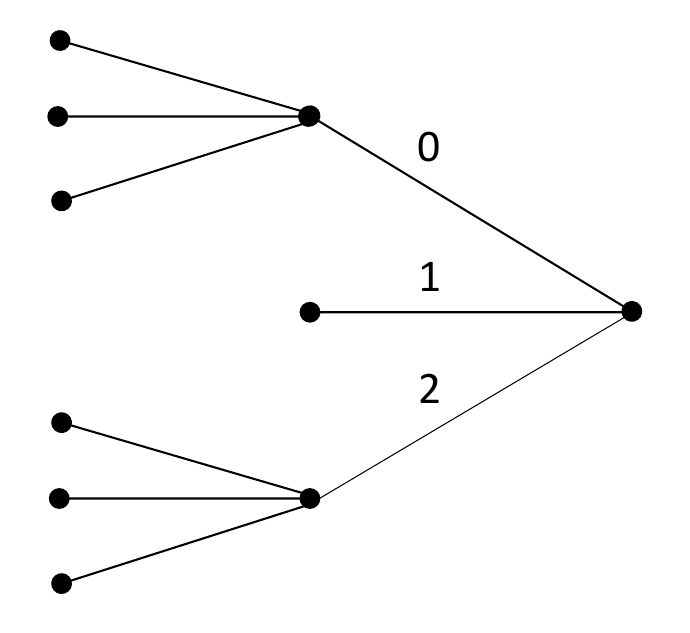}
  \end{center}
\vspace*{-0.65 cm}
\caption{Tree model of \texttt{sim\_2}.}
\label{tree_hmm}
\vspace*{-1.4 cm}
\end{wrapfigure}

This simulated dataset consists of $n=500$ samples that are generated from 
a BCT-AR model with respect to the ternary context-tree 
model in Figure~\ref{tree_hmm}. 
The thresholds of the quantiser are $c_1=-0.5$ and $c_2 = 0.5$, 
and the AR order is $p=1$. The complete specification of this BCT-AR model~is,
\begin{align*} 
x_n \!=\! \left\{
\begin{array}{ll}
0.5 \ e_n, \   & \mbox{if} \ s = 1,  \ 01, \  02, \  20 , \ 21,  \\
0.99 \  x_ {n-1} + 0.005 \ e_n, \   & \mbox{if} \ s = 00,  \ 22 ,\\
\end{array}
\right.  
\qquad
\end{align*}
with $e_n \sim \mathcal N (0, 1)$. 

\subsection{\texttt{sim\_3}} \label{sim3}

The third simulated dataset consists of $n=200$ samples generated 
from a SETAR model of order $p=5$, given by, 
\begin{align*} 
x_n \!=\! 
\left\{
\begin{array}{ll}
- 0.1 + 0.9  \  x _{n-1} +  0.9  \  x _{n-2} - 0.2 \ x _{n-5 }+  \ e_n, \   & \mbox{if}   \ x_ {n-1}  > -0.2, \\
0.2  + 0.1  \  x _{n-1} +  0.9  \ x _{n-5 }+  \ e_n, \   & \mbox{if}  \ x_ {n-1}  \leq -0.2, 
\end{array}
\right.  
\qquad
e_n \sim \mathcal N (0, 1).
\end{align*}

\subsection{\texttt{unemp}} \label{data_unemp}

This dataset consists of the $n=288$ values of the quarterly US
unemployment rate in the period from 1948 to 2019. 
It is publicly available from the US Bureau of Labor Statistics (BLS) 
at \url{https://data.bls.gov/timeseries/LNS14000000?years_option=all_years}.

\subsection{\texttt{gnp}} \label{data_gnp}

This is a time series of length $n=291$, corresponding to the quarterly US
Gross National Product (GNP) values between 1947 and 2019. It is available from the US Bureau of Economic Analysis (BEA), and can be retrieved from the Federal Reserve Bank of St. Louis (FRED) at \url{https://fred.stlouisfed.org/series/GNP}. Following~\cite{potter:95}, we consider the difference 
in the logarithm of the series, $y_n = \log x_n - \log x_{n-1}$.

For this dataset, the MAP BCT context-tree model 
is given in the main text, in Figure~\ref{tree_gnp}. It has depth~3, four states, $\mathcal {S} = \{0, 10, 110, 111 \}$, and posterior probability~42.6\%. The threshold of the binary quantiser selected using the procedure of Section~\ref{hyp} is $c=0.2$, so that $s=1$ if $y_{n-1}\geq 0.2$ 
and $s=0$ if $y_ {n-1} < 0.2 $. The selected AR order is $p=2$. The complete BCT-AR model with its MAP estimated parameters is given by,
\begin{align*} 
y_n \!=\! \left\{
\begin{array}{ll}
  1.16 + 0.71 \ y_{n-1} + 0.19 \ y_ {n-2} + 1.23 \ e_n, & \mbox{if}   \ s=0, \\
0.18 + 0.68 \ y_{n-1} - 0.26 \ y_ {n-2} + 1.19 \ e_n & \mbox{if}   \ s=10, \\
-1.05 + 1.40 \ y_{n-1} + 0.19 \ y_ {n-2} + 1.04 \ e_n & \mbox{if}   \ s=110, \\
 0.59 + 0.28 \ y_{n-1} + 0.31 \ y_ {n-2} + 0.75 \ e_n & \mbox{if}   \ s=111, \\
\end{array}
\right.  
\qquad
e_n \sim \mathcal N (0, 1).
\end{align*}

\subsection{\texttt{ibm}} \label{data_ibm}

This dataset consists of $n=369$ observations
corresponding to the 
daily IBM common stock closing 
price between May~17,~1961 and November 2, 1962. The data are taken 
from~\cite{box:book}, and are also 
available from the \texttt{R} package \texttt{fma}~\cite{fma}. 
The MAP context-tree model fitted to the dataset is shown in 
the main text in Figure~\ref{tree_ibm}. The complete BCT-AR model, 
with its MAP estimated parameters, is given by,
\begin{equation*}
x_n \!=\! \left\{
\begin{array}{ll}
 1.03  \ x_{n-1} - 0.03  \ x_{n-2} +  12.3  \ e_n,  \; \;  & \mbox{if} \  s=0, \\
1.17  \ x_{n-1} - 0.17  \ x_{n-2} + 6.86  \  e_n,   \; \;    &\mbox{if}  \ s=2 , \\
 -0.11  \ x_{n-1} + 1.11  \ x_{n-2} + 10.8  \  e_n,  \; \;  & \mbox{if} \  s=10, \\
1.22  \ x_{n-1} - 0.22  \ x_{n-2} + 5.32  \  e_n,   \; \;    &\mbox{if}  \ s=11 , \\
 0.15 \ x_{n-1} + 0.85  \ x_{n-2}+ 5.17  \  e_n,    \; \; & \mbox{if} \ s=12,
\end{array}
\right.  
\qquad e_n \sim \mathcal N (0, 1).
\end{equation*} 

\subsection{\texttt{ftse}} \label{appc7}

This is a dataset consisting of $n=7821$ 
daily observations
of the most commonly used UK-based stock market indicator, FTSE 100 (Financial Times
Stock Exchange 100 Index), for a time period of thirty years up to 7 April 2023.
It is available from 
Yahoo{\em !} Finance, at \url{https://finance.yahoo.com/quote/^FTSE/}. 

\subsection{\texttt{cac40}} \label{appc8}

This dataset consists of $n=7821$ daily observations of the 
most commonly used French stock~market index, CAC~40 
(Cotation Assist\'{e}e en Continu), for a period of thirty years up to 7 April 2023. It is available from Yahoo{\em !} Finance, at \url{https://finance.yahoo.com/quote/^FCHI/}. We consider the transformed time 
series, $y_n = 10  [\log x_ n -\log x_{n-1}]$.

The MAP context-tree model is given in the main text, in Figure~\ref{fig:bctarch}. It has depth~$3$, four leaves, $\mathcal {S}  = \{0, 10, 110, 111 \}$, 
and posterior probability~63.5\%. The complete BCT-ARCH model including the estimates of the parameters is given by,
\begin{align*}
    {\sigma} _n ^ 2 \!=\! \left\{
\begin{array}{ll}
 0.01 + 0.16 \ y_{n-1} ^ 2 +  0.17  \ y_{n-2} ^ 2 +  0.24  \ y_{n-3} ^ 2 +  0.14  \ y_{n-4} ^ 2 +  0.09  \ y_{n-5} ^ 2, \; \;  & \mbox{if} \  s=0, \\
 0.01 + 0.00 \ y_{n-1} ^ 2 +  0.21  \ y_{n-2} ^ 2 +  0.15  \ y_{n-3} ^ 2 +  0.13  \ y_{n-4} ^ 2 +  0.16  \ y_{n-5} ^ 2, \; \;  & \mbox{if} \  s=10, \\
 0.00 + 0.08 \ y_{n-1} ^ 2 +  0.04  \ y_{n-2} ^ 2 +  0.32  \ y_{n-3} ^ 2 +  0.11  \ y_{n-4} ^ 2 +  0.13  \ y_{n-5} ^ 2, \; \;  & \mbox{if} \  s=110, \\ 
  0.00 + 0.06 \ y_{n-1} ^ 2 +  0.09  \ y_{n-2} ^ 2 +  0.04  \ y_{n-3} ^ 2 +  0.15  \ y_{n-4} ^ 2 +  0.07  \ y_{n-5} ^ 2, \; \;  & \mbox{if} \  s=111.
\end{array}
\right. 
\end{align*}

\subsection{\texttt{dax}} \label{appc9}

This dataset  consists of $n=7821$ daily observations 
of the most commonly used German stock market index, 
DAX (Deutscher Aktienindex), for a period of thirty years up 
to 7 April 2023. It is available from 
Yahoo{\em !} Finance, at \url{https://finance.yahoo.com/quote/^GDAXI/}. 
We consider the transformed time 
series, $y_n = 10  [\log x_ n -\log x_{n-1}]$.
The MAP context-tree model is the same as for the CAC~40 index data, and
its posterior probability is now~48.6\%.
The estimated
ARCH coefficients are also very similar, with the complete BCT-ARCH model given by,
\begin{align*}
    {\sigma} _n ^ 2 \!=\! \left\{
\begin{array}{ll}
 0.01 + 0.14 \ y_{n-1} ^ 2 +  0.19  \ y_{n-2} ^ 2 +  0.22  \ y_{n-3} ^ 2 +  0.19  \ y_{n-4} ^ 2 +  0.12  \ y_{n-5} ^ 2, \; \;  & \mbox{if} \  s=0, \\
 0.01 + 0.00 \ y_{n-1} ^ 2 +  0.24  \ y_{n-2} ^ 2 +  0.19  \ y_{n-3} ^ 2 +  0.13  \ y_{n-4} ^ 2 +  0.16  \ y_{n-5} ^ 2, \; \;  & \mbox{if} \  s=10, \\
 0.01 + 0.02 \ y_{n-1} ^ 2 +  0.05  \ y_{n-2} ^ 2 +  0.28  \ y_{n-3} ^ 2 +  0.06  \ y_{n-4} ^ 2 +  0.10  \ y_{n-5} ^ 2, \; \;  & \mbox{if} \  s=110, \\ 
  0.01 + 0.00 \ y_{n-1} ^ 2 +  0.08  \ y_{n-2} ^ 2 +  0.04  \ y_{n-3} ^ 2 +  0.13  \ y_{n-4} ^ 2 +  0.14  \ y_{n-5} ^ 2, \; \;  & \mbox{if} \  s=111.
\end{array}
\right. 
\end{align*}

\subsection{\texttt{s\&p}} \label{appc10}

Finally, this dataset consists of $n=7821$ daily observations of 
Standard and Poor's~500 Index (S\&P~500), for a period of thirty 
years up to 7 April 2023. It is available from Yahoo{\em !} Finance, 
at \url{https://finance.yahoo.com/quote/^GSPC/}. Again we consider 
the transformed time 
series, $y_n = 10  [\log x_ n -\log x_{n-1}]$.
The MAP context-tree model is given in the main text, in Figure~\ref{fig:bctarch}. It has depth~$3$, five leaves, $\mathcal {S}  = \{00, 01, 10, 110, 111 \}$, 
and posterior probability~91.4\%. The complete BCT-ARCH model
is given by,
\begin{align*}
    {\sigma} _n ^ 2 \!=\! \left\{
\begin{array}{ll}
 0.00 + 0.12 \ y_{n-1} ^ 2 +  0.43  \ y_{n-2} ^ 2 +  0.17  \ y_{n-3} ^ 2 +  0.29  \ y_{n-4} ^ 2 +  0.13  \ y_{n-5} ^ 2, \; \;  & \mbox{if} \  s=00, \\
 0.00 + 0.20 \ y_{n-1} ^ 2 +  0.05  \ y_{n-2} ^ 2 +  0.15  \ y_{n-3} ^ 2 +  0.18  \ y_{n-4} ^ 2 +  0.19  \ y_{n-5} ^ 2, \; \;  & \mbox{if} \  s=01, \\
 0.00 + 0.06 \ y_{n-1} ^ 2 +  0.27  \ y_{n-2} ^ 2 +  0.19  \ y_{n-3} ^ 2 +  0.16  \ y_{n-4} ^ 2 +  0.11  \ y_{n-5} ^ 2, \; \;  & \mbox{if} \  s=10, \\
 0.00 + 0.09 \ y_{n-1} ^ 2 +  0.10  \ y_{n-2} ^ 2 +  0.24  \ y_{n-3} ^ 2 +  0.14  \ y_{n-4} ^ 2 +  0.15  \ y_{n-5} ^ 2, \; \;  & \mbox{if} \  s=110, \\ 
  0.00 + 0.01 \ y_{n-1} ^ 2 +  0.14  \ y_{n-2} ^ 2 +  0.03  \ y_{n-3} ^ 2 +  0.20  \ y_{n-4} ^ 2 +  0.12  \ y_{n-5} ^ 2, \; \;  & \mbox{if} \  s=111.
\end{array}
\right. 
\end{align*}

\subsection{Datasets from Section~\ref{sec:stat_sig}} \label{appc11}

The datasets used in Section~\ref{sec:stat_sig} to judge the statistical significance of the volatility forecasting results consist of a total of $N=21$ major stock market indices: the S\&P~500 which is a major US stock market index, together with 20 major European stock market indices (including FTSE~100, CAC 40 and DAX as before). Each dataset consists of a thirty year period of daily observations, corresponding to a maximum of $n=7821$ observations for each index (or at least, as many of those as were available online, either from Yahoo{\em !} Finance, 
at \url{https://finance.yahoo.com}, or from the Wall Street Journal, at \url{https://www.wsj.com/market-data/stocks/emea}). Similarly with Section~\ref{arch_forecasting}, a total of 130 observations -- corresponding to half a year of trading days -- is used as the test set in each case, with 7 April 2023 being chosen as the final day of the test set for half the datasets (as above), and 6 March 2025 being chosen as the final day for the second half of them, in order to  include more recent dates as well. 

All the training procedures carried out in this section are identical with Section~\ref{arch_forecasting}, as detailed also in Section~\ref{train_details} below. In most cases, the MAP context-tree model is either one of the trees shown in Figure~\ref{fig:bctarch} for the main stock market indices (FTSE 100, CAC 40, DAX and S\&P~500), or a small modification of one of these trees, which again gives a rich picture and a nice interpretation for the underlying volatility asymmetries present in the data. The log predictive density is used for evaluating forecasting performance exactly as in Section~\ref{arch_forecasting}.

\section{BCT-ARCH results on simulated data}
\label{app_sim_arch}

Here we present some more detailed examples 
of the performance of the BCT-ARCH methods on simulated data.
In particular, we examine the accuracy of the
approximations of~(\ref{laplace}) for the estimated
probabilities $P_e(s,x)$. Their accuracy requires two
things: First, the values of the 
iterates $\widehat{\theta}^{ \ (k)}_s$ need to be 
sufficiently close to their limiting values, namely,
the maximum likelihood estimates $\widehat{\theta}_s$.
This is easy to check by trying different initialisations 
and allowing for a large enough number of iterations,~$M$. 
Second, there need to be enough datapoints associated to 
each node $s$ of $T _ {\text{MAX}}$ for
the Laplace approximations of~(\ref{laplace}) to be close 
enough to the integrals~in~(\ref{pes}). This can be ensured 
by checking that the maximum context depth~$D$ is small enough 
compared to the total number of observations,~$n$.
 [It is noted that this
condition would not be required with the MCMC approach of~\cite{chib:95,chib:01}, but at the expense of higher
computational~complexity].

In the examples below we find that,
for datasets of length comparable to those studied in Section~\ref{s:volatility},
the choices $M=10$ and $D=5$ 
satisfy both the above~conditions for the approximations to be good enough.

First we consider data generated by the model considered in the
intentionally `difficult' example of Section~\ref{s:simulated}.
The posterior probability of the true context-tree
model, $\pi(T^*|x)$, is shown 
in Table~\ref{table_sim_arch}
as a function of different choices for $M$
and $D$.

\begin{table}[!h]
  \centering
  \caption{Posterior probability of the true underlying model, $\pi (T^*|x)$, 
	as more data become available.}
  \vspace{-0.2 cm}
\label{table_sim_arch}
  \begin{tabular}{lcccc}
\midrule
 & $n =1000$ & $n=2500$ & $n=5000$ & $=10000$ \\
 \midrule
 $D=3, M =10$ & 0.07 & 0.43  & 0.89 & 0.99 \\
 $D=4, M =10$ & 0.07 & 0.43  & 0.90 & 0.99 \\
 $D=5, M =10$ & 0.00 & 0.42  & 0.90 & 0.99 \\
  $D=5, M =100$ & 0.00 & 0.42  & 0.90 & 0.99 \\
    \midrule
  \end{tabular}
\end{table}

For all values of $M$ and $D$, with $n=1000$ observations the 
MAP context-tree model is the empty tree, corresponding to a single ARCH model. 
The posterior 
probability of the true tree model is still 
small as there are not enough 
data to support a more complex model. For $D=5$, there is a 
small difference in $\pi(T^*|x)$ compared to $D=3,4$, which might 
suggest that there are also not enough data yet for the Laplace 
approximations of~(\ref{laplace}) to be accurate. With $n=2500$ 
observations the MAP context-tree model is now $T^*$, with posterior 
probability $\pi(T^*|x) \approx 0.42$, and with all values of~$D$ 
effectively giving identical results. With $n=5000$ and $n=10000$ observations,
the posterior probability of the true model becomes, respectively,
$0.90$ and~$0.99$, for all values of $D$. Increasing the number of Fisher 
iterations from $M=10$ to $M=100$, and trying different initialisations 
gave identical results. 
We also recall from 
Section~\ref{s:simulated} that the estimates of the parameters
based on $n=5000$ observations with $M=10$ and $D=5$ were very close
to their true underlying values; recall~(\ref{eq:MAPs1}) and~(\ref{eq:MAPs2}).

As a second example, we consider the binary context-tree model with 
depth~$2$ and states $\mathcal {S} = \{1, 01, 00 \}$. The true model (left) and the model fitted from $n=5000$ observations with $M=10$ and $D=5$~(right) are given below, with the posterior probability 
of the true context-tree model being $\pi(T^*|x) \approx 0.98$.
\begin{align*} \label{bctarrch_sim2}
\sigma _n ^ 2 \!=\! \left\{
\begin{array}{ll}
 0.20 + 0.30  \ x_{n-1} ^ 2 +  0.10  \ x_{n-2} ^ 2   \\
 0.10 + 0.20  \ x_{n-1} ^ 2 +  0.10  \ x_{n-2} ^ 2   \\
 0.10 + 0.20  \ x_{n-1} ^ 2   
\end{array}
\right. ,  \; \;
\widehat {\sigma} _n ^ 2 \!=\! \left\{
\begin{array}{ll}
 0.20 + 0.30  \ x_{n-1} ^ 2 +  0.13  \ x_{n-2} ^ 2, \; \;  & \mbox{if} \  s=00, \\
  0.10 + 0.21  \ x_{n-1} ^ 2 +  0.14  \ x_{n-2} ^ 2, \; \;  & \mbox{if} \  s=01, \\
 0.10 + 0.20  \ x_{n-1} ^ 2  +  0.00  \ x_{n-2} ^ 2 , \; \;  & \mbox{if} \  s=1.
\end{array}
\right. 
\end{align*}

Overall, we conclude that the choices
$M=10$ and $D=5$ lead to very effective
inference with the BCT-ARCH model
for time series consisting
of around $5000-10000$ observations,
such as those studied in Section~\ref{s:volatility}.

\section{Training details} \label{train_details}

\subsection{BCT-AR experiments}

Here we specify the training details for all the methods used in the 
forecasting experiments of Section~\ref{experiments}. In all the examples 
the training set consists of the first 50\% of the observations.
All methods are updated at every timestep in the test data.

\medskip

\noindent
{\bf BCT-AR.} The hyperparameters are chosen as
described in Section~\ref{hyp}.
The default value of $\beta= 1-2^{-m+1}$ is taken
for the BCT prior~\cite{BCT-JRSSB:22}, 
the maximum context depth is $D=10$, and the AR order and 
the quantiser thresholds are selected at the end 
of the training set by maximising the evidence.
The MAP context-tree model and the MAP parameter
estimates are used for forecasting, and they 
are updated at every timestep in the test data.

\medskip

For each of the remaining methods, a range of parameter
values is considered, the entire forecasting experiment
is carried out for each candidate set of parameters,
and only the best result is reported in the main text.

\medskip

\noindent
{\bf ARIMA} and {\bf ETS}. The \texttt{R} package \texttt{forecast}~\cite{hyndman:08}
is used, together with the automated functions \texttt{auto.arima} and \texttt{ets} for fitting ARIMA and ETS models, respectively.

\medskip

\noindent
{\bf SETAR.} 
The \texttt{R} package \texttt{TSA}~\cite{tsa} is used
in conjunction with the commonly used conditional least 
squares method of~\cite{chan:93}. 
AR orders between $p=1$ and $p=5$
and delay parameter values between $d=1$ and $d=5$
are considered.

\medskip

\noindent
{\bf MAR.}
The \texttt{R} package \texttt{mix-AR}~\cite{mixar} is used.
Mixture models with $K=2$ and $K=3$ components, with AR orders between $p=1$ and $p=5$ are considered.

\medskip

\noindent
{\bf MSA.}
The \texttt{R} package \texttt{MSwM}~\cite{mswm} is 
used, which uses the EM algorithm for estimation. 
Models with $K=2$ and $K=3$ hidden states and AR orders between $p=1$ and $p=5$ are considered.

\medskip

\noindent
{\bf NNAR.}
The \texttt{R} package \texttt{forecast}~\cite{hyndman:08}
is used with the function \texttt{nnetar}.
AR orders between $p=1$ and $p=5$
are considered.

\medskip

\noindent
{\bf DeepAR} and {\bf N-BEATS.}
The implementations in 
the Python library GluonTS~\cite{alexandrov:20}
are used.
As the computational cost per 
iteration is different for the two methods, 
we use slightly different numbers of epochs and batches-per-epoch for 
each of them, in order to give similar empirical running times.
For DeepAR we use 5 epochs with 50 batches/epoch, and for N-BEATS 
we use 3 epochs with 20 batches/epoch.
AR orders between $p=1$ and $p=5$ are considered for both.

\subsection{BCT-ARCH experiments}

Here we specify the training details for all the methods used in the 
forecasting
experiments of Section~\ref{arch_forecasting}. 
For all four datasets,
the test set consists of the last 130 observations. All
methods are updated at every timestep in the test data.

\medskip

\noindent
{\bf BCT-ARCH. } A binary quantiser with threshold 
$c=0$ in used, and the default value $\beta=0.5$ 
corresponding to $m=2$ is chosen for the BCT prior. 
The maximum 
context depth $D$ and the ARCH order $p$
are both taken equal to~5, and the number of Fisher iterations
is taken to be $M=10$; cf.~Section~\ref{app_sim_arch} above.
The MAP context-tree model identified by the GBCT algorithm and the 
estimates for the ARCH parameters 
obtained from the scoring algorithm
are used for forecasting, and they 
are updated at every timestep in the test data.

\medskip

\noindent
{\bf GARCH}, {\bf GJR}, and {\bf EGARCH.} 
The \texttt{R}~package \texttt{rugarch}~\cite{rugarch} is used, 
together with the automated specifications \texttt{sGARCH}, \texttt{gjrGARCH}, and \texttt{eGARCH}.

\medskip

\noindent
{\bf MSGARCH.}
The \texttt{R}~package \texttt{MSGARCH}~\cite{ardia:19} is used.
Forecasting is performed using $K=2$ and $K=3$ hidden states,
with the best result reported for each dataset.

\medskip

\noindent
{\bf SV.}
The \texttt{R}~package 
\texttt{stochvol}~\cite{HK:21} is used, which implements MCMC 
samplers for Bayesian inference. At every timestep in the 
test set we use 1000 MCMC samples from the desired predictive density, after an initial burn-in of 100 samples. 

\section{Empirical running times} \label{run_times}

As discussed in Sections~\ref{compl}
and~\ref{s:compl2}, an important advantage of the BCT-X
framework is that its associated algorithms 
have low computational complexity and
allow for 
very efficient sequential updates, making it very practical 
for online forecasting applications.  

\subsection{BCT-AR experiments}

In sharp contrast with the computational efficiency of
BCT-AR forecasting, DeepAR and N-BEATS do not allow 
for incremental training, 
so the models are re-trained in each timestep from scratch.
This is computationally very costly as it involves 
gradient optimisation. 
The complexity of the classical statistical approaches 
lie somewhere between that of BCT-AR and the RNN
approaches:  They need to 
be re-trained at every timestep,
but the cost required per timestep is 
lower than that of the RNN methods; see also~\cite{makridakis:18b}.
Finally, ss discussed in Section~\ref{mar_comp}, the MSA model has much 
higher computational requirements compared to the other classical 
statistical methods, as the presence of the hidden state process 
makes inference much harder. 
From Table~\ref{table_emp} it is observed that
the BCT-AR model clearly outperforms 
all the benchmarks in terms of empirical running times. 
All experiments were carried out on a common laptop.  

\begin{table}[!h]
  \centering
  \caption{Empirical running times in the BCT-AR experiments*.}
  \vspace{-0.25 cm}
\label{table_emp}
  \begin{tabular}{lccccccccccc}
\midrule
 & \hspace* { -0.4 cm}  BCT-AR \hspace* { -0.2 cm} & ARIMA  \hspace* { -0.2 cm} & ETS \hspace* { -0.2 cm}  & NNAR \hspace* { -0.2 cm}  & DeepAR \hspace* { -0.2 cm}  & N-BEATS  \hspace* { -0.2 cm}  & MSA \hspace* { -0.2 cm}  & SETAR \hspace* { -0.2 cm}  & MAR \hspace* { -0.23 cm}  \\
 \midrule
\texttt {sim\_1} &  \hspace* { -0.4 cm}  \bf 7.4~s & 68~s &  31~s  & 4.9~min  & 2.4~h & 7.4~h & 45~min & 81~s & 19~min \hspace* { -0.23 cm}  \\
\texttt {sim\_2} &  \hspace* { -0.4 cm}  \bf 8.1~s & 61~s &  23~s  & 3.2~min  & 2.1~h & 6.3~h & 24~min & 98~s & 2.5~min \hspace* { -0.23 cm} \\
\texttt {sim\_3} &  \hspace* { -0.4 cm}  \bf 2.4~s & 28~s &  5.2~s  & 48~s  & 1.0~h & 4.0~h & 12~min & 11~s & 2.7~min \hspace* { -0.23 cm}  \\
    \midrule
  \texttt {unemp} &  \hspace* { -0.4 cm}  \bf 3.1~s & 42~s &  11~s  & 69~s  & 1.0~h & 4.1~h & 16~min & 17~s & 6.4~min \hspace* { -0.23 cm}  \\ 
    \texttt {gnp} &  \hspace* { -0.4 cm}  \bf 2.2~s & 80~s &  10~s  & 91~s  & 1.5~h & 5.2~h & 17~min & 19~s & 6.6~min \hspace* { -0.23 cm}  \\
      \texttt {ibm} &  \hspace* { -0.4 cm}  \bf 4.6~s & 58~s &  16~s  & 32~s  & 2.2~h & 5.3~h & 22~min & 28~s & 7.6~min \hspace* { -0.23 cm}  \\ 
    \midrule
  \end{tabular}
\end{table}

\vspace{-0.25 cm}

\subsection{BCT-ARCH experiments}

Similarly, the BCT-ARCH model is found to have the best
performance among all the alternatives in terms of empirical 
running times. The family of 
ARCH models and their extensions follow, as they need to 
be re-trained at every timestep.
Finally, the MSGARCH and SV models 
have by far the largest computational requirements,
as in these cases the randomness present 
in the hidden state process makes inference much more 
expensive; 
see also Section~\ref{garch_comp}.

\begin{table}[!ht]
  \centering
  \caption{Empirical running times in the BCT-ARCH experiments*.}
  \vspace{-0.25 cm}
\label{table_arcH_run}
  \begin{tabular}{lccccccc}
\midrule
 & BCT-ARCH  &  ARCH   &   GARCH  &  GJR  &  EGARCH   &   \  MSGARCH  &  SV  \\
 \midrule
 \texttt {ftse} & \bf 4.1 s & 5.5 min & 2.5 min & 4.8 min & 5.0 min & 35 min & 1.1 h \\
  \texttt {cac40} & \bf 4.8 s & 6.3 min & 3.3 min & 7.2 min & 5.9 min & 47 min & 1.2 h \\
  \texttt {dax} & \bf 4.5 s & 6.9 min & 3.0 min & 5.9 min & 6.2 min & 39 min & 1.0 h \\
  \texttt {s\&p} & \bf 4.2 s & 5.2 min & 2.6 min & 4.6 min & 4.7 min & 43 min & 1.2 h \\
    \midrule
  \end{tabular}
\end{table}

*The empirical running times of all benchmarks could be reduced if 
the models were not re-trained at every timestep,
but that would incur a degradation of prediction accuracy.

\end{document}